\theoremstyle{definition}
\newtheorem{definition}{Definition}
\theoremstyle{plain}
\newtheorem{theorem}{Theorem}[section]
\newtheorem{lemma}[theorem]{Lemma}
\newtheorem{claim}{Claim}
\theoremstyle{remark}
\newtheorem{remark}{Remark}
\Crefname{claim}{Claim}{Claims}
\newcommand{\sz}{\textup{size}}
\def\fp/{\textup{\textsf{FP}}}
\def\p/{\textup{\textsf{P}}}
\def\np/{\textup{\textsf{NP}}}
\def\conp/{\textup{\textsf{co-NP}}}
\def\fnp/{\textup{\textsf{FNP}}}
\def\tfnp/{\textup{\textsf{TFNP}}}
\def\ptfnp/{\textup{\textsf{PTFNP}}}
\def\ppa/{\textup{\textsf{PPA}}}
\def\ppad/{\textup{\textsf{PPAD}}}
\def\ppads/{\textup{\textsf{PPADS}}}
\def\ppp/{\textup{\textsf{PPP}}}
\def\pwpp/{\textup{\textsf{PWPP}}}
\def\pls/{\textup{\textsf{PLS}}}
\def\cls/{\textup{\textsf{CLS}}}
\def\ppadpls/{\textup{$\textsf{PPAD} \cap \textsf{PLS}$}}
\def\ppapls/{\textup{$\textsf{PPA} \cap \textsf{PLS}$}}
\def\eopl/{\textup{\textsf{EOPL}}}
\def\sopl/{\textup{\textsf{SOPL}}}
\def\ueopl/{\textup{\textsf{UEOPL}}}
\def\fixp/{\textup{\textsf{FIXP}}}
\def\bu/{\textup{\textsf{BU}}}
\def\bbu/{\textup{\textsf{BBU}}}
\def\linearfixp/{\textup{\textsf{Linear-FIXP}}}
\def\pspace/{\textup{\textsf{PSPACE}}}
\def\gcircuit/{\textup{\textsc{Gcircuit}}}
\providecommand{\ceil}[1]{\ensuremath{\left \lceil #1 \right \rceil }}
\def\ritucker{\ensuremath{\textup{\textsc{StrongTucker}}}\xspace}
\def\ndritucker{\ensuremath{\textup{\textsc{$N$D-StrongTucker}}}\xspace}
\def\twodritucker{\ensuremath{\textup{\textsc{$2$D-StrongTucker}}}\xspace}
\def\tucker{\ensuremath{\textup{\textsc{Tucker}}}\xspace}
\def\ndtucker{\ensuremath{\textup{\textsc{$N$D-Tucker}}}\xspace}
\def\twodtucker{\ensuremath{\textup{\textsc{$2$D-Tucker}}}\xspace}
\def\conhalv{\ensuremath{\textup{\textsc{Consensus-Halving}}}\xspace}
\def\necklace{\ensuremath{\textup{\textsc{NecklaceSplitting}}}\xspace}
\def\fsp{\ensuremath{\textup{\textsc{FairSplitPath}}}\xspace}
\def\fsc{\ensuremath{\textup{\textsc{FairSplitCycle}}}\xspace}
\def\straightpizza{\ensuremath{\textup{\textsc{StraightPizzaSharing}}}\xspace}
\def\squarepizza{\ensuremath{\textup{\textsc{SquarePizzaSharing}}}\xspace}
\def\hamsand{\ensuremath{\textup{\textsc{DiscreteHamSandwich}}}\xspace}
\newcommand{\CHinstance}{\ensuremath{\textup{CH}_\eps(\lambda)}\xspace}
\newcommand{\CHinstancenoeps}{\ensuremath{\textup{CH}(\lambda)}\xspace}
\newcommand{\mlambda}{\ensuremath{\widehat{\lambda}}}
\newcommand{\val}{\ensuremath{\textup{\textsf{val}}}}
\newcommand{\wt}{\ensuremath{8}\xspace}
\newcommand{\eps}{\ensuremath{\varepsilon}\xspace}
\newcommand{\reals}{\ensuremath{\mathbb{R}}\xspace}
\newcommand{\naturals}{\ensuremath{\mathbb{N}}\xspace}
\newcommand{\rplus}{\ensuremath{R^+}\xspace}
\newcommand{\rminus}{\ensuremath{R^-}\xspace}
\title{Constant Inapproximability for \textsf{PPA}\thanks{A preliminary version of this paper appeared at STOC '22.}}
\author{
\begin{tabular}{cc}
& \\
\textbf{Argyrios Deligkas} & \textbf{John Fearnley}\\
\small{Royal Holloway, United Kingdom} & \small{University of Liverpool, United Kingdom}\\
\href{mailto:argyrios.deligkas@rhul.ac.uk}{\small{\texttt{argyrios.deligkas@rhul.ac.uk}}} & \href{mailto:john.fearnley@liverpool.ac.uk}{\small{\texttt{john.fearnley@liverpool.ac.uk}}}\\
& \\
\textbf{Alexandros Hollender} & \textbf{Themistoklis Melissourgos}\\
\small{University of Oxford, United Kingdom} & \small{University of Essex, United Kingdom}\\
\href{mailto:alexandros.hollender@cs.ox.ac.uk}{\small{\texttt{alexandros.hollender@cs.ox.ac.uk}}} & \href{mailto:themistoklis.melissourgos@essex.ac.uk}{\small{\texttt{themistoklis.melissourgos@essex.ac.uk}}}\\
& \\
\end{tabular}
}
\date{}
\begin{document}

\maketitle

\begin{abstract}
In the $\varepsilon$-Consensus-Halving problem, we are given $n$ probability measures $v_1, \dots, v_n$ on the interval $R = [0,1]$, and the goal is to partition $R$ into two parts $R^+$ and $R^-$ using at most $n$ cuts, so that $|v_i(R^+) - v_i(R^-)| \leq \varepsilon$ for all $i$. This fundamental fair division problem was the first natural problem shown to be complete for the class \ppa/, and all subsequent \ppa/-completeness results for other natural problems have been obtained by reducing from it.

We show that $\varepsilon$-Consensus-Halving is \ppa/-complete even when the parameter $\varepsilon$ is a constant. In fact, we prove that this holds for any constant $\varepsilon < 1/5$. As a result, we obtain constant inapproximability results for all known natural \ppa/-complete problems, including Necklace-Splitting, the Discrete-Ham-Sandwich problem, two variants of the pizza sharing problem, and for finding fair independent sets in cycles and paths.
\end{abstract}

\newpage

\section{Introduction}

The consensus halving problem \citep{SS03-Consensus} is a fair division problem defined by $n$ agents,
who each have a valuation function over the unit interval $R = [0, 1]$. The goal is to
partition $R$ into two sets $R^+$ and $R^-$  using at most $n$ cuts, such
that all agents agree that $R^+$ and $R^-$ have the same valuation, or in the
$\eps$-approximate version, that all agents agree that $R^+$ and $R^-$ have
valuations that differ by at most $\eps$.

The problem is guaranteed to have a solution and this is usually proved by using the Borsuk-Ulam theorem from topology, or its discrete counterpart, Tucker's lemma \citep{SS03-Consensus}. In fact, very similar versions of this existence result have been proved in the past in different contexts \citep{hobby1965moment,alon1986borsuk,Alon87-necklace}. Since the problem is guaranteed to have a solution and solutions can be verified efficiently, it lies in the complexity class \tfnp/: the class of total \np/ search problems. In particular, this means that the problem cannot be \np/-hard, unless $\np/ = \conp/$ \citep{MegiddoP91-TFNP}, and instead, one has to use subclasses of \tfnp/ to classify its complexity.

The consensus halving problem has risen to prominence as it has played a crucial
role in the development of the complexity class \ppa/, a subclass of \tfnp/ defined by~\citet{Papadimitriou94-TFNP-subclasses}. Indeed, in a breakthrough result, \citet{FRG18-Consensus} proved that the problem is complete for \ppa/. This was the first ``natural'' complete problem for the class and it has been pivotal in proving further such completeness results. For
example, \ppa/-completeness has since been shown for other ``natural'' problems such as
the necklace splitting problem and the discrete ham sandwich
problem~\citep{FRG19-Necklace}, two types of the pizza-sharing
problem~\citep{DeligkasFM22-pizza,schnider2021complexity}, and finding fair independent sets
in cycles and paths~\citep{haviv2020complexity}. We refer to these as natural problems\footnote{We note that some of these problems also have more general ``unnatural'' versions where the inputs, e.g., the valuations, are represented by circuits. The aforementioned completeness results apply to the natural versions without circuits.}
since their definition does not involve any kind of circuit, as opposed to
``unnatural'' problems like \tucker (the problem associated with Tucker's Lemma), which was already known to be
\ppa/-complete~\citep{AisenbergBB20-2D-Tucker}, but whose definition involves a
Boolean circuit. 

Consensus halving has been used in a fundamental way to show \ppa/-completeness
for natural problems, because it bridges the gap between natural and unnatural
\ppa/-complete problems. Specifically, the \ppa/-hardness results for consensus
halving~\citep{FRG18-Consensus,FRG19-Necklace} reduce from \tucker, and
explicitly remove the Boolean circuit by encoding each gate as a consensus
halving agent. To the best of our knowledge, \emph{all} subsequent
hardness results for natural problems have reduced from consensus halving.

\paragraph{\bf Hardness of approximation.}

Prior work has shown that, not only is it \ppa/-complete to find exact consensus
halving solutions for piecewise constant valuation functions, but it is also \ppa/-complete to find approximate solutions.
The initial hardness result of~\citet{FRG18-Consensus} showed that
$\eps$-\conhalv is \ppa/-complete for an exponentially small~\eps (in the size of the input). The
same authors later improved this to obtain a \ppa/-completeness result for
$\eps$-\conhalv with~$\eps$ being polynomially
small~\citep{FRG19-Necklace}.

The hardness of approximation for consensus halving has then directly led to
hardness of approximation for the other natural \ppa/-complete problems, because
all of the \ppa/-hardness reductions for natural problems that have been discovered so far preserve approximate solutions.
So, we have that 
necklace splitting, discrete ham sandwich,
pizza-sharing,
and finding fair independent sets in cycles and paths are all \ppa/-complete to approximate
for a polynomially small $\eps$.  

In this sense, consensus halving plays a crucial role in the hardness of
approximation for natural \ppa/-complete problems, because any improvement in
the hardness result for consensus halving directly leads to an improvement in
the hardness results for \emph{all} of the natural problems that are currently
known to be \ppa/-complete.

The key question left open by previous work is whether 
$\eps$-\conhalv is \ppa/-hard, and thus \ppa/-complete, for a constant $\eps$. While there is no
such result in prior work, 
consensus halving is known to be \ppad/-hard for a very small 
constant~$\eps$~\citep{FRFGZ18-consensus-hardness}. 
This result actually predates all of the \ppa/-hardness results and arises from a
direct reduction to $\eps$-\conhalv from the \gcircuit/ problem, which
is known to be \ppad/-complete for constant
$\eps$~\citep{Rubinstein18-Nash-inapproximability}. Notably, though, the constant
is so small that no prior work has actually given a lower bound on its
magnitude.

Furthermore, even ignoring the minuscule $\eps$, this result is somewhat unsatisfying, since it seems unlikely that
\ppad/-hardness is the correct answer for constant approximation, given that $\ppad/
\subseteq \ppa/$, and \ppa/ appears to capture a strictly larger class of
problems. 
This is doubly so, since
finding a polynomially small approximation is known to be \ppa/-complete, and
thus \ppa/-completeness of finding constant approximations would be the natural, and tight, answer.

\paragraph{\bf Our Contribution.} 

Our main result is as follows.

\begin{theorem}\label{thm:main-result}
$\eps$-\conhalv is \ppa/-complete for all $\eps < 1/5$.
\end{theorem}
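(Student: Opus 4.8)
The plan is to handle membership and hardness separately. Membership of $\eps$-\conhalv in \ppa/ is the routine direction: every exact consensus halving is in particular an $\eps$-consensus halving, so $\eps$-\conhalv reduces to the (total) exact problem, whose \ppa/-membership for piecewise-constant valuations is established by \citet{FRG18-Consensus}, and the same membership argument goes through verbatim for any $\eps \ge 0$. All the work is therefore in showing that $\eps$-\conhalv is \ppa/-\emph{hard} for every constant $\eps < 1/5$. The difficulty is that the existing hardness proofs~\citep{FRG18-Consensus,FRG19-Necklace} reduce from a Boolean Tucker circuit and encode each gate as an agent, so the continuous encoding of a bit must be accurate to within $1/\mathrm{poly}$ --- otherwise errors accumulate along the circuit. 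To get a constant I would not reduce from \conhalv or directly from Boolean \twodtucker{}, but route the reduction through a robust, constant-error continuous problem of Borsuk--Ulam type, namely \epsndborsuk{}.

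\emph{Robust Tucker, then robust Borsuk--Ulam.} Starting from \twodtucker{}, which is \ppa/-complete~\citep{AisenbergBB20-2D-Tucker}, I would first reduce to a strengthened variant \ritucker{} in which a solution is not a single complementary edge but a more robust local configuration --- say a small block of the triangulation that already exhibits the full set of conflicting labels --- so that the certificate survives a constant-size perturbation of the labelling; this should be a local blow-up of the grid and exists purely to make the next step error-tolerant. I would then pass to the continuous setting in the standard way: triangulate the sphere, use the robust labelling to define a succinctly represented antipodally odd piecewise-linear map $f$, and argue that any $x$ with $\|f(x) - f(-x)\|_\infty \le \eps$ must lie near a \ritucker{} certificate \emph{provided} $\eps$ is below an absolute constant --- which is exactly what the robustness of \ritucker{} buys. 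This is the \ppa/-analogue of the generalized-circuit step behind \ppad/-hardness for constant $\eps$~\citep{Rubinstein18-Nash-inapproximability}.

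\emph{Borsuk--Ulam to consensus halving.} Finally I would adapt the reduction of~\citet{FRG19-Necklace}: one agent per wire/gate of the circuit computing $f$, with the real value on a wire encoded as the rescaled imbalance $v_i(R^+) - v_i(R^-)$ of the corresponding agent, and the mass of each agent placed so that the $\eps'$-balance constraints force every gate relation to hold up to $O(\eps')$, hence force the encoded point to be an $\eps$-Borsuk--Ulam point. Two things must be redone relative to~\citep{FRG19-Necklace}: each gate gadget must tolerate a \emph{constant} discrepancy rather than an inverse-polynomial one --- possible because we only need each local gate relation to hold approximately while seeking a globally self-consistent assignment, exactly as for generalized circuits --- and the cut budget must be controlled by dedicated coordinate and ``cut-absorbing'' blocks, so that the $n$ available cuts cannot be redeployed to decouple the agents.

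\emph{Tracking the constant, and the main obstacle.} Chaining the error guarantees through all the gadgets and choosing the internal constants to maximize the threshold of the \emph{weakest} one --- which I expect to be the negation/antipodal gadget, the gadget that carries the \ppa/ twist and whose error amplifies by a small integer factor --- should yield the explicit bound $\eps' < 1/5$, with any smaller constant also admissible. I expect the bulk of the effort to go precisely here: designing consensus-halving gadgets that implement the circuit arithmetic and the antipodal symmetry while absorbing a constant amount of imbalance error, and carrying out the bookkeeping tightly enough to reach $\eps' < 1/5$ rather than the minuscule, unquantified constant of the earlier \ppad/-hardness reduction~\citep{FRFGZ18-consensus-hardness}. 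Making the negation/antipodal gadget robustly complement a bit encoded as a near-extreme imbalance, without eroding the threshold below $1/5$, is the crux.
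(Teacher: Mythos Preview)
Your high-level arc---membership from prior work, hardness via a robustified Tucker variant, then a gadget reduction into \conhalv---matches the paper, but two concrete points are wrong or missing and would block the argument from going through.

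First, you misidentify the bottleneck. You predict that the negation/antipodal gadget is the weak link pinning $\eps$ at $1/5$. In fact the NOT gate is comfortable up to $\eps<1/3$; the binding constraint is the \emph{two-input} Boolean gate (NAND). Any single agent implementing a non-trivial function of two bits must spread its unit mass across two input blocks and one output block while simultaneously (i) having enough mass on each input that the input value matters and (ii) having enough mass on the output to force a cut there; balancing these against a normalized valuation is what forces $\eps<1/5$. Second, your ``robust Tucker'' is too vague to do the job, and your continuous Borsuk--Ulam detour is precisely the route that breaks. The reason the feedback step in \citep{FRG19-Necklace,FRHSZ20-consensus-easier} fails for constant $\eps$ is that in standard \ndtucker each sampled point contributes a nonzero signal in only one of $N$ coordinates, so the per-coordinate average is $O(1/K)$ and a constant threshold cannot separate balanced from unbalanced. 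The paper's fix is a genuinely new labelling scheme: in \ndritucker every point carries a full vector label in $\{-1,+1\}^N$, so each sample contributes $\pm 1$ in \emph{every} coordinate. Your ``small block exhibiting conflicting labels'' does not capture this, and routing through an arithmetic circuit for a continuous $f$ reintroduces the equi-angle sampling pathology the paper explicitly avoids by switching to Boolean NOT/NAND gates with a unary-encoded, bit-flip-robust sampler. Finally, you gesture at ``cut-absorbing blocks'' but do not engage with the stray-cut phenomenon: up to $N$ cuts can leave the input region and flip the ground-truth orientation of an arbitrary fraction of the circuit copies, and the only known remedy is to argue via antipodality that a corrupted copy outputs $-\mlambda(-x)=\mlambda(x)$ because fewer than $N$ input cuts force the encoded point to the boundary. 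This is exactly where \ppa/ (as opposed to \ppad/) enters, and it has no analogue in the \ppad/ template you invoke.
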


Thus, we show hardness for a constant $\eps$, improving upon the prior
state-of-the-art result, which showed hardness for a polynomially small $\eps$.
A direct consequence of this theorem is that the hardness results for \emph{all}
natural problems that are known to be \ppa/-complete are strengthened as well,
and we obtain \ppa/-hardness for each of the problems for a constant $\eps$.

Our result shows hardness for any $\eps < 1/5$, which is notably large compared
to other constant inapproximability results for total search problems. For
example, the current state-of-the-art hardness results for \ppad/-complete
problems do show hardness for a
constant~$\eps$~\citep{Rubinstein18-Nash-inapproximability}, but as mentioned earlier, that constant
is so small that no prior work has given a lower bound on its
magnitude.\footnote{In subsequent work, we obtained similar strong explicit inapproximability bounds for \ppad/ problems \citep{DeligkasFHM24-pure-circuit}.} Here we give a constant that is substantial relative to the trivial
upper bound of $\eps=1$. We obtain similarly large constants for each of the
other natural problems that are known to be \ppa/-complete, as shown in the
following table.
\begin{center}
\begin{tabular}{l|l}
& \ppa/-completeness \\
Problem & threshold \\\hline
\eps-\necklace & 1/5 \\
\eps-\hamsand & 1/5 \\
\eps-\straightpizza & 1/5 \\
\eps-\squarepizza & 1/5 \\
\eps-\fsp & 1/20 \\
\eps-\fsc & 1/20
\end{tabular}
\end{center}
The full details of these follow-on hardness results can be found below in
\cref{sec:consequences}.

Moreover, our main result continues to hold even if we severely restrict the valuation functions of the agents.

\begin{theorem}\label{thm:main-result-3-block}
$\eps$-\conhalv is \ppa/-complete for all $\eps < 1/5$, even if all agents have 3-block uniform valuations.
\end{theorem}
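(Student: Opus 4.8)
\medskip
\noindent\textbf{Proof strategy.}
The plan is to prove \cref{thm:main-result-3-block} directly, so that \cref{thm:main-result} falls out as the special case in which the restriction on the valuations is dropped. I would reduce from a \ppa/-complete problem to $\eps$-\conhalv, designing the construction to respect the $3$-block restriction from the outset rather than converting a general \conhalv instance afterwards (which seems to cost too much in $\eps$). The existence of \emph{some} hardness reduction is not the issue, since a polynomially small $\eps$ already suffices for \ppa/-hardness \citep{FRG19-Necklace}; the difficulty is that a naive gate-by-gate simulation of a circuit accumulates error in proportion to its depth, so the $\eps$ one ends up with collapses to something polynomially small. Following the blueprint used for \ppad/ \citep{Rubinstein18-Nash-inapproximability}, I would therefore first establish \ppa/-hardness of a ``robust'' variant of a \ppa/-complete problem such as \tucker, one whose instances are error-corrected so that a bounded amount of noise anywhere leaves its solution set essentially unchanged, and only then encode \emph{that} problem as an $\eps$-\conhalv instance --- so that the unavoidable constant error of each gadget gets absorbed rather than compounded.

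For the encoding I would use the usual scheme in which the value on a ``wire'' is the length of \rplus inside a region of $R = [0,1]$ reserved for that wire, normalised to lie in a fixed working interval. The operations that are needed can then all be realised by agents with at most three uniform blocks: a $1$-block agent pins its region to the balanced state (half of it in \rplus); a $2$-block uniform agent whose two blocks have lengths in a prescribed rational ratio forces one encoded value to be a fixed rational multiple of another (same density, unequal lengths); and a $3$-block uniform agent enforces a relation $\hat z \approx \pm \hat x \pm \hat y$, that is, addition and subtraction clamped to the working range, with the signs dictated by which side of each cut is labelled \rplus. Whenever the construction would otherwise call for an agent with a richer valuation, it would be decomposed into several $3$-block uniform agents by threading partial sums through auxiliary accumulator regions, one update per agent, which lengthens the gadget into a chain. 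The one genuinely nonlinear ingredient, a comparator / branching gadget, must come from the binary choice of which side of a cut is \rplus, and fitting it into a single $3$-block uniform valuation is a secondary technical hurdle.

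I expect the main obstacle to be the quantitative bookkeeping required to obtain a threshold as clean and as large as $\eps < 1/5$, rather than a tiny unspecified constant as in the known \ppad/-hardness for $\eps$-\conhalv \citep{FRFGZ18-consensus-hardness}. Each gadget contributes its own slack, and along a chain of length $B$ these slacks add up to roughly $B\eps$, so the construction must be arranged so that every such chain has constant length --- which is what the error-corrected structure of the source problem is there to ensure --- and all of these constant losses must be tracked precisely enough that the threshold of the final problem comes out to the clean value $1/5$ and not some smaller, messier constant. One must also check that no intermediate encoded value is ever pushed outside its working range, so that the clamped gates genuinely act as exact gates there, and that the noise the gadgets inject is precisely what the robustness of the source can tolerate. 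Carrying all of this through, and verifying that the constant really comes out as $1/5$, is, I expect, the delicate heart of the argument.
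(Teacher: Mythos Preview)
Your proposal has a genuine structural gap: it does not address \emph{stray cuts}, which are the defining obstacle in any \ppa/-hardness reduction to \conhalv. In contrast to \ppad/ reductions, here one cannot force each cut into its own reserved region; up to $N$ cuts may leave the input region entirely and land in the circuit region, corrupting gadgets. The paper (like all prior \ppa/ reductions for this problem) handles this by running polynomially many independent copies of the circuit and arguing that at most $N$ of them are damaged, and---crucially---by using the antipodal boundary condition of the Tucker instance to show that when cuts are missing from the input region the encoded point lies on the boundary, so the remaining copies agree regardless of their ``orientation''. Your blueprint, modelled on Rubinstein's \ppad/ argument, implicitly assumes every gadget sees exactly the cut(s) intended for it; the paper explicitly notes that Rubinstein-style constant sampling fails here precisely because $N$ stray cuts can destroy $N$ copies.

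Two further points make your route unlikely to reach $\eps<1/5$. First, the paper's source problem is not merely an ``error-corrected'' \tucker but a new problem, \ndritucker, in which every grid point carries a label in $\{-1,+1\}^N$ (one sign per dimension); this richer labelling is exactly what keeps the averaged feedback signals bounded away from zero by a constant even with polynomially many copies. Your ``robust variant'' does not provide this. Second, the $1/5$ is not the outcome of careful bookkeeping along chains of arithmetic gadgets; it is the threshold of a \emph{single-agent} Boolean NAND gate (five blocks of weight $1/5$, two inputs, left/center/right output region), and the paper identifies this gate as the unique bottleneck. Your arithmetic scheme---chaining $3$-block addition/scaling agents through accumulator regions---accumulates slack along each chain and is additionally vulnerable to a single stray cut anywhere in the chain. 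The paper instead uses only NOT and NAND gates, each realised by one agent, and obtains the $3$-block version by a local modification of the NAND agent (shifting where its output bit is read) together with a restructured feedback mechanism, at the cost of increasing the number of circuit copies from $3N$ to $20N$.
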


An agent has a 3-block uniform valuation function if the density function of the valuation is non-zero in at most three intervals, and in each such interval it has the same non-zero value.

Finally, by a standard argument \citep{FRHSZ20-consensus-easier}, it immediately follows that the hardness result holds even if we allow a few more than just $n$ cuts.

\begin{restatable}{theorem}{extracuts} \label{thm:extra_cuts}
$\eps$-\conhalv is \ppa/-complete for all $\eps < 1/5$, even if all agents have 3-block uniform valuations, and even if $n+n^{1-\delta}$ cuts are allowed for some constant $\delta \in (0,1]$, where $n$ is the number of agents.
\end{restatable}

For completeness we provide a proof of this result in \cref{sec:app:extra-cuts}.

\subsection{Direct Consequences} 
\label{sec:consequences}

Our hardness result for $\eps$-\conhalv directly yields improved hardness
results for every natural problem that is currently known to be \ppa/-complete.
In this section we give the details for these improved hardness results.

\paragraph{\bf Necklace Splitting.} 

In \necklace, we are given a necklace with beads of $n$ colours, and we want to
split the necklace into two (in general, non-contiguous) parts by making at most
$n$ cuts, such that both parts contain half of the beads of each colour. It was
shown by \citet{goldberg1985bisection} and \citet{alon1986borsuk} that the problem
always admits a solution, and later \citet{Alon87-necklace} extended this result
to the variant where the necklace must be divided into $k$ parts rather than
two. 

\ppa/-completeness for the problem was proven by \citet{FRG19-Necklace} via a
reduction from \eps-\conhalv~for an inversely-polynomial \eps. In addition, in
\citep{FRG18-Consensus} it was proven that the approximate version of the
problem is \ppad/-hard for some small constant \eps.

In the approximate version of the problem, denoted as \eps-\necklace with $\eps
\in (0,1)$, the goal is to cut the necklace into two
parts such that, for each colour, the discrepancy between the two
parts is bounded by $\eps$. Formally, if there are $B_i$ beads of colour $i$ and $B_i^+, B_i^-$
correspond to the number of beads of colour $i$ in each of the two parts, in an \eps-solution it
holds that $|B_i^+ - B_i^-| \leq \eps \cdot B_i$. The reduction presented
in~\citep{FRG18-Consensus} increases the error of the \eps-\conhalv instance by only a polynomially small
amount\footnote{We note
that~\citep{FRG18-Consensus} have defined \eps-\necklace
with $\eps$ denoting the discrepancy between the number of beads in each of the two parts, rather than normalising $\eps$ so that it is expressed relative to the total number of beads.  However, this appears to have been a mistake, since their proof and result actually 
use the
definition that we give
here.}, so by applying our our main result, we obtain the following.

\begin{theorem}
\eps-\necklace is \ppa/-complete for every constant $\eps < 1/5$, even if $n+n^{1-\delta}$ cuts are allowed for some constant $\delta > 0$.
\end{theorem}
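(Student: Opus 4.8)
The statement follows by combining our consensus-halving hardness result with the known reduction from consensus halving to necklace splitting, so the plan is largely to assemble existing pieces. For membership, \necklace lies in \ppa/ \citep{FRG19-Necklace}, and this is inherited by the approximate variant that allows up to $n+n^{1-\delta}$ cuts: a fair splitting with at most $n$ cuts always exists \citep{goldberg1985bisection,alon1986borsuk}, it is \emph{a fortiori} an $\eps$-solution using at most $n+n^{1-\delta}$ cuts, and hence the problem trivially reduces to the $n$-cut exact version, which is already known to be in \ppa/.

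For hardness I would chain two reductions. The first is the extra-cuts, $3$-block-uniform strengthening of \cref{thm:main-result-3-block} stated just above, obtained via the padding argument of \citet{FRHSZ20-consensus-easier}: for every constant $\eps<0.2$, $\eps$-\conhalv is \ppa/-hard even with $3$-block uniform valuations and even when $n+n^{1-\delta}$ cuts are allowed for a suitable constant $\delta>0$. The second is the reduction of \citet{FRG18-Consensus} from $\eps$-\conhalv to $\eps$-\necklace: the $n$ measures become $n$ colours, each valuation density is discretized into polynomially many beads, and an interval cut is rounded to the nearest bead boundary. This reduction maps an instance allowing at most $c$ cuts to one allowing at most $c$ cuts over $n$ colours, and -- under the normalized definition of $\eps$-\necklace highlighted in the footnote above, which is the one the argument of \citet{FRG18-Consensus} actually establishes -- it increases the discrepancy parameter only by an additive, inverse-polynomial term $\eta(n)=1/\mathrm{poly}(n)$.

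To get hardness for an arbitrary target constant $\eps_0<0.2$, fix a constant $\eps_1$ with $\eps_1<\eps_0$ and $\eps_1<0.2$. Feeding a hard $\eps_1$-\conhalv instance with $3$-block uniform valuations, $n$ agents and at most $n+n^{1-\delta}$ cuts into the second reduction yields a necklace with $n$ colours, at most $n+n^{1-\delta}$ cuts allowed, and discrepancy parameter $\eps_1+\eta(n)$. For all sufficiently large $n$ we have $\eta(n)<\eps_0-\eps_1$, so $\eps_1+\eta(n)<\eps_0$, and since shrinking the discrepancy parameter only shrinks the solution set, every instance the reduction outputs has an $\eps_0$-solution exactly when it has an $(\eps_1+\eta(n))$-solution; thus the composition is a correct reduction for the target $\eps_0$. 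The cut budget $n+n^{1-\delta}$ on $n$ colours is of the claimed form with the same $\delta$, which completes the hardness direction and hence the theorem.

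The work here is bookkeeping rather than new ideas, and the main thing to be careful about is the interface between the two reductions: (i) confirming that the \citet{FRG18-Consensus} construction really blows the error up only additively by a vanishing $\eta(n)$ in the normalized formulation (this is precisely the normalization subtlety flagged in the footnote), and (ii) confirming that it carries the ``$n+n^{1-\delta}$ cuts on $n$ agents'' regime to the ``$n+n^{1-\delta}$ cuts on $n$ colours'' regime so that the extra-cuts strengthening survives. Crucially, neither step costs anything in the constant $0.2$: the only slack is the inverse-polynomial $\eta(n)$, which is absorbed by choosing $\eps_1$ to be a constant strictly below both $\eps_0$ and $0.2$. Once \cref{thm:main-result-3-block} (and its extra-cuts form) is in hand, the theorem is essentially immediate.
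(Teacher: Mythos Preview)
Your proposal is correct and follows essentially the same route as the paper: combine the main consensus-halving hardness result (in its extra-cuts form) with the reduction of \citet{FRG18-Consensus} to \necklace, which increases the error by only an inverse-polynomial amount and so preserves any constant threshold below $0.2$. The paper states this in one sentence; your write-up simply makes the $\eps_1<\eps_0$ slack and the cut-budget bookkeeping explicit, but there is no substantive difference in approach.
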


\paragraph{\bf Ham Sandwich.} 
In \hamsand, as defined by~\citet{Papadimitriou94-TFNP-subclasses}, we are given
$n$ sets of points with integer coordinates in $d$ dimensional space, where $d
\geq n$.
The task is to find a hyperplane that cuts the space into two 
halfspaces, such that each halfspace contains half of the points of each set. If any
points lie on the plane, then we are allowed to place each of them on either side. 
\citep{FRG19-Necklace} proved that the problem is \ppa/-complete, via a
reduction from \necklace. 

In the approximate version of the problem, denoted
\eps-\hamsand, we want to find a hyperplane such that, for every set, the
discrepancy between the number of points contained in the two halfspaces is bounded by
\eps. Formally, if there are $S_i$ points for set $i$ and $S_i^+, S_i^-$
correspond to the number of points belonging to the two halfspaces, in an
\eps-solution we must have $|S_i^+ - S_i^-| \leq \eps \cdot S_i$. The reduction
between \hamsand and \necklace presented by \citet{FRG19-Necklace} is 
approximation preserving, so we get the following theorem.

\begin{theorem}
\eps-\hamsand is \ppa/-complete for every constant $\eps < 1/5$, even if $d = n+n^{1-\delta}$ for some constant $\delta > 0$.
\end{theorem}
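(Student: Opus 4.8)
The plan is to obtain this theorem as an immediate consequence of the \ppa/-completeness of \eps-\necklace with extra cuts (the theorem stated immediately above), combined with the approximation-preserving reduction from \necklace to \hamsand of \citet{FRG19-Necklace}. Membership in \ppa/ is already available: \citet{FRG19-Necklace} show that \hamsand lies in \ppa/, and replacing the dimension bound $d = n$ by $d = n + n^{1-\delta}$ leaves that containment argument untouched, since \hamsand with $d \geq n$ remains total by the Ham-Sandwich theorem (and the $\eps$-approximate version is only easier). So the task reduces to establishing \ppa/-hardness, with no loss in the constant $\eps$.

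For the hardness direction I would start from an instance of \eps-\necklace with $n$ colours in which $c := n + n^{1-\delta}$ cuts are allowed — this is \ppa/-hard for every constant $\eps < 0.2$ by the preceding theorem — and apply the moment-curve embedding underlying the reduction of \citet{FRG19-Necklace}: map a bead of colour $i$ sitting at position $j$ on the line to the point $\gamma(j) = (j, j^2, \dots, j^c) \in \reals^c$, and let $S_i$ be the image of the colour-$i$ beads. This produces a \hamsand instance with $n$ point sets in ambient dimension $d = c = n + n^{1-\delta}$, with integer coordinates of polynomially many bits. A hyperplane $\{x : \langle a,x\rangle = b\}$ in $\reals^d$ restricts along the moment curve to the univariate polynomial $a_1 t + a_2 t^2 + \dots + a_d t^d - b$ of degree at most $d$, whose sign pattern along the line has at most $d = c$ changes; this partitions the necklace into contiguous blocks using at most $c$ cuts, with any bead lying exactly on the hyperplane assigned to either side at will. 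Since the number of colour-$i$ beads on each side of the hyperplane equals the number of colour-$i$ beads in the corresponding part of the necklace split, an $\eps$-approximate ham sandwich hyperplane yields an $\eps$-approximate necklace split respecting the cut budget, with the same $\eps$. Hence \eps-\hamsand is \ppa/-hard for every constant $\eps < 0.2$ with $d = n + n^{1-\delta}$, and combining with membership gives \ppa/-completeness.

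The step I expect to demand the most care is confirming that the reduction of \citet{FRG19-Necklace}, originally set up for a fixed relation between the number of colours and the ambient dimension, goes through verbatim once the necklace is allotted $n + n^{1-\delta}$ cuts and the dimension is set to match; in particular, that the translation of the discrepancy bound $|B_i^+ - B_i^-| \le \eps \cdot B_i$ into $|S_i^+ - S_i^-| \le \eps \cdot S_i$ is genuinely unaffected by enlarging $d$, that the reduction remains polynomial-time (the coordinate magnitudes $j^c$ still have polynomially many bits since $c \leq 2n$), and that the freedom to place on-hyperplane beads on either side suffices to absorb any remaining slack. As the excerpt already records that this reduction is approximation preserving, this amounts to a bookkeeping verification rather than a new idea, and since no intermediate blow-up in $\eps$ is introduced, the constant $\eps < 0.2$ carries over directly from the necklace splitting result.
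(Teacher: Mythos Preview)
Your proposal is correct and follows essentially the same approach as the paper: reduce from the preceding \eps-\necklace result (with $n+n^{1-\delta}$ cuts) via the approximation-preserving moment-curve reduction of \citet{FRG19-Necklace}, noting that the necklace cut budget becomes the ambient dimension. The paper states this in a single sentence without unpacking the reduction, whereas you spell out the moment-curve mechanics and the bookkeeping on~$\eps$ and~$d$; but the underlying argument is the same.
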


\paragraph{\bf Pizza Sharing.} 

In pizza sharing problems we are given measurable objects that are embedded in
the two-dimensional plane, and we are asked to make a number of cuts in order to
divide each mass into two equally sized portions. Two versions of this problem
have been studied in the literature. 

In the \straightpizza problem, we are given $2n$ two-dimensional masses in the
plane, and we are asked to find $\ell$ straight lines that create a
``checkerboard'' that  simultaneously bisects all of the masses. In~\citep{BPS19,
HK20} it was shown that the problem always admits a solution when $\ell \geq n$.

In the \squarepizza problem, there are $n$ masses in the plane, and the task is
to simultaneously bisect all masses via a \emph{square-cut}: a path that is the
union of horizontal and vertical line segments. 
In~\citep{KRS16} it was
proven that a path with $n-1$ turns can always bisect all $n$ masses. 

In the approximate versions of these problems, we are looking for an approximate
bisection. Formally, in an $\eps$-approximate solution of these problems, we are
looking for a partition of the plane $R$ into two regions $\rplus$ and $\rminus$
such that for every measure $\mu_i$ it holds that $|\mu_i(\rplus) -
\mu_i(\rminus)| \leq \eps \cdot \mu_i(R)$. For \eps-\straightpizza the partition
must be produced by $\ell$ lines, while for \eps-\squarepizza the partition must
be produced by a square-cut path with $t$ turns. 

Both problems were proven to be
\ppa/-complete when \eps is inversely polynomial and \ppad/-hard for a small
constant $\eps \in (0,1)$ via direct reductions from
\conhalv~\citep{DeligkasFM22-pizza,schnider2021complexity}. Using the reductions
from~\citep{DeligkasFM22-pizza}, which increase the error by at most a polynomially small amount, alongside our main theorem yields the following.

\begin{theorem}
\label{thm:straightpizza}
\eps-\straightpizza is \ppa/-complete for every constant $\eps < 1/5$, even if $n+n^{1-\delta}$ cuts are allowed for some constant $\delta > 0$.
\end{theorem}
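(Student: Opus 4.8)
The plan is to obtain \cref{thm:straightpizza} as a direct corollary of our constant-$\eps$ hardness for \conhalv, composed with the reduction of \citet{DFM20} from \conhalv to \straightpizza. Membership in \ppa/ for every constant error needs no new work: a given $\eps$-\straightpizza instance can simply be handed to the inverse-polynomially-approximate version of \straightpizza, which lies in \ppa/ by \citep{DFM20,schnider2021complexity}, and any solution found there is a fortiori an $\eps$-approximate solution. So the content of the theorem is \ppa/-hardness.

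For hardness, I would combine two ingredients. First, our main result in its strongest form: $\eps$-\conhalv is \ppa/-hard for every constant $\eps < 0.2$, even when all agents have $3$-block uniform valuations and even when the cut budget is $n + n^{1-\delta}$ for a constant $\delta > 0$ (\cref{thm:main-result-3-block} together with its extra-cuts extension). Second, the reduction of \citep{DFM20}, which maps in polynomial time a \conhalv instance with $a$ agents and cut budget $c$ to a \straightpizza instance with $\Theta(a)$ masses and line budget $c$, and which is approximation-preserving in the quantitative sense that from an $\eta$-approximate solution of the produced pizza instance one recovers, in polynomial time, an $(\eta + \gamma)$-approximate solution of the original \conhalv instance, where $\gamma = \gamma(a)$ is polynomially small. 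I would then fix an arbitrary constant $\eta < 0.2$, set $\Delta := 0.2 - \eta > 0$, and note that $\gamma(a) < \Delta$ for all sufficiently large $a$, so $\eta + \gamma(a) < 0.2$ there; the finitely many smaller \conhalv instances can be hard-coded and are irrelevant to hardness. Composing the two ingredients, a polynomial-time algorithm for $\eta$-\straightpizza would solve $(\eta + \gamma(a))$-\conhalv on all large instances, contradicting the first ingredient since $\eta + \gamma(a)$ is a constant below $0.2$. The $n + n^{1-\delta}$-cuts clause is inherited because the reduction passes the cut budget straight through to the line budget: a \conhalv instance with budget $a + a^{1-\delta}$ yields a pizza instance on $\Theta(a)$ masses with line budget $a + a^{1-\delta}$, i.e.\ ``$n + n^{1-\delta}$ lines'' in the pizza instance's own parameter $n$.

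The single step that calls for genuine care --- the place I would flag as the ``obstacle'', although it is really bookkeeping rather than a conceptual hurdle --- is confirming the two quantitative properties of the \citep{DFM20} reduction that the argument rests on: that its additive error blow-up is only inverse-polynomial (hence eventually absorbed by the fixed gap $\Delta$, uniformly once instances are large enough), and that it transfers the bound on the number of cuts verbatim into the bound on the number of lines. Both facts were already established in \citep{DFM20}, so no new construction is needed; the whole point of \cref{thm:straightpizza} is that feeding our constant-$\eps$ hardness for \conhalv into that existing pipeline automatically upgrades its inverse-polynomial hardness to hardness for every constant $\eps < 0.2$.
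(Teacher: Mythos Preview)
Your proposal is correct and follows essentially the same approach as the paper: the paper simply states that the reductions from \citep{DFM20} ``increase the error by at most a polynomially small amount'' and that combining them with the main \conhalv hardness theorem yields \cref{thm:straightpizza}. You have spelled out in more detail the bookkeeping (absorbing the inverse-polynomial blow-up into the constant gap, and tracking the cut budget through to the line budget), but the underlying argument is the same composition of the main result with the existing approximation-preserving reduction.
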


\begin{theorem}
\label{thm:squarepizza}
\eps-\squarepizza is \ppa/-complete for every constant $\eps < 1/5$, even if the square-cut path is allowed to have $n+n^{1-\delta}$ turns for some constant $\delta > 0$.
\end{theorem}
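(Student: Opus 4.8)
The plan is to obtain \cref{thm:squarepizza} as a corollary of our hardness result for \conhalv together with the approximation-preserving reduction from \conhalv to \squarepizza of \citet{DFM20}; no new reduction is needed. Fix a constant $\eps < 0.2$. For membership in \ppa/: the exact version ($\eps = 0$) with the standard number $n-1$ of turns is total by \citet{KRS16} and was placed in \ppa/ in \citep{DFM20,schnider2021complexity}, and both relaxations we make -- allowing error up to $\eps$, and allowing up to $n + n^{1-\delta}$ turns -- only enlarge the solution set, since an exactly-bisecting path with $n-1$ turns is in particular an $\eps$-bisecting path with at most $n+n^{1-\delta}$ turns. Hence the \ppa/-membership reduction for the exact, $(n-1)$-turn problem already witnesses membership of the relaxed problem: every solution it returns remains valid.

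For \ppa/-hardness, pick a constant $\eps'$ with $\eps < \eps' < 0.2$. By the cut-relaxed strengthening of \cref{thm:main-result} (obtained via \citep{FRHSZ20-consensus-easier}), $\eps'$-\conhalv is \ppa/-hard even when $n+n^{1-\delta}$ cuts are allowed. Applying the reduction of \citet{DFM20} turns such an instance, with $n$ agents, into a \squarepizza instance with $N = \mathrm{poly}(n) \ge n$ masses in which the cuts of the \conhalv instance correspond to the turns of the square-cut path, and which changes the approximation parameter by at most an additive $1/\mathrm{poly}(n)$. Consequently, from any $\eps$-solution of the produced \squarepizza instance one recovers in polynomial time a solution of the original \conhalv instance with error at most $\eps + 1/\mathrm{poly}(n)$, which is at most $\eps'$ for all sufficiently large $n$; this yields an $\eps'$-\conhalv solution, and the finitely many small instances are dispatched by brute force, as usual. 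The relaxed cut budget $n+n^{1-\delta}$ is carried over by the reduction to a turn budget of the form $N + N^{1-\delta'}$ for a suitable constant $\delta' > 0$ (using $N \ge n$ and $N = \mathrm{poly}(n)$), which is what the theorem asserts. The statements for \eps-\necklace, \eps-\hamsand and \eps-\straightpizza follow in exactly the same way from their respective approximation-preserving reductions.

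The only genuinely case-specific work is bookkeeping internal to the reduction of \citet{DFM20}: one must verify that the blow-up in the approximation parameter is indeed only $1/\mathrm{poly}(n)$, so that it is absorbed by the constant gap $\eps' - \eps$, and that a slightly super-linear cut budget on the \conhalv side maps to a slightly super-linear turn budget on the \squarepizza side. I expect this verification to be the main -- and only mildly delicate -- obstacle; once \cref{thm:main-result} and its cut-relaxed version are available, everything else is purely formal.
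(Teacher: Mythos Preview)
Your proposal is correct and follows essentially the same approach as the paper: the paper simply states that the reductions from \citep{DFM20} increase the error by at most a polynomially small amount, so combining them with the main \conhalv hardness result (in its cut-relaxed form) immediately yields the claim. Your write-up is in fact more explicit than the paper's---you spell out the $\eps<\eps'<0.2$ slack argument and the bookkeeping for translating the $n+n^{1-\delta}$ cut budget into an $N+N^{1-\delta'}$ turn budget---but the underlying derivation is identical.
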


\paragraph{\bf Fair Independent Sets.}
In this setting, we are given a graph $G$ whose vertices are partitioned into
$n$ sets $V_1, \ldots, V_n$. The task is to find two independent sets of $G$
such that every $V_i$ is covered in a ``fair'' manner. In particular, we are
interested in the setting where $G$ is a
cycle or a path, since it was proven that such graphs
possess fair independent
sets~\citep{aharoni2017fair,alishahi2017fair,black2020fair}.

More specifically,~\citet{alishahi2017fair} proved that if $G$ is a cycle
of $m$ vertices and $n$ has the same parity as $m$, then there exist two
disjoint independent sets $S_1$ and $S_2$, such that for every $i \in [n]$ it holds that $|V_i \cap (S_1 \cup S_2)|
= |V_i|-1$ and $|S_j \cap V_i| \geq \frac{1}{2} |V_i| - 1$ for all $j \in \{1,2\}$. For $\eps \in
[0,\frac{1}{2}]$, we use \eps-\fsc to denote the problem of finding two such
independent sets, where the second condition is relaxed to $|S_j \cap V_i| \geq (\frac{1}{2} - \eps)\cdot |V_i| - 1$.

A similar theorem was shown for paths by \citet{black2020fair}:
if $G$ is a path and every set $V_i$ contains an odd number of points, then there exist two independent sets
$S_1$ and $S_2$, covering all but at most $n$ vertices of $G$ such that for every $i \in [n]$ it holds that $|S_1 \cap
V_i| \in \big[\frac{1}{2}|V_i| - 1,  \frac{1}{2}
|V_i| \big]$. For $\eps \in [0,\frac{1}{2}]$, we use \eps-\fsp to denote the corresponding computational
problem, where the condition is relaxed to $|S_1 \cap
V_i| \in \big[(\frac{1}{2} - \eps)\cdot |V_i| - 1,  (\frac{1}{2} + \eps)\cdot
|V_i| \big]$.

Hardness was shown for both problems by \citet{haviv2020complexity}, who proved that both problems are \ppa/-complete
for a polynomially small \eps and that they are \ppad/-hard for a small
constant \eps. The hardness is shown by a reduction from
\eps-\conhalv to $\frac{\eps}{4}$-\fsp and then a follow-on reduction from \eps-\fsp to \eps-\fsc.
Combining these reductions with our main theorem yields the following.\footnote{In a subsequent version of his work, \citet{Haviv2022-fair-cycles} improved the parameters of his reduction from \conhalv to \fsp, and thus obtained \ppa/-hardness results for $\eps < 1/10$.}

\begin{theorem}
\label{thm:fairis}
\eps-\fsp and \eps-\fsc are \ppa/-complete for every constant $\eps < 1/20$.
\end{theorem}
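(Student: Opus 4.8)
The plan is to combine our main theorem, \cref{thm:main-result}, with the approximation-preserving reductions of \citet{haviv2020complexity}. Membership of both $\eps$-\fsp and $\eps$-\fsc in \ppa/ was already established in that work, so the only thing left to check is \ppa/-hardness for every constant $\eps < 0.05$.

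First I would handle $\eps$-\fsp. \citet{haviv2020complexity} give a polynomial-time reduction from $\delta$-\conhalv to $\frac{\delta}{4}$-\fsp. By \cref{thm:main-result}, $\delta$-\conhalv is \ppa/-hard for every constant $\delta < 0.2$, so $\frac{\delta}{4}$-\fsp is \ppa/-hard for every such $\delta$. Letting $\delta$ range over $(0, 0.2)$, the quantity $\frac{\delta}{4}$ ranges over $(0, 0.05)$; hence $\eps$-\fsp is \ppa/-hard, and therefore \ppa/-complete, for every constant $\eps < 0.05$.

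Next I would transfer this to $\eps$-\fsc by chaining the follow-on reduction of \citet{haviv2020complexity} from $\eps$-\fsp to $\eps$-\fsc, which leaves the approximation parameter unchanged. Composing it with the reduction above shows that $\eps$-\fsc is \ppa/-hard, and hence \ppa/-complete, for every constant $\eps < 0.05$.

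There is no genuine obstacle here: all of the mathematical content sits in \cref{thm:main-result}, and the remainder is bookkeeping on the error parameter. The only points that require any care are to confirm, by inspecting the reductions of \citet{haviv2020complexity}, that the claimed dependence of the errors ($\frac{\eps}{4}$ in the first step and an exact match in the second) is correct, and that both reductions still run in polynomial time when $\eps$ is a constant rather than an inverse polynomial — which is immediate, since treating $\eps$ as a constant only makes the reductions easier.
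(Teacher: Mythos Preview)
Your proposal is correct and follows essentially the same approach as the paper: invoke \cref{thm:main-result} for $\delta$-\conhalv with $\delta < 0.2$, then apply Haviv's reduction from $\delta$-\conhalv to $\frac{\delta}{4}$-\fsp and the follow-on reduction from $\eps$-\fsp to $\eps$-\fsc, together with the \ppa/-membership already shown in that work. The paper's own justification is a one-line remark to this effect, so your write-up is if anything slightly more detailed.
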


\subsection{Further Related Work}

There are various other works that relate to ours, some of which dealt with \ppa/-hardness and some of which studied the consensus halving problem. 

In a recent work by \citet{DeligkasFH22-consensus-constant} the main question was ``How does the complexity of consensus halving depend on the number of agents?''. This paper's main result is a dichotomy between 2 and 3 agents when the valuations are monotone (but possibly non-additive). In particular, for the former case the problem is polynomial time solvable, while for the latter it is \ppa/-complete. If the monotonicity property is dropped, then both cases become \ppa/-complete. Furthermore, for the case of a single agent (and even for $n$ agents with identical valuations) the problem is polynomial time solvable. 

\citet{AlonG21} present a set of strong positive results on the \eps-\hspace{0pt}\textsc{NecklaceSplitting} problem. They present efficient algorithms for a relaxed version of this problem where more than $n$ cuts are allowed. In particular, for an instance whose beads can take $n$ colours, and can be at most $m$ per colour, they give an offline and an online algorithm that is efficient and deterministic, which provide a solution by making at most $O(n (\log m + O(1)))$ and $O(m^{2/3} \cdot n(\log n)^{1/3})$ cuts, respectively, for $\eps = 0$. For $\eps > 0$, the same algorithms work with the aforementioned number of cuts, by substituting $m$ with $1/\eps$. These algorithms also work for the \eps-\conhalv problem when we are allowed to use more than $n$ cuts. Their positive results extend to the generalization of \necklace in which, instead of wishing to split each colour's beads into two parts, we split them into $k \geq 2$ parts \citep{Alon87-necklace}. For detailed definitions of this and related problems, as well as their related complexity classes, see \citep{FRHSZ21-necklace}, and \citep{Hollender21-ppa-k}.

In \citep{GoldbergHIMS22-consensus-items} the problem under study deviates slightly from the typical consensus halving problem. There are (divisible) items and they are not presented in a linear order, but rather unordered, with agents having linear and additively separable utilities over them. In this work the authors provide polynomial time algorithms even for the more general \conhalv problem where we do not split the probability measures in two, but in $k \geq 2$ parts, and show that for a slightly non-linear valuation class the problem becomes \ppad/-hard. For the case where the items are in a specific order, they show that the problem is \ppa/-complete.

There is also a hierarchy of complexity classes of problems that seek exact solutions, whose output involves irrational numbers. A famous such class is \fixp/ whose typical problem has as input a function from $[0,1]^N$ to itself, and the task is to find an exact Brouwer fixed point of the function. This class was defined by \citet{EtessamiY10-FIXP} who further showed that the problem of finding an exact Nash equilibrium for $n \geq 3$ agents is complete for the class. In \citep{deligkas2021BU} a related class was defined, namely \bu/, whose typical problem has as input a function from the $L_1$ unit $N$-sphere to $\reals^N$, and the task is to find an exact Borsuk-Ulam point, i.e., one that has the same function value as its antipodal. This work showed that exact \conhalv is in \bu/ for piecewise polynomial valuations and \fixp/-hard. Both the aforementioned papers showed that when the input function is piecewise linear, then the induced class is identical to \ppad/ and \ppa/ respectively. \citet{EtessamiY10-FIXP} also defined the strong approximation version of an exact search problem where a point that is close to an exact solution is sought. By extending this notion to \bu/, \citet{BatziouHH21-consensus-BBU} showed that the strong approximation version of \conhalv (with valuations represented by algebraic circuits) is complete for the corresponding class $\bu/_{a}$.

\section{Preliminaries}\label{sec:prelims}

A {\em valuation function}, or simply valuation, of an agent is a probability
measure over the interval $R=[0,1]$. The probability measures are given by their
density functions. A valuation function is {\em piecewise constant} if $R$ can
be partitioned into a finite set of intervals where the density of the probability is
constant over each interval. Thus, a piecewise constant valuation can be
explicitly represented as endpoints and heights of value blocks. For any measurable subset $S$ of $R$, $v_i(S)$ denotes the value of agent $i$ for $S$;
$v_i(S)$ equals the measure of the probability of agent $i$ over $S$. In particular, $v_i(R) = 1$. For $n \in \mathbb{N}$, we use $[n]$ to denote the set $\{1,2,\dots,n\}$.

\begin{definition}[\eps-\conhalv]
An instance of \eps-\conhalv consists of $n$ agents with piecewise constant
valuation functions over the interval $R=[0,1]$. A solution 
is a partition of the interval $R$ into two regions $R^+$ and $R^-$, using at most $n$
cuts, where every agent agrees that 
the value of $R^+$ is at most $\eps$-away from the value for $R^-$. Formally, in a solution
of \eps-\conhalv it holds that $|v_i(R^+)-v_i(R^-)| \leq \eps$ for every $i \in
[n]$.
\end{definition}

In this paper we will show a hardness result for \eps-\conhalv by reducing from
the \twodtucker problem.

\begin{definition}[\twodtucker]
An instance of \twodtucker consists of a labelling function $\lambda : [m] \times
[m] \to \{\pm 1, \pm 2\}$ such that for $1 \leq i,j \leq m$, $\lambda(i,1) =
-\lambda(m-i+1,m)$ and $\lambda(1,j) = -\lambda(m,m-j+1)$. A solution to such an
instance is a pair of vertices $(x_1,y_1)$, $(x_2,y_2)$ with $|x_1 - x_2| \leq
1$ and $|y_1 - y_2| \leq 1$ such that $\lambda(x_1,y_1) = -\lambda(x_2,y_2)$.
\end{definition}

The labelling $\lambda$ is given as a Boolean circuit.
\twodtucker is known to be \ppa/-complete;
\citet{Papadimitriou94-TFNP-subclasses} proved membership in \ppa/ and
\citet{AisenbergBB20-2D-Tucker} proved \ppa/-hardness.

\begin{theorem}[\citet{AisenbergBB20-2D-Tucker,Papadimitriou94-TFNP-subclasses}]
\twodtucker is \ppa/-complete.
\end{theorem}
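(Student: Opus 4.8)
The plan is to establish the two directions separately: membership of \twodtucker in \ppa/, and \ppa/-hardness of \twodtucker. Totality is immediate from Tucker's lemma, so the work is purely about placing the search problem in the right subclass of \tfnp/ and then matching it with a lower bound.

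For membership, I would give a polynomial-time reduction from \twodtucker to the canonical \ppa/-complete problem \textsc{Leaf} (given an undirected graph of maximum degree two specified by a Boolean circuit, together with a known leaf, find another leaf). Following the constructive proof of Tucker's lemma, fix the standard triangulation of $[m]\times[m]$ and build a graph whose nodes are the triangles carrying a prescribed ``alternating'' label multiset, with an edge joining two triangles that share a facet whose two endpoints are labelled $+j$ and $-j$ for some $j$. Each such triangle is incident to at most two such facets, so the graph has maximum degree two, and its adjacency circuit is obtained from $\lambda$ in polynomial time. The boundary antipodality condition, together with an induction on dimension (the one-dimensional case being a trivial parity count), forces an odd number of ``boundary'' leaves, hence there is at least one more leaf; any such leaf is a triangle containing a complementary edge, i.e.\ two grid points within $L_\infty$-distance $1$ with opposite labels -- precisely a \twodtucker solution. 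The key point making this a \ppa/ argument rather than a \ppad/ one is that the antipodal (M\"obius-type) boundary condition obstructs any consistent orientation of the path-following structure.

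For hardness, I would reduce from a high-dimensional Tucker variant that is known, or can be shown by a direct encoding of \textsc{Leaf}, to be \ppa/-complete -- for concreteness, Tucker on the $n$-dimensional grid of side length two (equivalently, a combinatorial octahedral Tucker problem). The heart of the reduction is an embedding of this $n$-dimensional instance into a two-dimensional grid of side length $m = 2^{\Theta(n)}$: thread the $n$-cube through the $2$D grid via a ``snake''/space-filling layout, and extend $\lambda$ on the connector regions so that (i) every complementary edge of the $2$D instance corresponds to a complementary edge of the original $n$-dimensional instance, (ii) no spurious complementary edges are created within the connectors, and (iii) the $2$D antipodal boundary map $\overline{x}=(m-x_1+1,\,m-x_2+1)$ is compatible, under the embedding, with the $n$-dimensional antipodal map -- which forces the snake to fold antipodally-symmetrically.

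I expect step (iii), and more generally the whole planar embedding, to be the main obstacle. Squeezing a high-dimensional \ppa/ instance into a two-dimensional Tucker grid while simultaneously preserving all complementary edges, suppressing any new ones in the wiring, and respecting the rigid $2$D antipodal boundary condition is delicate, and it is exactly this tension -- together with the need to route exponentially many cube cells through a planar grid -- that forces $m$ to be exponentially large. This also explains why \twodtucker, unlike its higher-dimensional analogues (cf.\ \ndtucker), is only \ppa/-hard once the side length is allowed to be exponential.
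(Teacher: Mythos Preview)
The paper does not actually prove this theorem: it is quoted as a known result, with membership attributed to \citet{Papadimitriou94-TFNP-subclasses} and hardness to \citet{AisenbergBB20-2D-Tucker}. There is therefore no in-paper proof to compare against; your proposal is effectively a sketch of the two cited works.

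On that reading, your membership argument is essentially Papadimitriou's: reduce to \textsc{Leaf} via the constructive path-following proof of Tucker's lemma on a triangulation of the grid, using the antipodal boundary condition to guarantee an odd number of boundary leaves. This part is correct and essentially complete modulo routine bookkeeping.

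Your hardness sketch is in the right spirit but is only a plan. Two points are worth flagging. First, you propose to start from a high-dimensional Tucker variant ``known, or shown by a direct encoding of \textsc{Leaf}, to be \ppa/-complete''; be careful not to invoke a result whose only known proof goes through \twodtucker itself. Second, and more substantively, the Aisenberg--Bonet--Buss reduction is considerably more delicate than a snake-embedding outline suggests: the real work is not threading the high-dimensional instance through the plane, but designing the connector/boundary labelling so that (a) no spurious complementary edges arise and (b) the rigid 2D antipodal map is respected. You correctly identify this as the main obstacle, but you do not resolve it; as written, the hardness direction is a reasonable roadmap rather than a proof.
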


Other versions of Tucker's lemma have also been shown to be \ppa/-complete \citep{dengoctahedral,FRG19-Necklace}.

\section{Technical Overview}

In this section, we present an overview of the proof of our main result including the new insights that allow us to obtain hardness for a constant $\eps$.

To prove our main result, we reduce \twodtucker
to $\eps$-\conhalv for
all $\eps < 1/5$. Here we give an overview of the reduction, and the key
challenges that needed to be overcome in order to obtain a constant $\eps$. 

In our description of the main ideas and challenges, we will make reference to the three existing \ppa/-hardness reductions for consensus halving.\footnote{The paper~\citep{FRG22-NS-CH-ham} is a journal version, which combines the proof from Work 2 (which reuses some machinery from Work 1) with some results from Work 1, such as the connection between \conhalv and \necklace.}
\begin{itemize}
    \item \textbf{Work 1} \citep{FRG18-Consensus}: which proves hardness for inverse exponential $\eps$.
    \item \textbf{Work 2} \citep{FRG19-Necklace}: which proves hardness for inverse polynomial $\eps$.
    \item \textbf{Work 3} \citep{FRHSZ20-consensus-easier}: which provides a significantly simplified proof of hardness for inverse polynomial $\eps$.
\end{itemize}

All three existing works ultimately reduce from \twodtucker, but Works 2 and 3 include a preliminary step, where \twodtucker is reduced to its high-dimensional version: \ndtucker. This seems to be necessary in order to obtain hardness for inverse polynomial \eps. Indeed, a similar observation can also be made about analogous results in the study of approximate Nash equilibrium computation \citep{DaskalakisGP09-Nash,ChenDT09-Nash}, where a high-dimensional version of the Brouwer problem is used to achieve hardness for inverse polynomial approximation.

We begin with a very high-level overview of the general structure of the reduction which applies to all three existing works, as well as to ours. Informally, an \ndtucker
instance is defined over an $N$-dimensional grid $G = [m] \times [m] \times \dots
\times [m]$ with side length $m$. The instance gives a \emph{labelling function}
$\lambda : G \rightarrow \{-N, \dots, -1, 1, \dots, N\}$, presented as a Boolean
circuit, that assigns each point in the grid a label that is either $+i$ or $-i$
for some $i$ in the range $1 \le i \le N$. Additionally, the labelling satisfies an \emph{antipodality} condition on the boundary: letting $\overline{x_i} := m - x_i + 1$, it holds that $\lambda(\overline{x}) = - \lambda(x)$ whenever $x$ lies on the boundary of $G$.
The goal is to find two points $x$
and $y$ on the grid, such that $x$ and $y$ are within $L_\infty$ distance 1 of
each other, and $\lambda(x) = -\lambda(y)$. Such a pair of points is guaranteed to exist by Tucker's Lemma~\citep{tucker1945some}, and the problem of finding one is \ppa/-complete even for constant $m$, as shown in Work 2 by reducing from \twodtucker. The problem \twodtucker is defined in the same way, except that $N=2$, and $m$ is required to be exponentially large for the problem to be \ppa/-complete \citep{AisenbergBB20-2D-Tucker}.

We are now ready to present the high-level setup used in all three previous works. The specifics of the reductions in Works 1 and 2 are significantly more involved than what is presented here, and so the presentation below should be seen as mostly applying to the simplified proof of Work 3 (while still representing the underlying core structure hidden behind the reductions in Works 1 and 2). The $\eps$-\conhalv instance $\CHinstancenoeps$ is constructed as follows:
\begin{itemize}
\item The line $R = [0,1]$ consists of two intervals $I$ and $C$. We think of $I$ as the \emph{input region} (also called \emph{coordinate-encoding region} in prior work), and $C$ as the \emph{circuit region} (also called \emph{circuit-encoding region}). In any solution $S = (R^+,R^-)$ to the instance $\CHinstancenoeps$, $I$ will be partitioned into two regions $I^+ := I \cap R^+$ and $I^- := I \cap R^-$. The exact way in which $I$ is partitioned encodes a point $z := z(I^+,I^-)$ in some domain. In Work 1, this domain is a locally two-dimensional Möbius strip, while in Work 2, it is a high-dimensional generalization of that. Work 3 significantly simplifies this encoding by letting the domain simply be the $N$-dimensional unit hypercube. In all three cases, the grid $G$ of the \ndtucker instance $\lambda$ is embedded in the domain in question, and so the partition $(I^+, I^-)$ ultimately encodes a point $x := x(I^+,I^-) \in G$.

\item To ``extract'' the point $x \in G$ from $(I^+,I^-)$, a binary
decoding step is performed where the continuous information that is encoded in $(I^+,I^-)$ is converted into bit values that represent $x_i$. 
Mechanically, this is implemented by introducing a set of agents in $\CHinstancenoeps$ who ensure that cuts (between $R^+$ and $R^-$) are
placed in specified regions of $C \subset [0, 1]$ to encode either a $0$ bit or a $1$ bit. In Works 1 and 2, this binary decoding is performed ``at the source'', namely the information read from $I$ is essentially binary, and then further processed by simulating Boolean gates. In Work $3$, the information read from $I$ is continuous and then further processed by simulating arithmetic gates. The arithmetic gates are then used to perform the bit decoding.

\item As is common in \ppad/ and \ppa/ reductions, the Boolean decoding step from $z$ to $x$ can
fail for certain values, and if the decoding step fails then nonsensical values
will be produced. To address this, instead of just decoding $z$, a large number of points
surrounding the encoded point $z$ are decoded, where the samples are chosen to
ensure that only a small number of the decoding steps fail. If $K$ samples are
taken, then this gives a sequence of points $x^1, x^2, \dots, x^K$, most of
which are valid bit representations of points in $G$, and a small number of which contain nonsensical values. Importantly, the sampling is performed such that all resulting (correctly) decoded points in $G$ lie within $L_\infty$ distance $1$ of each other.

\item The next step is to simulate the execution of the Tucker labelling
circuit $\lambda$ on the inputs $x^1, x^2, \dots, x^K$. For this, $K$ completely independent copies of the circuit $\lambda$ are simulated, each within its own sub-region $C^1, \dots, C^K$ of the circuit region $C$. The region $C^k$ is fed input $x^k$ and so is used to compute
$\lambda(x^k)$. Mechanically, each gate in each circuit is simulated by a set of
agents who ensure that the output of each gate is encoded by a cut in a
specified region of $C^k$, allowing other agents to read that value to
simulate other gates.

\item The output of the circuit in $C^k$ is treated as a vector $y^k \in [-1,1]^N$, so
that whenever $x^k$ was correctly decoded and $\lambda(x^k) = +i$ (resp. $-i$), we have $y^k_i = +1$ (resp. $-1$) and $y^k_\ell = 0$ for all other dimensions $\ell \neq i$. If $x^k$ was
not correctly decoded, then $y^k$ can be any vector in $[-1,1]^N$.

\item The last step is to average the outputs of the circuits. Specifically, for each dimension $i \in [N]$, we introduce an agent that computes the average $L(i) = \frac{1}{K} \sum_{k = 1}^K y^k_i$ and enforces that $L(i)$ be $\eps$-close to zero for all $i$. With $K$ being chosen to be
suitably large, the effect of the incorrectly decoded points becomes negligible,
and so 
it is only possible for $L(i)$ to be close to zero for all $i$ if, for each
dimension $i$, there are a roughly equal number of points with label $+i$ and label
$-i$. Since all of the input points lie within $L_\infty$ distance $1$ of each
other, this implies that we have a solution to the Tucker instance, namely, we can extract from $x^1, \dots, x^K$ two points yielding a solution to \ndtucker.
Mechanically, this is implemented by a set of $N$ agents, one for each dimension $i$, enforcing that $L(i)$ is close to zero for a specific label $i$.

\item The final -- and crucial -- complication that significantly differentiates these reductions from more standard \ppad/-hardness reductions is the presence of \emph{stray cuts}. Indeed, the construction we have described works perfectly assuming that there are $N$ cuts in the input region $I$. However, nothing forces these $N$ cuts
to be made in the input region. If there are less than $N$ cuts in the input region, then we think of the missing cuts as having become \emph{stray cuts} that can interfere with the rest of the construction, in particular the various circuit simulations. Indeed, a stray cut can essentially destroy the output of one of the circuit simulations by occurring in the corresponding region $C^k$. Fortunately, this is easy to fix by taking enough additional samples and copies of the circuit.

The more problematic -- and conceptually important -- interference caused by stray cuts is that a single stray cut can influence any fraction of the circuits (for example, half of them) by ``changing their perception of whether a bit is $1$ or $0$.'' Intuitively, any solution $S = (R^+,R^-)$ of $\CHinstancenoeps$ remains a solution if we flip $+$ and $-$, i.e., if we let $S' = (R^-,R^+)$. This symmetry has the following important consequence: in order to perform a logical operation such as AND, which does not commute with bit-flipping (i.e., $\textup{AND}(\lnot b_1,\lnot b_2) \neq \lnot \textup{AND}(b_1,b_2)$), the circuit needs to be given access to some ground-truth value. This ground-truth essentially helps the circuit differentiate between bits $1$ and $0$, and thus allows it to implement logical gates such as AND. If there are no stray cuts, then it is not too hard to ensure that all the copies of the circuit see the same ground-truth. But a single stray cut can change the ground-truth perception of half the circuits. By a careful construction, it is possible to ensure that if circuit $C^k$'s perception of the ground-truth is altered by a stray cut, then it will output $- \lambda(\overline{x^k})$ instead of $\lambda(x^k)$. Furthermore, the construction ensures that if there are less than $N$ cuts in $I$, and thus at least one stray cut, then all the correctly decoded points amongst $x^1, \dots, x^K$ lie on the boundary of $G$. Using the antipodality conditions of $\lambda$, it follows that $- \lambda(\overline{x^k}) = \lambda(x^k)$ for all valid points $x^k$, and thus the difference in ground-truth between the circuit copies does not matter anymore.

In a certain sense, stray cuts are necessary for any reduction proving \ppa/-hardness for the problem. For example, if there was some trick to enforce that $N$ cuts lie in $I$ in the reduction above, then the reduction would not have made use of the antipodality condition of $\lambda$. This is not possible, since we could then reduce from a circuit $\lambda$ that has no solution, but $\CHinstancenoeps$ always has a solution. More generally, it can be shown that any reduction where each cut has its own disjoint reserved region (where it must lie) can only prove \ppad/-hardness at best. Indeed, in that case those instances can be reduced to the problem of finding a Brouwer fixed point.
\end{itemize}

Our reduction follows this basic template, but requires overcoming various
challenges in order to obtain a constant $\eps$.

\paragraph{\bf Challenge 1: Encoding Tucker solutions.}
While the setup described above is sufficient to obtain hardness for a
polynomially small $\eps$, the encoding of the Tucker solutions fails when one
considers a constant $\eps$. Specifically, in the computation of $L(i)$, note
that most of the terms will be zero, corresponding to points that do not have
label $i$. When $K$ is chosen to be polynomially large, as it is in all prior works,
then the values of $L(i)$ become polynomially small. This does not cause issues
when $\eps$ is also polynomially small, as one can still distinguish $L(i)$
being close to zero, and $L(i)$ being far from zero. But when $\eps$ is a
constant we lose that power, and the reduction breaks.

One idea is to try to get away with only a constant number $K$ of samples and circuit copies. Indeed, prior work by \citet{Rubinstein18-Nash-inapproximability} in the context of \ppad/ has succeeded in performing the so-called averaging trick with only a constant number of samples. Unfortunately, there seems to be a fundamental obstacle to this kind of approach here: there are up to $N$ stray cuts that can ``destroy'' up to $N$ circuit copies, and thus any constant number $K$ of copies will not be enough.

To address this, we define a new version of the Tucker problem where the labels have more expressive power. We call this new problem \ndritucker. Briefly, this
is a variant of \ndtucker in which each point is assigned either $+1$ or $-1$ for
\emph{every} dimension $i$, and thus each point has $N$ labels. That is, the
function $\lambda : G \rightarrow \{-1, +1\}^N$ now returns a vector, such that
$[\lambda(x)]_i$ tells us whether the point has label $+1$ or $-1$ in dimension $i$. The antipodality condition on the boundary can again be formulated as $\lambda(\overline{x}) = - \lambda(x)$ (for $x \in \partial G$).
A solution is a set of $N$ points $x^1, x^2,
\dots, x^N$, that are all within $L_\infty$ distance 1 of each other, and that cover
all labels, meaning that for each dimension $i \in [N]$ there exists a point $x^{\ell_1}$ with
$[\lambda(x^{\ell_1})]_i = +1$ and a point $x^{\ell_2}$ with $[\lambda(x^{\ell_2})]_i = -1$. See
\cref{def:ndritucker} for the formal definition.

Intuitively, in \ndritucker the label $\lambda(x)$ carries much more information than in \ndtucker. Indeed, in a certain sense, the label at some point $x$ now has to pick a direction in \emph{each} dimension $i \in [N]$, and cannot remain ``neutral'' in some dimension. This is exactly what our reduction to $\eps$-\conhalv requires.

We show that \ndritucker is \ppa/-complete, even when the
side-length of the grid is equal to $8$ in all dimensions. We then use
\ndritucker in the reduction to $\eps$-\conhalv. This averts the
problems mentioned above, since now each sum $L(i)$ consists of summands that
are $+1$ and $-1$, and thus $L(i)$ will not be (constantly) close to zero unless both $+1$ and $-1$ appear as labels in dimension $i$.

To show hardness for \ndritucker, we reduce from \twodtucker. We first show
that \twodtucker reduces to \twodritucker by a fairly direct reduction that maps
each of the labels $-2, -1, 1, 2$ from \twodtucker to one of the four
possible vector labels in \twodritucker. Such a simple mapping is not possible
in higher dimensions, however, and so we then use the hardness of \twodritucker
to show hardness for \ndritucker. Here we use a careful adaptation of
the \emph{snake embedding} idea that was used to reduce \twodtucker to \ndtucker in Work 2.
This construction allows us to decrease the width of one of the dimensions by a constant fraction, 
by introducing a new dimension (of small width) and folding the instance within this new
dimension (see \cref{fig:snake-emb}). While this type of embedding has
been used in the past, a fresh construction is needed in our case to deal with
the fact that all points have $N$ labels in an \ndritucker instance.

The \ppa/-hard instances of \twodtucker have exponential width in both
dimensions. 
Repeatedly applying the snake embedding allows us to reduce this to an instance
in which all dimensions
have width $8$. As it turns out, in our final reduction to $\eps$-\conhalv,
the constant 
width of the instance is not strictly necessary (an instance with polynomial widths would
suffice), but we believe that the hardness for constant width may have
applications elsewhere.

\paragraph{\bf Challenge 2: Sampling.}

With this more powerful Tucker problem in hand, one could hope to obtain hardness for constant $\eps$ by simply replacing \ndtucker by \ndritucker in the reduction of Work 3. Unfortunately, there is another point in that reduction that relies on inverse polynomial $\eps$: the sampling. Indeed, that work makes use of arithmetic gates and the so-called equi-angle sampling technique, which has been used in the past to prove \ppad/-hardness for the Nash equilibrium problem \citep{ChenDT09-Nash}. Unfortunately, this sampling technique cannot be combined with constant-error-arithmetic gates. Since we have to take a polynomial number of samples (recall that the stray cuts force us to do this), and the error in each gate is constant, we will not obtain enough distinct samples to ensure that most of them are correctly decoded.

In order to overcome this obstacle, we switch to using Boolean gates, instead of arithmetic gates, like the two original works (Works 1 and 2), while keeping all the other major simplifications introduced by Work 3. Thinking in terms of Boolean gates allows us to construct a very simple, yet very powerful sampling gadget. We subdivide the input region $I$ into subregions $I_1, \dots, I_N$, one for each dimension. The idea is that the $i$th coordinate of the encoded points will be extracted from $I_i$. Next, we subdivide $I_i$ into $7N$ subregions $I_{i,1}, \dots, I_{i,7N}$. We essentially read one bit from each of those $7N$ subregions and interpret the resulting bitstring as the unary representation of a number in $[7N]$. Then, this number is scaled down to lie in $[8]$, in order to correspond to a coordinate in $G = [8]^N$. The crucial point is that we read the coordinate in unary representation and with more precision than actually needed. Thus, even if $N$ bits fail, the final number in $[8]$ will move by at most $1$.

With this simplified sampling technique in hand, it is now possible to reduce to $\eps$-\conhalv for some constant $\eps > 0$.

\paragraph{\bf Challenge 3: Optimizing $\boldsymbol{\eps}$.}

Our final challenge is to push the reduction technique introduced in Works 1 and 2, and simplified in Work 3, to its limits, by trying to obtain hardness for the largest possible value of $\eps$. This effort results in a streamlined reduction that still follows the high-level structure presented above, but where each individual component is as lightweight as possible. Some note-worthy points are:
\begin{itemize}
    \item Switching to Boolean gates, which was very useful to overcome the previous challenge, now becomes a necessity when one is interested in obtaining large constant values of $\eps$. In particular, with the new sampling approach introduced above, the width of \ndritucker does not limit how much we can increase $\eps$. In other words, improving the \ppa/-hardness of \ndritucker to grids of width less than $8$ would not yield an improvement to the $\eps$ we obtain.
    \item Our reduction ends up only using two types of gates: NOT and NAND. Each of these two gates can be implemented by a single agent. The use of NAND instead of AND is not significant, but just for convenience (AND would require creating a NAND gate and then using a NOT gate on its output).
    \item The natural construction of the NAND gate requires $\eps < 1/7$. In order to improve this to $\eps < 1/5$, we eliminate one of the key components introduced in Work 3, the so-called constant creation region, and replace it by an ad-hoc argument which involves arguing about the parity of the number of cuts. This kind of argument is more reminiscent of Works 1 and 2.
\end{itemize}
Putting all these optimizations together, we obtain the reduction presented in \cref{sec:main-red}, which proves \ppa/-hardness of $\eps$-\conhalv for any constant $\eps < 1/5$. The construction also provides a satisfying explanation for why we cannot go above $1/5$ with current techniques. Indeed, it turns out that for NOT gates $\eps < 1/3$ would suffice, but it is the NAND gates which require $\eps < 1/5$. Other parts of the reduction would also work with $\eps < 1/3$. Thus, the NAND gates are clearly identified as the bottleneck for improving $\eps$. More generally, it can be seen that any gate that combines two bits into one in some non-trivial way (e.g., not just copying the first input bit), will require $\eps < 1/5$. Nevertheless, this limitation could be lifted if the reduction was able to handle more than $N$ stray cuts. None of the existing works provide a way to handle this, since all of them crucially rely on there being at most $N$ stray cuts. Indeed, if there are $N+1$ stray cuts, then we can no longer argue that if a stray cut affects our circuits, then we are on the boundary of $G$.

\section{Hardness of \ritucker}

In this section we introduce a new problem, \ndritucker, and we show that it is
\ppa/-hard for any $N \geq 2$. Our reduction from \ndritucker to $\eps$-\conhalv
in \cref{sec:main-red} will also show that the problem is in \ppa/, and
hence the problem is \ppa/-complete.

\subsection{Useful Terminology and Auxiliary Results}
We begin by introducing some notation.
Consider points $\vb{z}_1, \dots, \vb{z}_r$ and a labelling $\lambda$, such that for any $j \in [r]$ we have $\lambda(\vb{z}_{j}) \in \{- 1, +1\}^N$. We say that $\vb{z}_1, \dots, \vb{z}_r$ \textit{cover all labels} if for all $i \in [N]$ and $v \in \{-1, +1\}$ there exists a $j \in [r]$ with $[\lambda(\vb{z}_j)]_i = v$.
Consider an $N$-dimensional grid $[m_1] \times \dots \times [m_N]$ of  points  and a labelling $\lambda : [m_1] \times \dots \times [m_N] \to L$ for some co-domain $L$. The \textit{antipodal point} of a point $\vb{x} = (x_1, \dots, x_N)$ that lies on the boundary of the grid (i.e., $x_i = 1$ or $x_{i} = m_i$ for some $i$) is the point $\overline{\vb{x}} := (m_1 - x_1 + 1, \dots, m_N - x_N + 1)$. We say that the labelling satisfies \textit{antipodality} if $\lambda(\overline{\vb{x}}) = - \lambda(\vb{x})$ for every $\vb{x}$ on the boundary.

We now present an auxiliary lemma that will be useful in this and the following section.
\begin{lemma}\label{lem:r-to-N}
    Consider $r \geq N+1$ points $\vb{z}_1, \dots, \vb{z}_{r}$ and a labelling $\lambda$, such that for any $j \in [r]$ we have $\lambda(\vb{z}_{j}) \in \{- 1, +1\}^N$. If these points cover all labels, then there exists a subset of these points of size at most $N$ that covers all labels. Furthermore, we can recover these at most $N$ points in polynomial time.
\end{lemma}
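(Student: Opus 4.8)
We have $r \geq N+1$ points whose labels are vectors in $\{-1,+1\}^N$, and collectively they cover all $2N$ "atomic" labels (namely $+1$ and $-1$ in each of the $N$ dimensions). We want to find a subcollection of exactly $N$ of these points that still covers all $2N$ atomic labels.

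The key observation is this: think of each point $\mathbf{z}_j$ as a subset $A_j \subseteq \{-1,+1\}^N_{\text{atomic}}$ — actually, more useful is to note that each point covers exactly one of $\{+1,-1\}$ in each dimension. So if we pick $\mathbf{z}_j$ and its "complement" behavior... Let me think differently. Here is the natural greedy/pairing argument. Since the points cover all labels, we can pick, for each dimension $i$, some point $p_i$ with $[\lambda(p_i)]_i = +1$ and some point $q_i$ with $[\lambda(q_i)]_i = -1$. This naively uses $2N$ points. But observe that a single point $\mathbf{z}_j$ simultaneously assigns a value in every dimension, so it can serve as the "witness" for $N$ different atomic labels at once. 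The task is to choose $N$ points so that, across all of them, both $+1$ and $-1$ appear in each coordinate.

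**Main idea.** Consider the following: among the $r$ points, if there exist two points $\mathbf{z}_a, \mathbf{z}_b$ that are coordinate-wise complementary, i.e. $\lambda(\mathbf{z}_a) = -\lambda(\mathbf{z}_b)$, then $\{\mathbf{z}_a, \mathbf{z}_b\}$ alone covers all labels, and we can pad it with $N-2$ arbitrary points (assuming $N\geq 2$; the case $N=1$ is trivial since then $r\geq 2$ and two points already cover $\{+1,-1\}$, and we can take any $N=1$ of... wait we need exactly $N=1$, so we'd need a single point covering both, impossible — but for $N=1$, "covers all labels" forces two points with opposite labels to both be present, and the lemma as stated with an "$N$-subset" $=1$-subset would be false; so implicitly $N \geq 2$, which matches the "$N \geq 2$" hypothesis of the section). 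In general, though, we should not expect complementary pairs to exist. So the cleaner route is an inductive/iterative argument: start with the full set of $r$ points (which covers all labels) and repeatedly discard a point while maintaining coverage, until exactly $N$ points remain. A point $\mathbf{z}_j$ is \emph{removable} from the current set $T$ if $T \setminus \{\mathbf{z}_j\}$ still covers all labels, i.e. if every atomic label witnessed by $\mathbf{z}_j$ is also witnessed by some other point in $T$.

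**The crux.** I claim that whenever $|T| \geq N+1$ and $T$ covers all labels, some point in $T$ is removable. Suppose not: every $\mathbf{z}_j \in T$ is the \emph{unique} witness in $T$ for at least one atomic label $(i_j, v_j)$. Map each $\mathbf{z}_j$ to such an atomic label $(i_j, v_j)$; distinct points must map to distinct atomic labels (if $\mathbf{z}_a,\mathbf{z}_b$ both uniquely witnessed the same atomic label, neither would be the unique witness). Hence $|T| \leq 2N$, which is consistent with $|T| \geq N+1$, so this alone is not a contradiction — I need to squeeze harder. The sharper point: if $\mathbf{z}_j$ uniquely witnesses $(i_j, +1)$ (say), then \emph{no other point in $T$} has $+1$ in coordinate $i_j$; so all points in $T \setminus \{\mathbf{z}_j\}$ have $-1$ in coordinate $i_j$, and coverage of $(i_j,-1)$ is automatic. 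This means each coordinate $i$ can contribute at most \emph{one} "forced uniqueness" constraint, not two: you cannot have both $(i,+1)$ uniquely witnessed and $(i,-1)$ uniquely witnessed unless $|T| = 2$ (one point all-plus-in-$i$, rest all-minus; but then both the plus-witness is unique and the minus is witnessed by everyone else, so only one uniqueness per coordinate). Therefore the injective map from $T$ into atomic labels in fact lands in a set of size at most $N$ (at most one atomic label per coordinate), giving $|T| \leq N$ — contradicting $|T| \geq N+1$. This is exactly the argument; I need to be careful writing the "one uniqueness per coordinate" step cleanly.

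**Algorithm and polynomial time.** The recovery algorithm is the obvious one: maintain $T$, initialized to all $r$ points; while $|T| > N$, scan for a removable point and delete it. Each scan checks, for each point, each of the $N$ coordinates, whether another point in $T$ duplicates that coordinate-value — this is polynomial in $r$ and $N$, and there are at most $r - N$ iterations, so the whole procedure is polynomial. The correctness follows from the claim above (a removable point always exists until $|T| = N$) together with the invariant that $T$ covers all labels throughout. I'd present the claim as a small sub-lemma, prove it by the counting/injection argument, and then state the algorithm. The main obstacle — and the only real content — is nailing the "at most one forced-unique atomic label per coordinate" step; everything else is bookkeeping.
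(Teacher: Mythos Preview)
Your proposal is correct, and it takes a genuinely different route from the paper's proof. The paper proceeds in two explicit stages: first it greedily extracts an $(N+1)$-subset that covers all labels (fix an arbitrary point $\vb{z}^*_1$, then for each coordinate $i$ pick some $\vb{z}^*_{i+1}$ disagreeing with $\vb{z}^*_1$ in coordinate $i$), and then it argues by pigeonhole that one of the $N+1$ many $N$-subsets of this set must still cover all labels (defining, for each $N$-subset, the least coordinate on which it fails, and observing that two subsets failing on the same coordinate would force the full $(N+1)$-set to fail there too). Your argument instead runs a single uniform loop: repeatedly delete a removable point until $|T|=N$, with the existence of a removable point guaranteed by your injection-plus-counting argument (each non-removable point uniquely witnesses some atomic label, and at most one atomic label per coordinate can be uniquely witnessed once $|T|\geq 3$, so $|T|\leq N$). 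Your approach is arguably cleaner conceptually, since it avoids the two-phase structure and handles arbitrary $r$ directly; the paper's approach has the minor practical advantage that after the $O(r)$ greedy pass it only needs to inspect $N+1$ candidate subsets rather than iterate up to $r-N$ times. Both are comfortably polynomial, and both implicitly rely on $N\geq 2$, which is the regime in which the lemma is applied.
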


\begin{proof}
    Consider the set $T := \{ \vb{z}_1, \dots, \vb{z}_{r} \}$.
    We will first show that there exists a multiset of $T$ with cardinality $N+1$, namely, $\vb{z}^*_1, \dots, \vb{z}^*_{N+1}$, that covers all labels. Consider an arbitrary point from $T$ to serve as our desired $\vb{z}^*_1$, without loss of generality $\vb{z}_1$. Its label is $\lambda(\vb{z}_1)$. Then, find a point $\vb{z}^*_2 \in T$ such that $[\lambda(\vb{z}^*_2)]_1 = -[\lambda(\vb{z}_1)]_1$. Next, find a point $\vb{z}^*_3 \in T$ such that $[\lambda(\vb{z}^*_3)]_2 = -[\lambda(\vb{z}_1)]_2$. Similarly, for $j \in \{4, 5, \dots, N+1\}$ find a point $\vb{z}^*_j \in T$ such that $[\lambda(\vb{z}^*_j)]_{j-1} = -[\lambda(\vb{z}_1)]_{j-1}$. Since $T$ covers all labels, this procedure is well-defined. The multiset $\{ \vb{z}^*_1, \dots, \vb{z}^*_{N+1} \}$ we thus obtain has cardinality $N+1$, and covers all labels.
    
    Now consider the distinct elements of the aforementioned multiset, i.e., the set $S := \{ \vb{z}^*_1, \dots, \vb{z}^*_{N+1} \}$. If $|S| \leq N$, then we are done. It remains to handle the case where $|S| = N+1$.
    For the sake of contradiction, assume there is no $N$-subset of $S$ that satisfies the claim of the lemma. Let us create all $N$-subsets of $S$ as follows:
    \begin{align*}
        S_{j} = S \setminus \{\vb{z}^*_j\}, \quad j \in [N+1].
    \end{align*}
    Now consider the function $f : \{ S_1, \dots, S_{N+1} \} \to [N]$ defined as:
    \begin{align*}
        f(S_{j}) = \min \{i \in [N] : [\lambda(\vb{z}^*_{k})]_{i} = [\lambda(\vb{z}^*_{\ell})]_{i} \quad \forall k,\ell \neq j\}.
    \end{align*}
    Note that $f$ is well-defined since, by assumption, every $S_j$ has such a minimum index. Then, by the pigeonhole principle, this function maps two elements $S_{j'}, S_{j''}$ of its domain to the same value $i' \in [N]$, i.e. $f(S_{j'}) = f(S_{j''}) = i'$. Therefore, the set of points $S_{j'} \cup S_{j''}$ also has the property that the $i'$-th coordinate of all its points' labels has the same value in $\{ -1, +1 \}$. But by definition of the $S_{j}$'s, we have $S_{j'} \cup S_{j''} = S$, thus our initial assumption that the points of $S$ cover all labels does not hold (the label-coordinate $i'$ is not covered), which is a contradiction.
    
    To find the set $S$ we need to check $r$ many points in the worst case. To recover from $S$ the required $N$-subset we need to check at most all of its $N+1$ many $N$-subsets. Considering the polynomial time that the labelling circuit $\lambda$ needs in order to provide us with the requested labels of the points in the above procedure, we conclude that the overall time to recover the desired $N$ points is polynomial. 
\end{proof}

We now formally define \ndritucker.
\begin{definition}[\ndritucker]
\label{def:ndritucker}
An instance of \ndritucker consists of a labelling $\lambda : [m_1] \times \dots \times [m_N] \to \{- 1, +1\}^N$ (represented by a Boolean circuit) that satisfies antipodality. A solution consists of $N$ points\footnote{The points do not need to be distinct.} $\vb{z}_1, \dots, \vb{z}_N$ that cover all labels, and such that $|| \vb{z}_{j} - \vb{z}_{k}||_{\infty} \leq 1$ for all $j,k \in [N]$.
\end{definition}

The following theorem states that \ndritucker always has a solution.

\begin{theorem}
Let us have an $N$-dimensional grid $[m_1] \times \dots \times [m_N]$ of points and a labelling $\lambda : [m_1] \times \dots \times [m_N] \to \{- 1, +1\}^N$ that satisfies antipodality. Then, there exist $N$ points $\vb{z}_1, \dots, \vb{z}_N$ that cover all labels such that $|| \vb{z}_{j} - \vb{z}_{k}||_{\infty} \leq 1$ for all $j,k \in [N]$.
\end{theorem}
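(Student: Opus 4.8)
The plan is to derive this existence result from the classical Tucker lemma applied to a suitable scalar labelling extracted from the vector labelling $\lambda$. First I would choose a dimension $i^* \in [N]$ and look at the $i^*$-th coordinate $[\lambda(\cdot)]_{i^*} : [m_1] \times \dots \times [m_N] \to \{-1,+1\}$. This scalar labelling inherits antipodality on the boundary from $\lambda$, but a single scalar labelling does not by itself give neighbouring points with opposite $i^*$-values unless we set things up as an honest Tucker instance; the cleanest route is therefore to build, from $\lambda$, a labelling into $\{\pm 1, \dots, \pm N\}$ (the co-domain of the $N$-dimensional Tucker lemma) and invoke Tucker's lemma in the form that guarantees two points within $L_\infty$-distance $1$ receiving opposite labels. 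Concretely, I would assign to each point $\vb{x}$ a label in $\{\pm 1,\dots,\pm N\}$ whose absolute value records the first coordinate in which $\lambda(\vb{x})$ ``disagrees with a fixed reference pattern'' and whose sign records that coordinate's $\pm 1$ value; antipodality of $\lambda$ translates into the required antipodality of this scalar labelling on $\partial G$ (after padding the grid with an extra boundary layer if the version of Tucker we cite needs it).

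Next I would apply the high-dimensional Tucker lemma (the topological statement underlying \ndtucker, e.g.\ \citet{tucker1945some}) to this scalar labelling, obtaining two grid points $\vb{p}, \vb{q}$ with $\|\vb{p}-\vb{q}\|_\infty \le 1$ and opposite scalar labels, say $+t$ and $-t$ for some $t \in [N]$. By construction this means $[\lambda(\vb{p})]_t = +1$ and $[\lambda(\vb{q})]_t = -1$, so $\{\vb{p},\vb{q}\}$ covers the $t$-th label coordinate. To cover \emph{all} coordinates I would run this argument for an appropriate collection of reference patterns / dimensions, collecting a bounded family of points that are pairwise within $L_\infty$-distance $1$ and together cover all $N$ label coordinates. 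The number of points produced this way may exceed $N$, but it is at most some fixed polynomial in $N$, and then \cref{lem:r-to-N} lets me extract an $N$-subset that still covers all labels while remaining pairwise within distance $1$ — this is exactly why that lemma was stated.

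The key subtlety, and the step I expect to be the main obstacle, is arranging the scalar-labelling reduction so that (a) antipodality is genuinely preserved — a naive ``first disagreeing coordinate'' rule need not be antipodal because the reference pattern is not antipodal to itself, so one likely needs to handle the boundary by a symmetric rule or by embedding into a grid with an odd extra coordinate — and (b) a single application of Tucker yields points that cover a \emph{prescribed} coordinate rather than an uncontrolled one, so that repeating the construction covers all of $[N]$ rather than possibly the same coordinate every time. Once the reduction to scalar Tucker is set up correctly, the covering-plus-pruning part via \cref{lem:r-to-N} is routine, and the $L_\infty$-distance-$1$ condition is automatic because each Tucker pair satisfies it and, after pruning, we only keep points from a single such construction (or we take the closure of a bounded cluster, all within distance $1$ of a common base point). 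I would keep the construction constructive so that the ``polynomial-time recovery'' aspect, which matters for the computational version in later sections, comes for free.
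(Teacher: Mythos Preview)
Your proposal has a genuine gap at the step where you ``run this argument for an appropriate collection of reference patterns / dimensions, collecting a bounded family of points that are pairwise within $L_\infty$-distance $1$.'' A single application of Tucker's lemma to any scalar labelling derived from $\lambda$ yields one pair $(\vb{p},\vb{q})$ with $\|\vb{p}-\vb{q}\|_\infty \le 1$ covering one coordinate $t \in [N]$. A second application, with a different encoding designed to target a different coordinate $t'$, yields a second pair $(\vb{p}',\vb{q}')$ --- and there is no reason whatsoever for $\vb{p}'$ to be near $\vb{p}$: the two Tucker solutions can lie in entirely different regions of the grid. Your parenthetical fix, ``we only keep points from a single such construction,'' then covers only one coordinate; the alternative, ``closure of a bounded cluster, all within distance $1$ of a common base point,'' has no foundation, since you have produced no such common base point. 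This is not a mere encoding subtlety (unlike the antipodality issue you flag, which might indeed be patchable); it is precisely why the vector-valued statement is genuinely stronger than a conjunction of $N$ scalar ones. \cref{lem:r-to-N} only prunes a family that is \emph{already} pairwise close and covers all labels --- it cannot manufacture closeness.

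The paper's own proof is entirely indirect: it invokes the reduction from \ndritucker to $\eps$-\conhalv given in \cref{sec:main-red}, which places the problem in \ppa/ and hence in \tfnp/, so a solution must exist. The paper also remarks that a direct argument is available via Borsuk--Ulam: multilinearly interpolate $\lambda$ to a continuous map $\Lambda$ from the solid cube to $[-1,1]^N$, observe that $\Lambda$ is odd on the boundary, and apply (the ball version of) Borsuk--Ulam to obtain a zero of $\Lambda$. The grid vertices of the unit cube containing that zero must then cover all labels, since each coordinate of $\Lambda$ there is a convex combination of the vertex values in that coordinate, and \cref{lem:r-to-N} trims those vertices down to $N$ points. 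This is close in spirit to what you attempted, but the crucial difference is that it uses one global $N$-dimensional topological statement rather than trying to glue together $N$ separate one-dimensional ones.
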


\begin{proof}
    The proof of existence is indirect, and comes from the proof of \ppa/-inclusion of \ndritucker presented in \cref{sec:main-red}. In the aforementioned section we prove that \ndritucker reduces to $\eps$-\conhalv for any constant $\eps < 1/5$. And by the fact that $\eps$-\conhalv is in \ppa/ \citep{FRG18-Consensus}, we get the required inclusion. Alternatively, one could also reduce the problem to some version of Borsuk-Ulam by taking an appropriate continuous interpolation of the labelling.
\end{proof}

\subsection{The Reduction}

In this section we show the following theorem.

    \begin{theorem}\label{thm:ST-hardness}
    	\ndritucker is \ppa/-complete even when $m_i = \wt$ for all $i \in [N]$.
    \end{theorem}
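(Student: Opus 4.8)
The plan is to chain the components developed above into a single polynomial-time reduction from \twodtucker to \ndritucker with all side-lengths equal to \wt, and then combine it with the membership argument. We start from a \ppa/-hard instance of \twodtucker of width $m = 2^M$ \citep{AisenbergBB20-2D-Tucker} and apply the direct translation described above to obtain an equivalent \twodritucker instance of width $2^M$. We then iterate the snake-embedding step: while some dimension has width exceeding \wt, we pick one of maximum width, bring it to a width of the form $3s+1$ using the ray-duplication rules (which are designed so that antipodality is preserved), and fold, producing an instance with one new dimension of width exactly \wt and with the folded dimension shrunk to width $s+3 \le \ceil{m_i/3}+4$. Since each fold drops the largest width by roughly a factor of three and the two original dimensions both start at $2^M$, after $O(\log m)$ folds all dimensions have width at most \wt; a final clean-up step --- symmetric ray padding together with the constant number of extra foldings mentioned in the overview --- brings the two original dimensions to width exactly \wt as well, yielding an \ndritucker instance with $m_i = \wt$ for all $i$ and $N = O(M) = O(\log m)$.

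For correctness I would compose, over all folding steps, the solution-preservation property already proven above. For each $k \ge 2$, the claim established above shows that any solution $S'$ of the folded $(k+1)$-dimensional instance $I^*_{ST}$ uses no point of the bottom or top cap, hence maps under the point-surjection defined above to a set $S^*$ of at most $k+1$ points of $I_{WST}$ that covers all labels and has $L_\infty$-diameter at most $1$; when $|S^*| = k+1$, \cref{lem:r-to-N} then extracts a $k$-subset of $S^*$ that is a genuine solution of $I_{WST}$, and hence of $I_{ST}$. Unwinding these maps through the $O(\log m)$ folds and inverting the direct \twodtucker-to-\twodritucker translation converts any solution of the final \ndritucker instance into a solution of the original \twodtucker instance.

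It then remains to observe that the whole construction is polynomial. Each folding step manipulates the labelling circuit in an explicit, local way and runs in polynomial time, so after $O(\log m)$ of them the composed circuit still has polynomial size even though the grid $[8]^N$ has $8^N = 8^{O(\log m)} = m^{O(1)}$ points; similarly, recovering a \twodtucker solution from a solution of the final instance is the composition of $O(\log m)$ polynomial-time maps, and so is polynomial. This establishes \ppa/-hardness of \ndritucker with all widths equal to \wt. Membership of \ndritucker in \ppa/ follows from the reduction to $\eps$-\conhalv given in \cref{sec:main-red} (valid for any constant $\eps < 1/5$) together with the membership of $\eps$-\conhalv in \ppa/ \citep{FRG18-Consensus}; combined with the hardness just argued, this gives the theorem.

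The step I expect to require the most care is the width recursion and the final clean-up. One must verify that every intermediate object is a legal \ndritucker instance --- in particular that antipodality survives each ray-duplication and each fold --- that the number of folds really is $O(\log m)$ so that $N$ stays logarithmic and the construction polynomial, and that the parity constraints inherent in symmetric padding do not leave some dimension stuck at a width other than \wt; this last point is exactly what the extra foldings handle. By contrast, the conceptually delicate ingredient, namely that a single fold introduces no spurious solutions because any candidate solution touching a cap fails to cover the $(k+1)$-st label-coordinate, has already been dispatched in the correctness argument above.
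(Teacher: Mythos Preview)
Your proposal is correct and follows essentially the same approach as the paper: reduce \twodtucker to \twodritucker directly, iterate the snake-embedding fold on a widest dimension until all widths are at most \wt, then pad and fold once more per original dimension to hit width exactly \wt, with \ppa/-membership coming from the reduction in \cref{sec:main-red}. The one detail the paper makes explicit that you only gesture at under ``parity constraints'' is the invariant that the folded width stays even throughout (since $m_i^{t-1}=3s+1$ even forces $m_i^t=s+3$ even, and the starting width $2^M+2p$ is even), which is precisely what lets you symmetrically pad to $16$ and then fold down to $8$; you correctly flagged this as the delicate point.
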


The remainder of this section is devoted to proving this theorem.
The reduction consists of two steps. 
We first show that the 2D
version of the problem is hard, and then we show that hardness for the 2D case
implies hardness for higher dimensional instances. 
The following theorem shows hardness for \twodritucker via a direct reduction from
\twodtucker.

\begin{theorem}
	\twodritucker is \ppa/-complete.
\end{theorem}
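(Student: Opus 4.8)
The plan is to reduce \twodtucker to \twodritucker, mapping the four \twodtucker-labels $\{+1,-1,+2,-2\}$ bijectively onto the four vectors in $\{-1,+1\}^2$ in an antipodality-respecting way, and then translating a \twodritucker solution back into a \twodtucker solution. Membership in \ppa/ will follow from the reduction to $\eps$-\conhalv in \cref{sec:main-red}, so here I only need \ppa/-hardness.

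First I would fix the label encoding. Given a \twodtucker instance $\mu : [m]\times[m] \to \{\pm1,\pm2\}$, define $\lambda(\vb{x}) \in \{-1,+1\}^2$ by sending $+1 \mapsto (+1,+1)$, $-1 \mapsto (-1,-1)$, $+2 \mapsto (+1,-1)$, $-2 \mapsto (-1,+1)$; equivalently $[\lambda(\vb{x})]_1 = \sign(\mu(\vb{x}))$ and $[\lambda(\vb{x})]_2 = \sign(\mu(\vb{x}))$ if $|\mu(\vb{x})|=1$ and $-\sign(\mu(\vb{x}))$ if $|\mu(\vb{x})|=2$. The key point of this particular choice is that it is an odd map: $\mu(\overline{\vb{x}}) = -\mu(\vb{x})$ translates into $\lambda(\overline{\vb{x}}) = -\lambda(\vb{x})$, because negating the \twodtucker label both flips the sign and leaves the magnitude unchanged, which is exactly what negating the vector does under this encoding. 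Hence the boundary antipodality condition of \twodtucker yields the antipodality condition required of a \ndritucker instance with $N=2$. The circuit for $\lambda$ is obtained from the circuit for $\mu$ by a trivial post-processing, so the reduction is polynomial-time computable.

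Next I would show a solution of the constructed \twodritucker instance yields a solution of the original \twodtucker instance. A \twodritucker solution is a pair of points $\vb{z}_1,\vb{z}_2$ with $\|\vb{z}_1-\vb{z}_2\|_\infty \le 1$ that cover all labels, i.e. in coordinate $1$ the two vectors take both values $+1$ and $-1$, and likewise in coordinate $2$. I claim this forces $\lambda(\vb{z}_1) = -\lambda(\vb{z}_2)$: since the two vectors disagree in coordinate $1$ and also disagree in coordinate $2$, and each lives in $\{-1,+1\}^2$, they must be antipodal vectors (the only way two points of $\{-1,+1\}^2$ can differ in both coordinates). Translating back through the bijective encoding, $\lambda(\vb{z}_1) = -\lambda(\vb{z}_2)$ is equivalent to $\mu(\vb{z}_1) = -\mu(\vb{z}_2)$. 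Combined with $|x_1-x_2|\le 1$ and $|y_1-y_2|\le 1$ (which is exactly the $L_\infty$ condition in two dimensions), the pair $\vb{z}_1,\vb{z}_2$ is precisely a \twodtucker solution. This completes the reduction and establishes \ppa/-hardness; together with the membership argument we conclude \ppa/-completeness.

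I do not expect a serious obstacle here: the content is entirely in choosing the encoding so that it is simultaneously a bijection $\{\pm1,\pm2\} \to \{-1,+1\}^2$ and an odd map, and then observing that ``covering all labels'' for a $2$-point set in dimension $2$ is the same as the two labels being antipodal. The only thing to be a little careful about is the direction of the equivalence: covering all labels does \emph{not} in general force antipodal vector-labels for larger point sets or higher $N$ (that is exactly why \cref{lem:r-to-N} and the snake embedding are needed for \ndritucker with $N>2$), but for $N=2$ with exactly two points it does, which is what makes the direct reduction work.
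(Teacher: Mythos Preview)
Your proposal is correct and follows essentially the same approach as the paper: an odd bijection between $\{\pm1,\pm2\}$ and $\{-1,+1\}^2$, followed by the observation that for $N=2$ with exactly two points, ``cover all labels'' is equivalent to having antipodal vector-labels. The only minor difference is that for membership the paper reverses the same encoding to reduce \twodritucker back to \twodtucker directly, rather than deferring to \cref{sec:main-red}; both arguments are valid.
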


\begin{proof}
	\twodtucker is known to be \ppa/-complete \citep{AisenbergBB20-2D-Tucker}. We will reduce this problem to \twodritucker straightforwardly by just translating the labelling $\lambda_{T} : [m] \times [m] \to \{\pm 1, \pm 2\}$ of the former to the labelling $\lambda_{ST} : [m] \times [m] \to \{-1, +1\}^2$ of the latter as follows. For any point $\vb{x}$, if $\lambda_{T}(\vb{x}) = +2$ then $\lambda_{ST}(\vb{x}) = (+1, +1)$, if $\lambda_{T}(\vb{x}) = -2$ then $\lambda_{ST}(\vb{x}) = (-1, -1)$, if $\lambda_{T}(\vb{x}) = +1$ then $\lambda_{ST}(\vb{x}) = (+1, -1)$, and if $\lambda_{T}(\vb{x}) = -1$ then $\lambda_{ST}(\vb{x}) = (-1, +1)$. By definition of the problems, it is immediate that their sets of solutions are identical. 
	
	By reversing the above translation of the labelling, i.e. turning $\lambda_{ST}$ to $\lambda_{T}$ using the same mapping, we get a reduction from \twodritucker to \twodtucker, and hence, the former problem's membership to \ppa/.
\end{proof}

    \paragraph{\bf Overview of the reduction from \twodritucker to \ndritucker.}
	We will reduce \twodritucker with width $m = 2^M$ to \ndritucker with width 8 for some appropriate value of $N = O(M)$. The reduction is, in essence, a careful application of the well-known snake embedding technique \citep{ChenDT09-Nash,FRG19-Necklace} which was used to reduce \twodtucker to a \tucker problem of higher dimension. We have to carefully apply the latter technique for our problem since we need to make sure that no artificial solutions are introduced in the ``folding'' process, and that the folded $k$-dimensional instance in each step is a proper $(k+1)$-dimensional instance, meaning that it preserves antipodality. As a final step, we ensure that all dimensions have width exactly \wt.
	
	The snake embedding technique starts from the $2^M \times 2^M$ \twodritucker instance and at each step performs a ``folding'' on some dimension, decreasing its width to roughly $1/3$ of its size, while creating a new dimension of width \wt. In this way, in roughly $2 \cdot \log_{3}2^M$ foldings we have created an equal amount of extra dimensions of width at most \wt. In general, given a $k$D-\ritucker instance for $k \geq 2$, by performing a folding on its $i$-th dimension, we create a $(k+1)$D-\ritucker instance with new width $m_i' \leq \ceil{\frac{m_i}{3}} + 4$ and an extra $(k+1)$-st dimension of width \wt. Finally, we perform two extra foldings to ensure that our initial dimensions $1$ and $2$ have also width \wt.
	
	\begin{figure}
		\centering
		\includegraphics[scale=1.00]{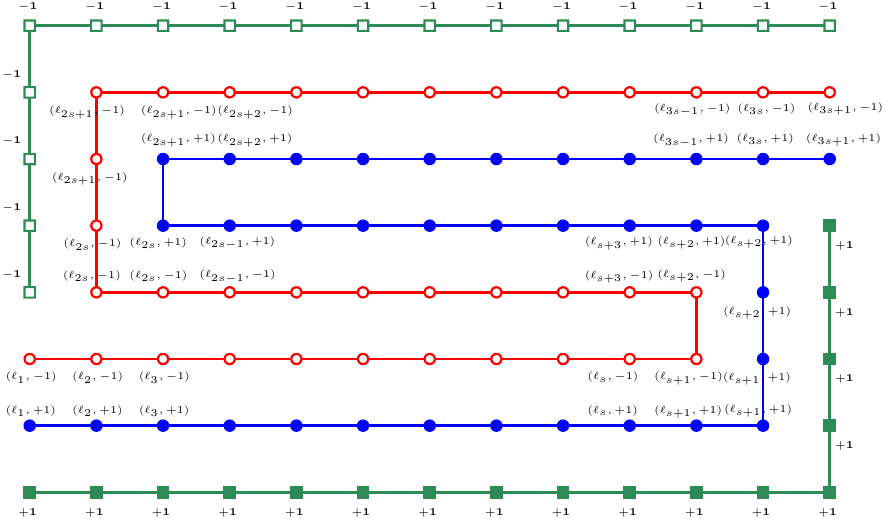}
		\caption{A slice of the $(k+1)$D-\ritucker instance $I^*_{ST}$ for fixed coordinates of all dimensions other than the folding dimension $i$ ($x$-axis) and $k+1$ ($y$-axis). $\ell_{j}$ is the label $\lambda$ of the point with coordinate $j$ in dimension $i$ in the $k$D-\ritucker instance $I_{WST}$. By $\mathbf{+1}$ and $\mathbf{-1}$ we denote the $(k+1)$-dimensional vector (label) with all entries $+1$ and $-1$ respectively.}\label{fig:snake-emb}
	\end{figure}

	\paragraph{\bf Making the width of $\boldsymbol{k}$D-\ritucker suitable for folding.}
	We now describe a general step of the snake embedding, that is, a step where we are given a $k$D-\ritucker instance and we fold it into a $(k+1)$D-\ritucker instance. Pick a dimension of $k$D-\ritucker, without loss of generality $i \in [k]$, that has maximum width $m_{i} > 8$, if any. We will call this the \textit{folding dimension}, and for some $d \in [m_i]$, let us call \textit{$d$-th ray} the set of points of the grid that have coordinate $d$ in that dimension. According to our folding procedure, the $i$-th dimension will have to be of width of the form $3 \cdot s + 1$, for some natural number $s$. Therefore, for width $m_i$ that is not of the aforementioned size, we have to add extra copies of rays in order to bring it to the required width. When we refer to adding copies of rays we mean that we copy sets of points together with their labels. In order to preserve antipodality we have to take care of how many copies of the $1$-st and $m_i$-th ray we will add. To achieve this, instead of adding one ray when needed, we can attach four extra rays, namely, two left of coordinate $1$ and two right of coordinate $m_i$. Let us call the initial $k$D-\ritucker instance $I_{ST}$ and the one with proper width $I_{WST}$.

	Let us use the following set of rules that depend on the size of $m_i$ and preserve antipodality:
	\begin{itemize}
	    \item If $m_i = 3 \cdot s' + 2$ we add one copy of the $1$-st ray left of the $1$-st ray and one copy of the $m_i$-th ray right of the $m_i$-th ray.
	    \item If $m_i = 3 \cdot s' + 1$ we do not need to add any ray.
	    \item If $m_i = 3 \cdot s'$ we add two copies of the $1$-st ray left of the $1$-st ray and two copies of the $m_i$-th ray right of the $m_i$-th ray.
	\end{itemize}
	The above additions of rays ensure that the width of the $i$-th dimension of $I_{WST}$ is $3 \cdot s + 1$ for some $s \in \naturals^*$. 
	
	\paragraph{\bf Copying and folding $\boldsymbol{I_{WST}}$, and creation of the $\boldsymbol{(k+1)}$D-\ritucker instance.}
	In essence, we create two identical (up to the turning points) copies of $I_{WST}$ that we glue together and fold in a snake-like shape. 
	Let us call \textit{bottom snake} the bottom layer of $I_{WST}$ as appears in \cref{fig:snake-emb} (blue/shaded-circle layer), and \textit{top snake} the top layer of $I_{WST}$ (red/hollow-circle layer). Then we need to take care of the turns of $I_{WST}$ so that they do not introduce artificial solutions. To achieve this, it suffices that the bottom snake is formed by copying the $(s+1)$-st and $(s+2)$-nd rays two times, and the top snake is formed by copying the  $2s$-th and $(2s+1)$-st rays two times. Then, the folding in the $i$-th dimension of the bottom and top snakes is as demonstrated in \cref{fig:snake-emb}.

	Next, we add extra rays below the bottom snake and above the top snake (green/shaded-squares and green/hollow-squares respectively in \cref{fig:snake-emb}). In particular, we add rays whose coordinates in the $i$-th and $(k+1)$-st dimensions are
	\begin{itemize}
	    \item $(j, 1)$ for all $j \in \{ 1, \dots, s+2 \}$ and $(s+3, m)$ for all $m \in \{ 1, \dots, 5 \}$, which consist the \textit{bottom cap}, and symmetrically,
	    \item $(j, 8)$ for all $j \in \{ 2, \dots, s+3 \}$ and $(1, m)$ for all $m \in \{ 4, \dots, 8 \}$, which consist the \textit{top cap}.
	\end{itemize}
	Let us call $I^*_{ST}$ the resulting $(k+1)$D-\ritucker instance. From the described folding procedure, we conclude that by folding the $i$-th dimension of $I_{WST}$ for which $m_i = 3 \cdot s + 1$, we generate $I^*_{ST}$ which has an extra $(k+1)$-st dimension of width \wt and its $i$-th dimension has now width $m'_{i} = s + 3$ (see \cref{fig:snake-emb}). 
	
	\paragraph{\bf Mapping points of $\boldsymbol{I^*_{ST}}$ to points of $\boldsymbol{I_{WST}}$.}
	From the construction so far, we can determine a surjection of points of the bottom and top snakes in $I^*_{ST}$ to points in $I_{WST}$. We only need such a surjection because, as we will show later, no point of the bottom or top cap can participate in a solution of $I^*_{ST}$. We will map the ray $(j,m)$ corresponding to the coordinates of the $i$-th and the $(k+1)$-st dimensions of $I^*_{ST}$ to the $t$-th ray corresponding to the coordinate of the $i$-th dimension of $I_{WST}$. When we say that we map ray $r_1$ to ray $r_2$ we imply that any point in $r_1$ with fixed coordinates in the $k-1$ of its dimensions maps to the point of $r_2$ with the same coordinates of these $k-1$ dimensions. The surjection is as follows.
	\begin{itemize}
	    \item $(j, m)$ for $j \in \{ 1, \dots, s+1 \}$ and $m \in \{ 2, 3 \}$ maps to $t = j$.
	    \item $(s+2, m)$ for $m \in \{ 2, 3 \}$ maps to $t = s+1$.
	    \item $(s+2, m)$ for $m \in \{ 4, 5 \}$ maps to $t = s+2$.
	    \item $(j, m)$ for $j \in \{ 3, \dots, s+1 \}$ and $m \in \{ 4, 5 \}$  maps to $t = 2s+3-j$.
	    \item $(2, m)$ for $m \in \{ 4, 5 \}$ maps to $t = 2s$.
	    \item $(2, m)$ for $m \in \{ 6, 7 \}$ maps to $t = 2s+1$.
	    \item $(j, m)$ for $j \in \{ 3, \dots, s+3 \}$ and $m \in \{ 6, 7 \}$ maps to $t = 2s-2+j$.
	\end{itemize}

	\paragraph{\bf Labelling $\boldsymbol{I^*_{ST}}$.}
	Having specified the structure of the $(k+1)$D-\ritucker instance, we have to determine the labels of its points. By the construction described in the previous paragraph, the added rays determine the first $k$ label-coordinates of the points in the bottom and top snakes. The $(k+1)$-st label-coordinate of each point in the two snakes is determined as follows: for the bottom snake its value is $+1$ and for the top snake its value is $-1$. Finally, for all points of the bottom cap the label is $\mathbf{+1}$, i.e., all label-coordinates get value $+1$, and similarly, for all points of the top cap the label is $\mathbf{-1}$.
	
	\paragraph{\bf Correctness of the reduction.}
    So far we have made sure that at each step of the folding procedure the $k$-dimensional instance $I_{ST}$ at hand and also its modified version $I_{WST}$ will be proper $k$D-\ritucker instances. Now we will prove correctness of the reduction by showing that every solution of the final \ndritucker instance corresponds to a solution in the initial \twodritucker instance. We will show this by proving that at every step $k \geq 2$ of the folding procedure, every solution of the $(k+1)$D-\ritucker instance $I^*_{ST}$ corresponds to a solution of the $k$D-\ritucker instance $I_{WST}$.

    Suppose that $S'= \{ \vb{z}'_{1}, \dots, \vb{z}'_{k+1} \}$ is a solution to $I^*_{ST}$. Let us prove the following claim.
    
    \begin{claim}
        No point of $S'$ belongs to the bottom or top cap.
    \end{claim}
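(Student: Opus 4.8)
The plan is to exploit the key structural fact about the caps: every point in the bottom cap has label $\mathbf{+1}$ and every point in the top cap has label $\mathbf{-1}$. So if $S'$ contained a point $\vb{z}'$ of, say, the bottom cap, then to cover all labels the set $S'$ would also need a point whose label has a $-1$ in at least one coordinate — but more importantly, I want to show the cap points are simply too far (in $L_\infty$) from any point that could help cover the labels they fail to cover, or that they create a contradiction with antipodality/coverage. The cleanest route is geometric: show that any $L_\infty$-ball of radius $1$ that contains a bottom-cap point contains \emph{only} points whose labels cannot, together, cover all $N$ labels.

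First I would set up coordinates carefully. Recall that in the slice picture (\cref{fig:snake-emb}) the bottom cap consists of rays with $(i,k{+}1)$-coordinates $(j,1)$ for $j \in \{1,\dots,s+2\}$ together with $(s+3,m)$ for $m \in \{1,\dots,5\}$, and the bottom snake lives at $(k{+}1)$-coordinate $m \in \{2,3\}$ (with the relevant $i$-coordinates), while the top snake lives at $m \in \{6,7\}$ and the top cap at the symmetric locations. A point of the bottom cap therefore has $(k{+}1)$-coordinate $1$ (for the "horizontal" part of the cap) or lies at $i$-coordinate $s+3$ (for the "vertical" part). I would then argue: if $\vb{z}' \in S'$ is in the horizontal part of the bottom cap, its $(k{+}1)$-coordinate is $1$, so any point within $L_\infty$-distance $1$ has $(k{+}1)$-coordinate in $\{1,2\}$, hence lies in the bottom cap or the bottom snake — in particular it is never in the top snake or top cap, so no point of $S'$ can have $(k{+}1)$-label-coordinate $-1$ coming from a top-snake point, and the only points with a $\mathbf{-1}$-heavy label are top-cap points, which are excluded. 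A symmetric analysis handles the vertical part of the bottom cap (here one uses that the $i$-coordinate $s+3$ forces nearby points to $i$-coordinate in $\{s+2,s+3\}$, and checks against the layout that these are all cap/snake points of the appropriate type). Combining, $S'$ cannot cover the $(k{+}1)$-st label-coordinate in the value $-1$, contradicting that $S'$ is a solution.

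The key inference is that the $(k{+}1)$-st coordinate of the label is exactly the diagnostic: bottom-snake points get $(k{+}1)$-label $+1$, top-snake points get $-1$, bottom-cap points get $\mathbf{+1}$, top-cap points get $\mathbf{-1}$; so the only sources of a "$-1$ in coordinate $k{+}1$" are top-snake and top-cap points, and these live at $(k{+}1)$-coordinate $\in \{6,7,8\}$ or (for the vertical part of the top cap) at $i$-coordinate $1$. Any $\vb{z}' \in S'$ in the bottom cap is at $L_\infty$-distance $\geq 2$ from all of those — one checks the two sub-cases (horizontal bottom cap at $(k{+}1)$-coord $1$; vertical bottom cap at $i$-coord $s+3$, $(k{+}1)$-coord $\leq 5$) against the four sub-cases of top-snake/top-cap locations. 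Since all pairs in a solution are within $L_\infty$-distance $1$, no such $\vb{z}'$ can coexist with a point covering the $(k{+}1)$-st label-coordinate value $-1$. By the symmetric argument (swapping top/bottom and $+1/-1$), no point of $S'$ is in the top cap either. This proves the claim.

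The main obstacle I anticipate is purely bookkeeping: one must pin down the precise integer coordinates of every piece of the cap and snake from the (somewhat terse) construction and \cref{fig:snake-emb}, and verify the distance inequalities in every sub-case, including the slightly awkward "corner" rays where the cap turns (the $(s+3,m)$ rays for $m \in \{1,\dots,5\}$ in the bottom cap and their top-cap mirror). There is a genuine risk of an off-by-one error here, so I would state the coordinate ranges explicitly and reduce the verification to: "the $(k{+}1)$-coordinates of bottom-cap points lie in $\{1,\dots,5\}$, those of top-snake/top-cap points lie in $\{6,7,8\}$, hence any two such points are at $L_\infty$-distance $\geq 1$" — and then separately handle the one regime where a bottom-cap point and a top point could share a $(k{+}1)$-coordinate only if the $i$-coordinates differ by at least $2$, which again follows from the layout. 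Once the coordinate picture is fixed, the logical core (the $(k{+}1)$-label-coordinate cannot be covered) is short.
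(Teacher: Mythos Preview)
Your approach is essentially the paper's: if some $\vb{z}' \in S'$ lies in the bottom cap, then every point within $L_\infty$-distance $1$ of $\vb{z}'$ lies in the bottom cap or the bottom snake, hence has $(k{+}1)$-st label-coordinate equal to $+1$, so $S'$ cannot cover the value $-1$ in that coordinate (and symmetrically for the top cap). The paper states exactly this in two sentences and does not spell out the coordinate case analysis you sketch; one small slip in your bookkeeping paragraph is that you need $L_\infty$-distance $\geq 2$ (not $\geq 1$) between a bottom-cap point and any ``$-1$-in-coordinate-$(k{+}1)$'' point to derive a contradiction.
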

    
    \begin{proof}
        Let us first consider the bottom cap. For the sake of contradiction, suppose that a point of $S'$ belongs to the bottom cap. Then, since $|| \vb{z}'_{\ell} - \vb{z}'_{m} ||_{\infty} \leq 1$ for all $\ell, m \in [k+1]$, all other points belong to the union of the bottom snake and the bottom cap. But then, the $(k+1)$-st label-coordinate of all points in $S'$ is $+1$, contradicting the property of $S'$ that its points cover all labels.
        
        Similarly, if one of the points from $S'$ belonged to the top cap, then the labels of all points in $S'$ would have their $(k+1)$-st coordinate equal to $-1$, a contradiction.
    \end{proof}
    
    By the labelling in the folding we have specified earlier, any solution $S'$ has to include at least one point of the bottom snake and at least one point of the top snake, otherwise their $(k+1)$-st label-coordinates would be the same - either $+1$ or $-1$ - contradicting the property of covering all labels. Let us call a point $\vb{z}'$ of $I^*_{ST}$ a \textit{bottom corner point} if it belongs to one of the copied rays of the bottom snake. Similarly, let us call it a \textit{top corner point} if it belongs to one of the copied rays in the top snake. 
    
    Consider for each $\vb{z}'_{j} \in S'$ the point $\vb{z}^*_{j}$ of $I_{WST}$ to which it is mapped according to the respective paragraph above. Let $S^* = \{\vb{z}^*_1, \dots, \vb{z}^*_{k+1}\}$ be the set of the corresponding $k$-dimensional points of $I_{WST}$.
    
    Let the labellings of $I^*_{ST}$ and $I_{WST}$ be denoted by $\lambda'$ and $\lambda$, respectively. 
    According to the mapping of points of the former to points of the latter instance we have defined in the respective paragraph above, it is immediate that if we have $|| \vb{z}'_{\ell} - \vb{z}'_{m} ||_{\infty} \leq 1$ for every $\vb{z}'_{\ell}, \vb{z}'_{m} \in S'$, then $|| \vb{z}^*_{\ell} - \vb{z}^*_{m} ||_{\infty} \leq 1$ for every $\vb{z}^*_{\ell}, \vb{z}^*_{m} \in S^*$. Furthermore, the labellings that we have defined above copy the labellings of the respective points that we map from $I^*_{ST}$. Therefore, if $S'$ covers all labels in $\lambda'$ (which has $k+1$ coordinates) then $S^*$ covers all labels in $\lambda$ (which has $k$ coordinates). Observe that $|S^*| \leq k+1$.
    If $|S^*| = k+1$, from \cref{lem:r-to-N} we get that there is a $k$-subset of $S^*$ which is a solution to $I_{WST}$. Furthermore, this solution can be found in polynomial time from $S^*$.

    \paragraph{\bf Putting everything together.}
    
    We are now ready to finish the proof of \cref{thm:ST-hardness}.
    By repetitions of the cycle $I_{ST} \to I_{WST} \to I^*_{ST}$, we fold the widest dimension of the $k$D-\ritucker starting from $k=2$ until all dimensions' widths are at most \wt. As we showed above, this is guaranteed to happen after linearly many foldings, by the fact that every folding dimension $i$ has width $m_{i} = 3 \cdot s + 1 > \wt$ for some $s \in \naturals^*$, and after the folding its width is $m'_i = s + 3$, while a $(k+1)$-st dimension of width \wt has been created (see \cref{fig:snake-emb}). 
   
    We will now show the final step that makes all the dimensions' widths exactly \wt. Recall that we have started from the $2^M \times 2^M$ \twodritucker instance, and by repeatedly folding dimensions $1$ and $2$ we have generated extra dimensions of width \wt. Therefore, the only dimensions we need to take care of are the aforementioned two. Let us consider dimension $i \in \{1, 2\}$ and recall that after a folding, the folded dimension's width is $s + 3$ for some $s \in \naturals^*$, while initially it was $3 \cdot s + 1$. Let us denote by $m^{t}_{i}$ the width of the dimension after $t$ foldings. As mentioned earlier, our folding technique reduces the size of the folding dimension $i$ from $m^{t-1}_i$ to $m^{t}_i \leq \ceil{\frac{m^{t-1}_i}{3}} + 4 \leq \frac{m^{t-1}_i}{3} + 5$, for any $t \in \naturals^*$. This recursion induces the following inequality: 
    \begin{align*}
        m^{t}_i \leq \frac{m^{0}_i}{3^{t}} + \frac{15}{2} \cdot \left( 1 - \frac{1}{3^t} \right).
    \end{align*}
    Therefore, we can ensure that the left-hand side is at most \wt by forcing the right-hand side to be at most \wt, which can be achieved in at most $\ceil{ \log_{3}(2 \cdot m^{0}_{i} - 15) }$ steps. Before the first folding (i.e. after making the width proper for folding), the width of dimension $i \in \{ 1, 2 \}$ of our initial \twodritucker instance will be $m^{0}_{i} \leq 2^M + 4$, therefore after some number $t^* \leq \ceil{ \log_{3}(2^{M+1} - 7) }$ of foldings we have $m^{t^*}_i \leq 8$, at which point we stop.
    
    If $m^{t^*}_i = 8$ then we are done. If $m^{t^*}_i < 8$, notice that it will necessarily be an even number. That is because, for any given $t \in \naturals^*$, if $m^{t-1}_i = 3 \cdot s + 1$ is even then $m^{t}_i = s + 3$ is even, and we have started with $m^{0}_i = 2^M + 2 \cdot p$ for some $p \in \{ 0, 1, 2 \}$ (to make sure that it is of the form $3 \cdot s + 1$ before the folding, as described in the respective paragraph above). Since $m^{t^*}_i$ is even, let us copy its $1$-st ray $(16 - m^{t^*}_i)/2$ times to the left and its $m^{t^*}_i$-th ray $(16 - m^{t^*}_i)/2$ times to the right, and create a modified instance with $m'_{i} = 16$ by following the procedure described in the respective paragraph above. Recall that this procedure ensures that the modified instance  preserves antipodality. Now perform a final folding which will bring the width from $16 = 3 \cdot 5 + 1$ to $5 + 3 = 8$. 
    
    Finally, inclusion of \ndritucker in \ppa/ comes from the reduction of \ndritucker to $\eps$-\conhalv for any constant $\eps < 1/5$ presented in  \cref{sec:main-red}. As shown by \citet{FRG18-Consensus}, $\eps$-\conhalv is in \ppa/, which implies the required inclusion.

\section{Main Reduction}\label{sec:main-red}

In this section, we prove our main result, \cref{thm:main-result}. Namely, for any constant $\eps < 1/5$, we present a polynomial-time reduction from \ndritucker to $\eps$-\conhalv. In \cref{sec:3-block} we explain how our reduction can be modified to work with 3-block uniform valuations, thus proving \cref{thm:main-result-3-block}.

Fix any $\eps \in [0,1/5)$. Let $\lambda$ be an instance of \ndritucker, i.e., $\lambda: [8]^N \to \{-1,+1\}^N$ is provided as a Boolean circuit. We use $\sz(\lambda)$ to denote the representation size of the Boolean circuit $\lambda$. Note that, in particular, $\sz(\lambda) \geq N$. We show how to construct an instance $\CHinstance$ of $\eps$-\conhalv in time polynomial in $\sz(\lambda)$, such that from any solution of $\CHinstance$ we can extract in polynomial time a solution to $\lambda$.

\subsection{Pre-processing}

\paragraph{\bf Construction of the modified circuit $\boldsymbol{\mlambda}$.} The first step of the reduction is to construct a slightly modified version of $\lambda$, which will be more convenient to work with. First of all, we will not think of bits as lying in $\{0,1\}$, but, instead, in $\{-1,+1\}$. Here, $-1$ will represent bit $0$ (``False''), and $+1$ will represent bit $1$ (``True'').

With this interpretation in mind, the modified circuit, which we denote by $\mlambda$, is defined as follows. The input to $\mlambda$ consists of $7N^2$ bits, that we think of as a matrix $x \in \{-1,+1\}^{N \times 7N}$. We use $x_{i,j} \in \{-1,+1\}$ to denote the $(i,j)$ entry, and $x_i \in \{-1,+1\}^{7N}$ to denote the $i$th row. The circuit outputs $N$ bits representing a label $\{-1,+1\}^N$. On input $x$, the circuit performs the following computations.
\begin{enumerate}
    \item For each $i \in [N]$, compute
    \begin{equation}\label{eq:def-phi}
    \phi_i(x) := \left\lceil\frac{8N + 1/2 + \sum_{j=1}^{7N} x_{i,j}}{2N}\right\rceil \in [8].
    \end{equation}
    \item Compute and output $\lambda(\phi(x)) \in \{-1,+1\}^N$.
\end{enumerate}
In time polynomial in $\sz(\lambda)$ we construct a Boolean circuit $\mlambda$ that performs these computations, and only uses NOT gates and NAND gates. Note that other logical gates can easily be simulated using these two gates.

Intuitively, the circuit $\mlambda$ does the following. For any $i \in [N]$, $x_i \in \{-1,+1\}^{7N}$ is interpreted as representing a number between $1$ and $8$ with precision roughly $1/N$ (in unary representation). That number is then rounded to obtain an integer $ \phi_i(x) \in [8]$. Why do we use more bits than needed to represent a number in $[8]$? The reason is that this representation is robust to flipping a few bits. Indeed, it is easy to check that flipping up to $N$ bits of $x_i \in \{-1,+1\}^{7N}$ changes the value of $\phi_i(x)$ by at most $1$. As a result, we obtain the following:

\begin{claim}\label{clm:at-most-N-bits}
If $x, x' \in \{-1,+1\}^{N \times 7N}$ are such that for all $i \in [N]$, $x_i$ and $x_i'$ differ in at most $N$ bits, then $\|\phi(x) - \phi(x')\|_\infty \leq 1$.
\end{claim}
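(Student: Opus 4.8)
The plan is to prove \cref{clm:at-most-N-bits} directly from the definition of $\phi_i$ in \eqref{eq:def-phi}, by controlling how much each coordinate of $\phi$ can move when up to $N$ bits of the corresponding row change. Fix $i \in [N]$. Since $x_i, x_i' \in \{-1,+1\}^{7N}$ differ in at most $N$ coordinates, and flipping one coordinate changes the value of $\sum_{j=1}^{7N} x_{i,j}$ by exactly $\pm 2$, we get $\left|\sum_{j=1}^{7N} x_{i,j} - \sum_{j=1}^{7N} x'_{i,j}\right| \leq 2N$. Writing $S := \sum_{j=1}^{7N} x_{i,j}$ and $S' := \sum_{j=1}^{7N} x'_{i,j}$, this says $|S - S'| \leq 2N$, i.e. $S$ and $S'$ lie within a window of length $2N$, which is exactly the width of one ``step'' of the ceiling function appearing in \eqref{eq:def-phi}.

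The core of the argument is then the elementary fact that if $a := (8N + 1/2 + S)/(2N)$ and $a' := (8N + 1/2 + S')/(2N)$, then $|a - a'| = |S - S'|/(2N) \leq 1$, and for two reals within distance $1$ of each other we have $|\lceil a \rceil - \lceil a' \rceil| \leq 1$. I would include a one-line justification of this last step: without loss of generality $a \leq a'$, so $a \leq a' \leq a+1 \leq \lceil a \rceil + 1$, hence $\lceil a' \rceil \leq \lceil a \rceil + 1$, and symmetrically $\lceil a \rceil \leq \lceil a' \rceil + 1$ (using $\lceil a \rceil \leq \lceil a' \rceil$ since $a \leq a'$, so in fact $0 \leq \lceil a' \rceil - \lceil a \rceil \leq 1$). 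Therefore $|\phi_i(x) - \phi_i(x')| \leq 1$. The half-integer offset $1/2$ is a harmless shift that plays no role here; it only matters for ensuring $\phi_i(x) \in [8]$, which is already asserted in \eqref{eq:def-phi}.

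Since this holds for every $i \in [N]$, we conclude $\|\phi(x) - \phi(x')\|_\infty = \max_{i \in [N]} |\phi_i(x) - \phi_i(x')| \leq 1$, which is the claim. There is essentially no obstacle here: the only mildly delicate point is the behaviour of the ceiling function under small perturbations of its argument, and the scaling by $2N$ in the denominator of \eqref{eq:def-phi} has been chosen precisely so that a change of at most $N$ bits (hence a change of at most $2N$ in the unary sum) translates into a change of at most $1$ in the rescaled quantity before rounding. I would present the proof in three short steps — (i) bounding $|S - S'| \leq 2N$, (ii) rescaling to get the pre-rounding arguments within distance $1$, (iii) invoking the ceiling-function stability — and keep it to a few lines.
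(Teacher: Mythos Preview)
Your proof is correct and follows exactly the approach the paper intends: the paper does not write out a separate proof of this claim, instead remarking just before the statement that ``it is easy to check that flipping up to $N$ bits of $x_i \in \{-1,+1\}^{7N}$ changes the value of $\phi_i(x)$ by at most $1$,'' which is precisely the computation you carry out. Your three-step breakdown (bound $|S-S'|\le 2N$, rescale, apply ceiling stability) is the natural unpacking of that remark.
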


\begin{proof}
If $x_i$ and $x_i'$ differ in at most $N$ bits, then
\[\left|\sum_{j=1}^{7N} x_{i,j} - \sum_{j=1}^{7N} x_{i,j}'\right| \leq 2N\]
and as a result $|\phi_i(x) - \phi_i(x')| \leq 1$ by \cref{eq:def-phi}.
\end{proof}

The circuit $\mlambda$ consists of $m$ gates $g_1, \dots, g_m$, where $m \geq N$ and $m \leq \sz(\mlambda) \leq \textup{poly}(\sz(\lambda))$. For each $t \in [m]$, $g_t = (g_{t_1}, g_{t_2},T)$, where $t_1, t_2 \in [t-1] \cup ([N] \times [7N])$ are the inputs to the gate, and $T \in \{\textup{NOT}, \textup{NAND}\}$ indicates the type of gate. Note that an input $g_{t_1}$ to a gate $g_t$ can be of two types: when $t_1 \in [t-1]$, then $g_{t_1}$ is simply another (``earlier'') gate of the circuit; when $t_1 \in [N] \times [7N]$, then $g_{t_1} = g_{(i,j)}$, which we interpret as the $(i,j)$th input to the circuit, i.e., $x_{i,j}$. Note that when $T = \textup{NOT}$, the second input $g_{t_2}$ is ignored. The output of the circuit $\mlambda$ is given by the last $N$ gates, i.e., $g_{m-N+1}, \dots, g_m$.

\subsection{Construction of the Instance}

We now begin with the description of the $\eps$-\conhalv instance $\CHinstance$ that we construct. Instead of working with the interval $[0,1]$, we will describe the construction on an interval $R = [0, \textup{poly}(\sz(\lambda))]$. The valuations of the agents can then easily be scaled down to $[0,1]$.

\paragraph{\bf Input and circuit regions.} The interval $R$ is subdivided into two subintervals: interval $I$ on the left, and interval $C$ on the right. Interval $I$ is called the ``Input region'', while $C$ is called the ``Circuit region''. The interval $I$ is further subdivided into intervals $I_1, I_2, \dots, I_N$ from left to right. Next, each interval $I_i$ is subdivided into intervals $I_{i,1}, \dots, I_{i,7N}$. Finally, each interval $I_{i,j}$ is subdivided into intervals $I_{i,j}^1, \dots, I_{i,j}^{3N}$. Each of those final small intervals has length $1$, i.e., $|I_{i,j}^k| = 1$. Thus, the total length of interval $I$ is $N \cdot 7N \cdot 3N = 21 N^3$.

The interval $C$ is subdivided into intervals $C^1, C^2, \dots, C^{3N}$. We think of each $C^k$ as being associated to a separate ``copy'' of the circuit $\mlambda$. Next, each interval $C^k$ is subdivided into intervals $C^k_1, \dots, C^k_m$, where we recall that $m$ is the number of gates of $\mlambda$. Finally, each interval $C^k_t$ is subdivided into intervals $C^k_{t,\ell}, C^k_{t,c}, C^k_{t,r}, C^k_{t,a}$. The intervals $C^k_{t,\ell}, C^k_{t,c}, C^k_{t,r}, C^k_{t,a}$ have length $1$ each, and are called the \emph{left/center/right/auxiliary} subinterval of $C^k_t$, respectively. Putting everything together, we see that $|C| = 3N \cdot m \cdot 4 = 12mN$ and thus $|R| = |I \cup C| = 21 N^3 + 12mN = \textup{poly}(\sz(\lambda))$.

\paragraph{\bf Agents.} The instance $\CHinstance$ will have exactly $n = 3N \cdot m \cdot 2 + N$ agents. Namely, for each $k \in [3N]$ and $t \in [m]$, there is a \emph{gate agent} $\alpha^k_t$ and an \emph{auxiliary agent} $\beta^k_t$. We think of these agents as ``belonging'' to the interval $C^k_t$. Furthermore, there are also \emph{feedback agents} $\gamma_1, \dots, \gamma_N$. We will define the valuation functions for all these agents below, but first we have to introduce the notion of the value encoded by an interval.

\paragraph{\bf Value of an interval.} Consider any solution $S$ of our instance $\CHinstance$. Then $S = (R^+,R^-)$ is a partition of $R$ into two parts $R^+$ and $R^-$ using at most $n$ cuts. Without loss of generality, we can assume that $S$ has the following property: the right-most end of $R$ lies in $R^+$. Indeed, if this is not the case, then swapping $R^+$ and $R^-$ yields a solution that satisfies this.

In any solution $S = (R^+,R^-)$, we can assign a value in $[-1,1]$ to any interval $J \subset R$, $|J| = 1$, in a natural way:
\[\val_S(J) := \mu(J \cap R^+) - \mu(J \cap R^-)\]
where $\mu$ is the Lebesgue measure on $\mathbb{R}$. When it is clear from the context, we will omit the subscript $S$. We say that the value of an interval $J$ is \emph{pure}, if $\val(J) \in \{-1,+1\}$, i.e., it can be interpreted as a bit.

For $k \in [3N]$, $i \in [N]$, and $j \in [7N]$, we let
\[x^k_{i,j} := \val(I^k_{i,j}) \in [-1,1].\]
Furthermore, for $k \in [3N]$ and $t \in [m]$, we let
\[g^k_t := \val(C^k_{t,c}) \in [-1,1].\]
For convenience, we also define $g^k_t := x^k_{i,j}$, when $t = (i,j) \in [N] \times [7N]$, i.e., when $t$ refers to an input of the circuit, and not a gate.

We think of $x^1, \dots, x^{3N}$ as $3N$ possible inputs to our circuit $\mlambda$. Of course, $\mlambda(x^k)$ is only well-defined if $x^k$ is pure, i.e., if $x^k \in \{-1,+1\}^{N \times 7N}$. We can make the following crucial observations.

\begin{claim}\label{clm:at-most-N-cuts-in-I}
In any solution $S$ where at most $N$ cuts lie in the interior of interval $I$, it holds that, if $x^{k_1}$ and $x^{k_2}$ are both pure, then $\|\phi(x^{k_1}) - \phi(x^{k_2})\|_\infty \leq 1$ and $\lambda(\phi(x^{k_i})) = \mlambda(x^{k_i})$ for $i=1,2$.
\end{claim}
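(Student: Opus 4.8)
The plan is to derive \cref{clm:at-most-N-cuts-in-I} as a direct corollary of \cref{clm:at-most-N-bits} (the robustness of the unary encoding) together with the definition of $\mlambda$. The only thing that actually requires an argument is a pigeonhole-style counting fact: if at most $N$ cuts lie in the interior of $I$, then the two pure inputs $x^{k_1}$ and $x^{k_2}$ agree except at a small number of coordinates. I would organise the proof in two short steps.

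\textbf{Step 1: cuts versus disagreements.} Fix a solution $S$ with at most $N$ cuts in the interior of $I$, and suppose $x^{k_1}, x^{k_2} \in \{-1,+1\}^{N \times 7N}$ are both pure. For a fixed row index $i \in [N]$, I want to bound the number of columns $j \in [7N]$ with $x^{k_1}_{i,j} \neq x^{k_2}_{i,j}$. Recall $x^{k}_{i,j} = \val_S(I^k_{i,j})$, where $I^1_{i,j}, \dots, I^{3N}_{i,j}$ are the $3N$ consecutive unit subintervals that make up $I_{i,j}$. Since $x^{k_1}_{i,j}$ and $x^{k_2}_{i,j}$ are both pure (either $+1$ or $-1$), the interval $I^{k_1}_{i,j}$ lies entirely in one of $R^+, R^-$ and $I^{k_2}_{i,j}$ lies entirely in one of them; if these two values differ, then somewhere between these two unit intervals — hence inside $I_{i,j} \subseteq I$ — there must be at least one cut. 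The key point is that the subintervals $I_{i,j}$ over all $i \in [N]$, $j \in [7N]$ are pairwise disjoint, so each cut in the interior of $I$ lies in at most one $I_{i,j}$ and can therefore be "blamed" for the disagreement of at most one pair $(i,j)$. Since there are at most $N$ cuts in the interior of $I$ in total, we conclude $\sum_{i,j} \mathbf{1}[x^{k_1}_{i,j} \neq x^{k_2}_{i,j}] \leq N$, and in particular, for each fixed $i$, the rows $x^{k_1}_i$ and $x^{k_2}_i$ differ in at most $N$ bits.

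\textbf{Step 2: applying the robustness and definition of $\mlambda$.} By Step 1 the hypothesis of \cref{clm:at-most-N-bits} is satisfied with $x = x^{k_1}$, $x' = x^{k_2}$, so $\|\phi(x^{k_1}) - \phi(x^{k_2})\|_\infty \leq 1$, which is the first assertion. For the second assertion, note that when $x^k$ is pure, the circuit $\mlambda$ by construction first computes $\phi(x^k) \in [8]^N$ via \eqref{eq:def-phi} and then outputs $\lambda(\phi(x^k))$; hence $\mlambda(x^k) = \lambda(\phi(x^k))$ by definition, for each $k \in \{k_1, k_2\}$. This completes the proof.

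I do not expect any serious obstacle here: the content is entirely the disjointness-plus-pigeonhole observation in Step 1, and everything else is unwinding definitions. The one subtlety to state carefully is the "blame a cut" argument — one should be explicit that purity of $x^k_{i,j}$ forces the whole unit interval $I^k_{i,j}$ onto one side, so that a sign change between two such unit intervals genuinely forces a cut strictly between them and hence in the interior of $I_{i,j}$, and that distinct pairs $(i,j)$ get distinct cuts because the $I_{i,j}$ are disjoint. After that, the bound "$\leq N$ disagreements per row" (indeed $\leq N$ disagreements total across all rows) plugs straight into \cref{clm:at-most-N-bits}.
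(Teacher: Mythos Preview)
Your proposal is correct and follows essentially the same approach as the paper. The paper argues the contrapositive of your Step~1 (if the interior of $I_{i,j}$ contains no cut, then all unit subintervals $I^k_{i,j}$ share the same value, hence $x^{k_1}_{i,j}=x^{k_2}_{i,j}$) and then invokes \cref{clm:at-most-N-bits} exactly as you do; your version additionally notes the slightly stronger fact that the total number of disagreements across all rows is at most $N$, but this extra precision is not needed.
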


\begin{proof}
The statement $\lambda(\phi(x^{k_i})) = \mlambda(x^{k_i})$ follows by the construction of $\mlambda$. It remains to prove that $\|\phi(x^{k_1}) - \phi(x^{k_2})\|_\infty \leq 1$.
Since the interior of $I$ contains at most $N$ cuts, it follows that for each $i \in [N]$, the interior of the interval $I_i$ contains at most $N$ cuts. As a result, there exists a subset $P_i \subseteq [7N]$ with $|P_i| \geq 7N - N = 6N$ such that for all $j \in P_i$ the interior of interval $I_{i,j}$ does not contain any cuts. This means that for all $j \in P_i$, the intervals $I_{i,j}^{k_1}$ and $I_{i,j}^{k_2}$ have the same value, i.e., $x^{k_1}_{i,j} = \val(I_{i,j}^{k_1}) = \val(I_{i,j}^{k_2}) = x^{k_2}_{i,j}$. Thus, since $|P_i| \geq 6N$, $x^{k_1}_i$ and $x^{k_2}_i$ differ in at most $N$ bits. Since this holds for all $i \in [N]$, the claim follows by \cref{clm:at-most-N-bits}.
\end{proof}

\begin{claim}\label{clm:at-most-N-1-cuts-in-I}
In any solution $S$ where at most $N-1$ cuts lie in the interior of interval $I$, it holds that, if $x^k$ is pure, then $\mlambda(-x^k) = - \mlambda(x^k)$.
\end{claim}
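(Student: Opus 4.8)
The plan is to reduce the statement to one calculation about the map $\phi$ combined with the antipodality of $\lambda$. By construction $\mlambda = \lambda \circ \phi$, and if $x^k$ is pure then so is $-x^k$; hence it suffices to establish that (i) $\phi(x^k)$ lies on the boundary of the grid $[8]^N$, and (ii) $\phi(-x^k) = \overline{\phi(x^k)}$. Indeed, given (i) and (ii), the antipodality of $\lambda$ yields
\[
\mlambda(-x^k) = \lambda(\phi(-x^k)) = \lambda\big(\overline{\phi(x^k)}\big) = -\lambda(\phi(x^k)) = -\mlambda(x^k).
\]

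For (i) I would first locate a coordinate that is pinned to the boundary. Since at most $N-1$ cuts lie in the interior of $I$, and the intervals $I_1,\dots,I_N$ have pairwise disjoint interiors, there is an index $i^*$ whose interval $I_{i^*}$ contains no cut in its interior; hence $I_{i^*}$ lies entirely in $R^+$ or entirely in $R^-$. In the first case $x^k_{i^*,j}=+1$ for every $j\in[7N]$, so $\sum_{j}x^k_{i^*,j}=7N$ and \eqref{eq:def-phi} gives $\phi_{i^*}(x^k)=8$; in the second case it gives $\phi_{i^*}(x^k)=1$. Either way $\phi(x^k)\in\partial([8]^N)$. Running the same computation on $-x^k$, whose $i^*$-th row is the negation of that of $x^k$, gives $\phi_{i^*}(-x^k)=1$ in the first case and $8$ in the second, i.e. $\phi_{i^*}(-x^k)=9-\phi_{i^*}(x^k)$ in both.

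The remaining task, and the part I expect to be the real obstacle, is (ii) for the other coordinates $i\neq i^*$, namely $\phi_i(-x^k)=9-\phi_i(x^k)$. Writing $s:=\sum_{j}x^k_{i,j}$, purity of $x^k$ makes $s$ an integer, and $\phi_i(x^k)=\lceil\tfrac{8N+1/2+s}{2N}\rceil$ and $\phi_i(-x^k)=\lceil\tfrac{8N+1/2-s}{2N}\rceil$ are ceilings of non-integer rationals whose arguments sum to $8+\tfrac{1}{2N}$. The identity $\phi_i(x^k)+\phi_i(-x^k)=9$ is precisely what the offset $1/2$ in the definition of $\phi_i$ is designed to produce, but verifying it cleanly means tracking fractional parts and checking a mild divisibility/parity condition on $s$; this is where I would lean again on purity of $x^k$ — and, if needed, on the cut budget, to control $s$ modulo $2N$ — in order to dispose of the borderline case. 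Once $\phi_i(-x^k)=9-\phi_i(x^k)$ is established for all $i$, we have $\phi(-x^k)=\overline{\phi(x^k)}$, and combining this with (i) and the antipodality of $\lambda$ completes the proof exactly as in the display above.
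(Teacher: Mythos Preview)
Your plan coincides with the paper's proof: use the missing cut to put $\phi(x^k)$ on the boundary, show $\phi(-x^k)=\overline{\phi(x^k)}$ coordinatewise, and conclude via antipodality of $\lambda$. The only difference is that you stop short of actually carrying out (ii), anticipating a borderline case that might require the cut budget or a congruence on $s$; the paper invokes neither. It simply writes $\phi_i(x^k)=q$ as the pair of inequalities $(q-1)\cdot 2N < 8N+\tfrac12+s \le q\cdot 2N$, negates and adds $16N+1$ to obtain $(8-q)\cdot 2N +1 \le 8N+\tfrac12 - s < (9-q)\cdot 2N+1$, and then (using only that $s\in\mathbb{Z}$, i.e.\ purity) tightens this to $(8-q)\cdot 2N < 8N+\tfrac12 - s \le (9-q)\cdot 2N$, which is exactly $\phi_i(-x^k)=9-q$. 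So in the paper's treatment the $\tfrac12$ offset together with integrality of $s$ is what handles the edge case, and no additional control on $s\bmod 2N$ from the cut budget is used.
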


\begin{proof}
Since the interior of $I$ contains at most $N-1$ cuts, there exists $s \in [N]$ such that the interior of $I_s$ does not contain any cuts. As a result, $x^k_{s,j_1} = \val(I^k_{s,j_1}) = \val(I^k_{s,j_2}) = x^k_{s,j_2}$ for all $j_1,j_2 \in [7N]$. By the definition of $\phi$ (\cref{eq:def-phi}), it follows that $\phi_s(x^k) \in \{1,8\}$. Thus, by the boundary conditions of $\lambda$, we obtain that $\lambda(\overline{\phi(x^k)}) = - \lambda(\phi(x^k))$, where $\overline{\phi_i(x^k)} = 9 - \phi_i(x^k)$ for all $i \in [N]$. Since $\lambda(\phi(x^k)) = \mlambda(x^k)$, it remains to show that $\phi(-x^k) = \overline{\phi(x^k)}$.

Fix any $i \in [N]$ and consider $\phi_i(x^k) = q \in [8]$. By the definition of $\phi$ (\cref{eq:def-phi}), it follows that
\[(q-1) \cdot 2N < 8N + 1/2 + \sum_{j=1}^{7N} x^k_{i,j} \leq q \cdot 2N\]
which implies that
\[(8-q) \cdot 2N + 1 \leq 8N + 1/2 - \sum_{j=1}^{7N} x^k_{i,j} < (9-q) \cdot 2N + 1\]
and finally
\[(8-q) \cdot 2N < 8N + 1/2 - \sum_{j=1}^{7N} x^k_{i,j} \leq (9-q) \cdot 2N.\]
But, by the definition of $\phi$ (\cref{eq:def-phi}), this exactly means that $\phi_i(-x^k) = 9-q = 9 - \phi_i(x^k) = \overline{\phi_i(x^k)}$.
\end{proof}

\paragraph{\bf Auxiliary agents.} For $k \in [3N]$ and $t \in [m]$, the auxiliary agent $\beta^k_t$ has a very simple valuation function $v_{\beta^k_t}$: the density function of the valuation has value $1$ in $C^k_{t,a}$, and value $0$ everywhere else. This corresponds to having a block of volume $1$ lying in interval $C^k_{t,a}$. We immediately obtain the following observation.

\begin{claim}\label{clm:cut-in-auxiliary}
For all $k \in [3N]$ and $t \in [m]$ there must be a cut in the interior of $C^k_{t,a}$.
\end{claim}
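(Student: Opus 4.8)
\textbf{Proof plan for \cref{clm:cut-in-auxiliary}.}

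The plan is to argue by contradiction using the valuation of $\beta^k_t$ directly. Suppose for some $k \in [3N]$ and $t \in [m]$ there is no cut in the interior of $C^k_{t,a}$. Since $C^k_{t,a}$ is an interval of length $1$ and the density of $v_{\beta^k_t}$ is identically $1$ on $C^k_{t,a}$ and $0$ elsewhere, the whole of $C^k_{t,a}$ lies entirely within $R^+$ or entirely within $R^-$ (apart possibly from its endpoints, which have measure zero). In the first case $v_{\beta^k_t}(R^+) = 1$ and $v_{\beta^k_t}(R^-) = 0$, so $|v_{\beta^k_t}(R^+) - v_{\beta^k_t}(R^-)| = 1 > \eps$, since $\eps < 1/5 < 1$. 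The second case is symmetric. Either way, agent $\beta^k_t$'s constraint is violated, contradicting that $S$ is a solution.

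The only mild subtlety is the treatment of the endpoints of $C^k_{t,a}$: if a cut is placed exactly at an endpoint of $C^k_{t,a}$, this is not a cut in the \emph{interior}, yet it could in principle split the block. However, a cut at the boundary point assigns the two sides of that single point to $R^+$ and $R^-$ respectively, but the point itself has Lebesgue measure zero, so it does not change $v_{\beta^k_t}(R^+)$ or $v_{\beta^k_t}(R^-)$; the mass of the block still lies entirely on one side. Hence the argument goes through unchanged, and I do not anticipate any real obstacle here — this is an immediate consequence of the definition of a solution together with the fact that $\eps < 1$.
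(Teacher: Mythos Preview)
Your proof is correct and takes essentially the same approach as the paper: argue by contradiction that if no cut lies in the interior of $C^k_{t,a}$, the entire block belongs to one side and hence $|v_{\beta^k_t}(R^+) - v_{\beta^k_t}(R^-)| = 1 > \eps$. The paper's proof is just a one-line version of this; your additional remark about endpoints is fine but not needed.
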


\begin{proof}
If there is no cut in the interior of $C^k_{t,a}$, then $|v_{\beta^k_t}(R^+) - v_{\beta^k_t}(R^-)| = 1 > \eps$.
\end{proof}

\paragraph{\bf Gate agents: NOT.} Let $t \in [m]$ be such that $g_t=(g_{t_1},g_{t_2},\textup{NOT})$. Then, for any $k \in [3N]$, the goal of gate agent $\alpha^k_t$ is to enforce the corresponding gate constraint, namely $g^k_t = \val(C^k_{t,c}) = \textup{NOT}(g^k_{t_1}) = - g^k_{t_1}$.
The density function of the valuation $v_{\alpha^k_t}$ is constructed as follows: it has value $1/3$ in $C^k_{t,\ell} \cup C^k_{t,r} \cup A_{t_1}$, and value $0$ everywhere else. Here $A_{t_1}$ is defined as
\begin{itemize}
    \item $A_{t_1} = C^k_{t_1,c}$, when $t_1 \in [t-1]$,
    \item $A_{t_1} = I^k_{i,j}$, when $t_1 = (i,j) \in [N] \times [7N]$.
\end{itemize}
Note that $\val(A_{t_1}) = g^k_{t_1}$. See \cref{fig:NOT-gate} for an illustration of the gate.

\begin{figure}
	\centering
	\includegraphics[scale=1.2]{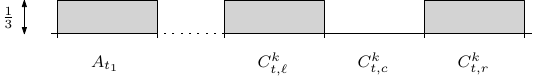}
	\caption{The density function of the valuation of an agent $\alpha_t^k$ implementing a NOT-gate.}\label{fig:NOT-gate}
\end{figure}

\begin{claim}\label{clm:NOT-gate}
For all $t \in [m]$ such that $g_t=(g_{t_1},g_{t_2},\textup{NOT})$, and all $k \in [3N]$, it holds that:
\begin{itemize}
    \item there must be a cut in the interior of $C^k_{t,\ell} \cup C^k_{t,c} \cup C^k_{t,r}$;
    \item if there is exactly one cut in the interior of $C^k_{t,\ell} \cup C^k_{t,c} \cup C^k_{t,r}$, and the input to the gate is pure (namely, $g^k_{t_1} \in \{-1,+1\}$), then the output is pure (i.e., $g^k_t \in \{-1,+1\}$), and $g^k_t = \textup{NOT}(g^k_{t_1})$.
\end{itemize}
\end{claim}

\begin{proof}
Assume, towards a contradiction, that the interior of $C^k_{t,\ell} \cup C^k_{t,c} \cup C^k_{t,r}$ does not contain any cuts. Then, in particular, $C^k_{t,\ell}$ and $C^k_{t,r}$ are both contained in $R^+$ or both contained in $R^-$. This implies that $|v_{\alpha^k_t}(R^+) - v_{\alpha^k_t}(R^-)| \geq 2/3 - 1/3 = 1/3 > \eps$, a contradiction.

Now consider the case where the interior of $C^k_{t,\ell} \cup C^k_{t,c} \cup C^k_{t,r}$ contains exactly one cut. Let $A_{t_1}$ be as defined above. Recall that $g^k_{t_1} = \val(A_{t_1})$ and $g^k_t = \val(C^k_{t,c})$. If $\val(A_{t_1}) = +1$, then it cannot be that $\val(C^k_{t,c}) \neq -1$. Indeed, since the interior of $C^k_{t,\ell} \cup C^k_{t,c} \cup C^k_{t,r}$ contains a single cut, $\val(C^k_{t,c}) \neq -1$ implies that at least one of $C^k_{t,\ell}$ or $C^k_{t,r}$ is contained in $R^+$. But since $A_{t_1}$ is also contained in $R^+$, this implies $|v_{\alpha^k_t}(R^+) - v_{\alpha^k_t}(R^-)| \geq 2/3 - 1/3 = 1/3 > \eps$, a contradiction. Thus, it must be that $\val(C^k_{t,c}) = -1$. Similarly, we can show that $\val(A_{t_1}) = -1$ implies $\val(C^k_{t,c}) = +1$.
\end{proof}

\paragraph{\bf Gate agents: NAND.} Let $t \in [m]$ be such that $g_t=(g_{t_1},g_{t_2},\textup{NAND})$. Then, for any $k \in [3N]$, the goal of gate agent $\alpha^k_t$ is to enforce the corresponding gate constraint, namely $g^k_t = \val(C^k_{t,c}) = \textup{NAND}(g^k_{t_1}, g^k_{t_2}) = - (g^k_{t_1} \land g^k_{t_2})$. The density function of the valuation $v_{\alpha^k_t}$ is constructed as follows: it has value $2/5$ in $C^k_{t,r}$, value $1/5$ in $A_{t_1} \cup A_{t_2} \cup C^k_{t,\ell}$, and value $0$ everywhere else. The intervals $A_{t_1}, A_{t_2}$ are defined as above in the description of the NOT-gate. See \cref{fig:NAND-gate} for an illustration of the gate.

\begin{figure}
	\centering
	\includegraphics[scale=1.2]{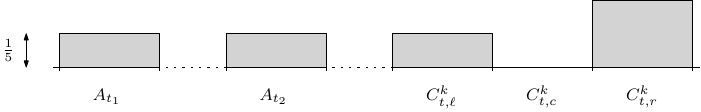}
	\caption{The density function of the valuation of an agent $\alpha_t^k$ implementing a NAND-gate.}\label{fig:NAND-gate}
\end{figure}

\begin{claim}\label{clm:NAND-gate}
For all $t \in [m]$ such that $g_t=(g_{t_1},g_{t_2},\textup{NAND})$, and all $k \in [3N]$, it holds that:
\begin{itemize}
    \item there must be a cut in the interior of $C^k_{t,\ell} \cup C^k_{t,c} \cup C^k_{t,r}$;
    \item if there is exactly one cut in the interior of $C^k_{t,\ell} \cup C^k_{t,c} \cup C^k_{t,r}$, and the inputs to the gate are pure (namely, $g^k_{t_1}, g^k_{t_2} \in \{-1,+1\}$), then the output is pure (i.e., $g^k_t \in \{-1,+1\}$), and
    \begin{itemize}
        \item if the left end of $C^k_t$ lies in $R^+$, then $g^k_t = \textup{NAND}(g^k_{t_1},g^k_{t_2})$;
        \item if the left end of $C^k_t$ lies in $R^-$, then $g^k_t = - \textup{NAND}(-g^k_{t_1},-g^k_{t_2})$.
    \end{itemize}
\end{itemize}
\end{claim}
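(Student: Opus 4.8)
The plan is to adapt the argument for \cref{clm:NOT-gate}, but with a more delicate case analysis. The key new feature here is that the agent $\alpha^k_t$ places no mass on the center interval $C^k_{t,c}$, so the output value $g^k_t = \val(C^k_{t,c})$ is pinned down only \emph{indirectly}, through the position of the single cut relative to $C^k_{t,\ell}$ and $C^k_{t,r}$. For the first bullet, I would argue by contradiction: if the interior of $C^k_{t,\ell} \cup C^k_{t,c} \cup C^k_{t,r}$ contains no cut, then these three unit intervals lie on the same side of the partition, say all in $R^+$ (the case $R^-$ being symmetric). Then $\val(C^k_{t,\ell}) = \val(C^k_{t,r}) = +1$, and since $A_{t_1}, A_{t_2}$ each contribute a value in $[-1,1]$, we get $v_{\alpha^k_t}(R^+) - v_{\alpha^k_t}(R^-) = \tfrac{2}{5}\val(C^k_{t,r}) + \tfrac{1}{5}(\val(C^k_{t,\ell}) + g^k_{t_1} + g^k_{t_2}) \geq \tfrac{2}{5} + \tfrac{1}{5} - \tfrac{2}{5} = \tfrac{1}{5} > \eps$, a contradiction.

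For the second bullet, I would first treat the case where the left end of $C^k_t$ lies in $R^+$. Write $a := g^k_{t_1}, b := g^k_{t_2} \in \{-1,+1\}$ and record the identity $v_{\alpha^k_t}(R^+) - v_{\alpha^k_t}(R^-) = \tfrac{1}{5}(a + b + \val(C^k_{t,\ell})) + \tfrac{2}{5}\val(C^k_{t,r})$ (no $\val(C^k_{t,c})$ term). Since the left end of $C^k_{t,\ell}$ is in $R^+$ and exactly one cut lies in $C^k_{t,\ell} \cup C^k_{t,c} \cup C^k_{t,r}$, exactly one of three situations holds: (i) the cut is in $C^k_{t,\ell}$, forcing $\val(C^k_{t,c}) = \val(C^k_{t,r}) = -1$; (ii) the cut is in $C^k_{t,c}$, forcing $\val(C^k_{t,\ell}) = +1$ and $\val(C^k_{t,r}) = -1$; (iii) the cut is in $C^k_{t,r}$, forcing $\val(C^k_{t,\ell}) = \val(C^k_{t,c}) = +1$. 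Substituting each into the identity and using $|v_{\alpha^k_t}(R^+) - v_{\alpha^k_t}(R^-)| \leq \eps < \tfrac15$ together with $a,b \in \{-1,+1\}$ and $\val(\cdot) \in [-1,1]$, one checks that: case (ii) is infeasible for every $(a,b)$; case (i) is feasible only when $a = b = +1$, in which case $g^k_t = -1 = \textup{NAND}(a,b)$; case (iii) is feasible only when $(a,b) \neq (+1,+1)$, in which case $g^k_t = +1 = \textup{NAND}(a,b)$. Hence for $a = b = +1$ only (i) survives and otherwise only (iii) survives, so in all cases the output is pure and equals $\textup{NAND}(g^k_{t_1}, g^k_{t_2})$.

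The case where the left end of $C^k_t$ lies in $R^-$ I would deduce from the previous one by symmetry, without redoing the computation. Let $S' := (R^-,R^+)$: since the feasibility condition $|v_i(R^+) - v_i(R^-)| \leq \eps$ and the cut set are invariant under this swap, $S'$ is again a solution with exactly one cut in the interior of $C^k_{t,\ell} \cup C^k_{t,c} \cup C^k_{t,r}$, and now the left end of $C^k_t$ lies in $R^+$ with respect to $S'$. Applying the already-proven case to $S'$ and using $\val_{S'}(J) = -\val_S(J)$ for every unit interval $J$ yields $-g^k_t = \textup{NAND}(-g^k_{t_1}, -g^k_{t_2})$, i.e. $g^k_t = -\textup{NAND}(-g^k_{t_1}, -g^k_{t_2})$, as claimed.

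The main obstacle is the arithmetic in sub-cases (i)--(iii): one has to verify that the threshold $\eps < 1/5$ is \emph{exactly} what renders the ``wrong'' branches infeasible while keeping the ``right'' branch feasible. This is precisely where the constant $1/5$ of \cref{thm:main-result} originates, and it explains why the NAND gate, rather than the NOT gate (where a looser bound suffices), is the bottleneck for improving $\eps$. A small point worth stating explicitly in the write-up is that in case (iii) the value $\val(C^k_{t,r})$ need not be pure; only $\val(C^k_{t,c})$, the quantity of interest, is forced to a pure value.
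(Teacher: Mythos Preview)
Your proof is correct and takes essentially the same approach as the paper's: both argue by contradiction for the first bullet and by a short case analysis for the second, using exactly the threshold $\eps<1/5$. The only differences are cosmetic---you organize the case analysis by the location of the single cut (in $C^k_{t,\ell}$, $C^k_{t,c}$, or $C^k_{t,r}$) rather than by the input values $(g^k_{t_1},g^k_{t_2})$, and you dispatch the left-end-in-$R^-$ case via the swap $S'=(R^-,R^+)$ instead of repeating the analysis, which is a clean shortcut.
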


\begin{proof}
Assume, towards a contradiction, that the interior of $C^k_{t,\ell} \cup C^k_{t,c} \cup C^k_{t,r}$ does not contain any cuts. Then, $C^k_{t,\ell}$ and $C^k_{t,r}$ are both contained in $R^+$ or both contained in $R^-$. This implies that $|v_{\alpha^k_t}(R^+) - v_{\alpha^k_t}(R^-)| \geq 3/5 - 2/5 = 1/5 > \eps$, a contradiction.

Now consider the case where the interior of $C^k_{t,\ell} \cup C^k_{t,c} \cup C^k_{t,r}$ contains exactly one cut. Recall that $g^k_{t_1} = \val(A_{t_1})$, $g^k_{t_2} = \val(A_{t_2})$, and $g^k_t = \val(C^k_{t,c})$. First, assume that the left end of $C^k_t$ lies in $R^+$. If $\val(A_{t_1}) = \val(A_{t_2}) = +1$, then the cut must lie in $C^k_{t,\ell}$. Otherwise, $C^k_{t,\ell}$ lies in $R^+$, just like $A_{t_1}$ and $A_{t_2}$, which implies $|v_{\alpha^k_t}(R^+) - v_{\alpha^k_t}(R^-)| \geq 3/5 - 2/5 = 1/5 > \eps$, a contradiction. Since the cut lies in $C^k_{t,\ell}$, it follows that $C^k_{t,c}$ lies in $R^-$, i.e., $\val(C^k_{t,c}) = -1$, as desired.

If $\val(A_{t_1}) = -1$, then the cut must lie in $C^k_{t,r}$. Indeed, otherwise, $C^k_{t,r}$ lies in $R^-$, just like $A_{t_1}$, which implies $|v_{\alpha^k_t}(R^+) - v_{\alpha^k_t}(R^-)| \geq 3/5 - 2/5 = 1/5 > \eps$, a contradiction. Since the cut lies in $C^k_{t,r}$, it follows that $C^k_{t,c}$ lies in $R^+$, i.e., $\val(C^k_{t,c}) = +1$, as desired. The exact same analysis also applies to the case where $\val(A_{t_2}) = -1$ instead. Thus, we obtain $\val(C^k_{t,c}) = \textup{NAND}(\val(A_{t_1}),\val(A_{t_2}))$.

It remains to consider the setting where the left end of $C^k_t$ lies in $R^-$, instead of $R^+$. The same type of case analysis applied to this setting yields $\val(C^k_{t,c}) = - \textup{NAND}(-\val(A_{t_1}),-\val(A_{t_2}))$.
\end{proof}

\begin{remark}
Note that the proof of \cref{clm:NAND-gate} crucially made use of the fact that $\eps < 1/5$. In fact, it turns out that this is the only point in the reduction where this is needed. The rest of the reduction can be made to work for any $\eps < 1/3$. In particular, it is not hard to see that the proof of \cref{clm:NOT-gate} only made use of the assumption $\eps < 1/3$.
\end{remark}

\paragraph{\bf Feedback agents.} For $i \in [N]$, feedback agent $\gamma_i$ has the following valuation function $v_{\gamma_i}$: the density function of $v_{\gamma_i}$ has value $1/3N$ over $\cup_{k=1}^{3N} C^k_{m-N+i,c}$, and value $0$ everywhere else. Recall that the interval $C^k_{m-N+i,c}$ corresponds to the gate $g_{m-N+i}$ of $\mlambda$, which is the $i$th output of $\mlambda$. For every $k \in [3N]$, define $y^k \in [-1,1]^{N}$ by letting
\[y^k_i := g^k_{m-N+i} = \val(C^k_{m-N+i,c})\]
for all $i \in [N]$. Intuitively, $y^k$ corresponds to the output of the $k$th circuit region $C^k$. By construction of $\gamma_i$, we immediately obtain:

\begin{claim}\label{clm:feedback-agents}
For all $i \in [N]$, it holds that
\[\left| \frac{1}{3N} \sum_{k=1}^{3N} y^k_i \right| \leq \eps.\]
\end{claim}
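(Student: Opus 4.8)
The plan is simply to rewrite the defining constraint for the feedback agent $\gamma_i$ in the notation $y^1,\dots,y^{3N}$. First I would expand $v_{\gamma_i}(R^+)$ and $v_{\gamma_i}(R^-)$ using the description of the density of $v_{\gamma_i}$: it equals $1/3N$ on the union $\bigcup_{k=1}^{3N} C^k_{m-N+i,c}$ and $0$ everywhere else, and the $3N$ intervals $C^k_{m-N+i,c}$ are pairwise disjoint, since they belong to the disjoint circuit copies $C^1,\dots,C^{3N}$. Hence, by linearity of $\mu$,
\[
v_{\gamma_i}(R^{+}) - v_{\gamma_i}(R^{-}) = \frac{1}{3N}\sum_{k=1}^{3N}\Big(\mu\big(C^k_{m-N+i,c}\cap R^{+}\big) - \mu\big(C^k_{m-N+i,c}\cap R^{-}\big)\Big) = \frac{1}{3N}\sum_{k=1}^{3N}\val\big(C^k_{m-N+i,c}\big),
\]
using the definition $\val_S(J) = \mu(J\cap R^{+}) - \mu(J\cap R^{-})$. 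Since $y^k_i := \val(C^k_{m-N+i,c})$ by construction of the $y^k$, the right-hand side is exactly $\tfrac{1}{3N}\sum_{k=1}^{3N} y^k_i$.

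The second step is to invoke the solution property: because $S=(R^+,R^-)$ is a solution to the $\eps$-\conhalv instance $\CHinstance$, every agent is $\eps$-satisfied, and in particular $|v_{\gamma_i}(R^{+}) - v_{\gamma_i}(R^{-})| \le \eps$. Combining this with the displayed identity yields $\big|\tfrac{1}{3N}\sum_{k=1}^{3N} y^k_i\big| \le \eps$, which is the claim.

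There is essentially no obstacle here; the only thing worth double-checking is the bookkeeping that makes $v_{\gamma_i}$ a legitimate valuation, namely that each $C^k_{m-N+i,c}$ has length $1$, so that $v_{\gamma_i}$ has total mass $3N\cdot\tfrac{1}{3N}=1$ and is indeed a probability measure, as required by the definition of an $\eps$-\conhalv instance. The content of the claim is purely a reformulation of the $\gamma_i$-constraint in terms of the output vectors of the circuit copies; the substantive work of the reduction lies in the other claims, where these outputs $y^k$ are tied back to the labelling $\lambda$.
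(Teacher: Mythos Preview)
Your proof is correct and is exactly the argument the paper has in mind; the paper simply states that the claim follows ``by construction of $\gamma_i$'' without writing out the computation, and you have faithfully unpacked that computation using the definition of $\val$ and the $\eps$-satisfaction constraint for agent $\gamma_i$.
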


\begin{proof}
We can write
\begin{equation*}
\begin{split}
\left|v_{\gamma_i}(R^+) - v_{\gamma_i}(R^-)\right| &= \left|\sum_{k=1}^{3N} v_{\gamma_i} (C_{m-N+i,c}^k \cap R^+) - v_{\gamma_i} (C_{m-N+i,c}^k \cap R^-)\right|\\
&= \left|\frac{1}{3N} \sum_{k=1}^{3N} \val(C^k_{m-N+i,c})\right| = \left| \frac{1}{3N} \sum_{k=1}^{3N} y^k_i \right|
\end{split}
\end{equation*}
and at any solution we must have $|v_{\gamma_i}(R^+) - v_{\gamma_i}(R^-)| \leq \eps$.
\end{proof}

We have now completed the construction of the instance \CHinstance. It is easy to check that this construction can be performed in time polynomial in $\sz(\lambda)$.

\subsection{Correctness of the Reduction}

It remains to prove the correctness of the reduction, namely, that from any solution $S = (R^+,R^-)$ to \CHinstance we can extract a solution to the \ndritucker instance $\lambda$. We show this by presenting and proving a sequence of claims.

\begin{claim}\label{clm:C_k-at-least-2m-cuts}
For every $k \in [3N]$, the interior of $C^k$ contains at least $2m$ cuts.
\end{claim}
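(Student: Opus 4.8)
The plan is a straightforward counting argument. Recall that $C^k$ is subdivided into $m$ intervals $C^k_1, \dots, C^k_m$, one per gate of $\mlambda$, and each $C^k_t$ is further subdivided into the four unit intervals $C^k_{t,\ell}, C^k_{t,c}, C^k_{t,r}, C^k_{t,a}$. The key inputs are \cref{clm:cut-in-auxiliary}, which guarantees a cut in the interior of every auxiliary subinterval $C^k_{t,a}$, and the first bullets of \cref{clm:NOT-gate} and \cref{clm:NAND-gate}, which together guarantee (since every gate $g_t$ is of type NOT or NAND) a cut in the interior of $C^k_{t,\ell} \cup C^k_{t,c} \cup C^k_{t,r}$ for every $t \in [m]$.

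The main point to be careful about is disjointness, so that these guaranteed cuts are not double-counted. For a fixed $k$ and a fixed $t$, the region $C^k_{t,\ell} \cup C^k_{t,c} \cup C^k_{t,r}$ and the interval $C^k_{t,a}$ are disjoint (they are distinct subintervals of $C^k_t$), and their interiors are disjoint as well; hence $C^k_t$ contains at least two cuts in its interior — one from the auxiliary agent and one from the gate agent. Moreover, for distinct $t, t' \in [m]$ the intervals $C^k_t$ and $C^k_{t'}$ have disjoint interiors. Summing over all $m$ gates, the interior of $C^k$ therefore contains at least $2m$ cuts, which is exactly the claim.

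I do not anticipate any real obstacle here; the only thing to state explicitly is that a cut lying exactly on a shared endpoint of two consecutive subintervals is not ``in the interior'' of either, so the cuts we have identified genuinely lie in the pairwise-disjoint open intervals $\mathrm{int}(C^k_{t,\ell} \cup C^k_{t,c} \cup C^k_{t,r})$ and $\mathrm{int}(C^k_{t,a})$, and thus all $2m$ of them are distinct. This claim will presumably be combined later with the global budget of $n = 3N \cdot m \cdot 2 + N$ cuts and the analogous count over all $k \in [3N]$ (giving $6mN$ cuts inside $C$) to deduce that at most $N$ cuts can lie in the input region $I$, which is what feeds into \cref{clm:at-most-N-cuts-in-I}.
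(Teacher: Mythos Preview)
Your proposal is correct and follows essentially the same argument as the paper: invoke \cref{clm:cut-in-auxiliary} together with the first bullets of \cref{clm:NOT-gate} and \cref{clm:NAND-gate} to guarantee one cut in each $\mathrm{int}(C^k_{t,a})$ and one in each $\mathrm{int}(C^k_{t,\ell}\cup C^k_{t,c}\cup C^k_{t,r})$, then note these $2m$ open intervals are pairwise disjoint. If anything, you are more explicit about the disjointness (in particular the endpoint issue) than the paper, which simply states that the intervals are pairwise disjoint.
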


\begin{proof}
This immediately follows from \cref{clm:cut-in-auxiliary}, \cref{clm:NOT-gate} and \cref{clm:NAND-gate}, by observing that every gate agent and auxiliary agent forces a cut to lie in the interior of some interval, and all these intervals are pairwise disjoint.
\end{proof}

Intuitively, a circuit region $C^k$ will correctly perform computations as long as it does not contain more than $2m$ cuts (and thus, by \cref{clm:C_k-at-least-2m-cuts} above, exactly $2m$ cuts). Furthermore, for the computations to be meaningful, the inputs to the circuit, namely $x^k_{i,j} = \val(I^k_{i,j})$, should also be pure. This motivates defining the ``good'' copies of the circuit as
\[ \small G := \left\{k \in [3N] : \text{the interior of $\overline{C}^k := C^k \cup \left( \bigcup_{(i,j) \in [N] \times [7N]} I^k_{i,j} \right)$ contains at most $2m$ cuts}\right\}.\]

\begin{claim}\label{clm:G-at-least-2N}
It holds that $|G| \geq 2N$.
\end{claim}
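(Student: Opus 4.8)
The plan is to bound the total number of cuts and use a counting/averaging argument. First I would recall that the instance $\CHinstance$ has exactly $n = 3N\cdot m \cdot 2 + N$ agents, so the partition $S = (R^+,R^-)$ uses at most $n = 6Nm + N$ cuts in total. By \cref{clm:C_k-at-least-2m-cuts}, each of the $3N$ circuit regions $C^k$ already accounts for at least $2m$ cuts in its interior, and these intervals are pairwise disjoint, so at least $6Nm$ cuts are ``used up'' inside the circuit regions. This leaves at most $n - 6Nm = N$ cuts available to be distributed among (i) the input intervals $I^k_{i,j}$ and (ii) any \emph{extra} cuts inside the circuit regions beyond the mandatory $2m$ per region.

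The key step is then to charge each ``bad'' copy to at least one of these $N$ surplus cuts. For $k \notin G$, by definition the interior of $\overline{C}^k = C^k \cup \bigcup_{(i,j)} I^k_{i,j}$ contains at least $2m+1$ cuts; since the mandatory $2m$ cuts inside $C^k$ are already counted, each bad copy $k$ witnesses at least one cut that is either an extra cut inside $C^k$ (beyond the $2m$) or a cut inside some $I^k_{i,j}$. Crucially, the regions $\overline{C}^k$ for different $k$ share only the input intervals; but a cut in $I^k_{i,j}$ lies in a single $I^k_{i,j}$, hence is associated to a single $k$, and an extra cut inside $C^k$ is likewise associated to a single $k$ (the $C^k$ are disjoint). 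So the witnessing surplus cuts for distinct bad copies are distinct. Since there are at most $N$ surplus cuts available, there can be at most $N$ bad copies, i.e. $|[3N] \setminus G| \leq N$, which gives $|G| \geq 3N - N = 2N$.

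The step I expect to require the most care is the bookkeeping that guarantees the witnessing cuts are genuinely disjoint across bad copies — in particular, making precise the claim that the $2m$ mandatory cuts per region are distinct from the surplus cuts and that no single cut gets double-counted across two regions $\overline{C}^{k}$ and $\overline{C}^{k'}$. The cleanest way to phrase this is: the total number of cuts is at most $6Nm + N$; subtract the $\sum_{k} 2m = 6Nm$ mandatory cuts (legitimate since the mandatory intervals over all $k$ are pairwise disjoint by \cref{clm:C_k-at-least-2m-cuts} and its proof); the remaining at most $N$ cuts lie in $R \setminus \bigcup_k (\text{mandatory intervals})$, which contains all the $I^k_{i,j}$'s and the ``slack'' parts of each $C^k$; and each $k \notin G$ forces at least one of these at most $N$ remaining cuts into its own region $\overline{C}^k$, with the $\overline{C}^k$ restricted to this leftover set being pairwise disjoint. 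Hence at most $N$ indices can be bad.
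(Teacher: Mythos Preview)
Your proposal is correct and follows essentially the same counting argument as the paper: at most $6Nm+N$ cuts total, at least $2m$ forced into each of the $3N$ disjoint circuit regions, leaving at most $N$ surplus cuts to distribute among the $\overline{C}^k$, so at most $N$ of them can exceed $2m$ cuts.

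The only place you make life harder for yourself is the disjointness bookkeeping. You write that the regions $\overline{C}^k$ for different $k$ ``share only the input intervals'' and then argue carefully that surplus cuts are nonetheless uniquely attributable. In fact the interiors of $\overline{C}^{k_1}$ and $\overline{C}^{k_2}$ are already pairwise disjoint for $k_1\neq k_2$: the $C^k$ are disjoint by construction, and the $I^k_{i,j}$ are precisely the sub-intervals $I_{i,j}^1,\dots,I_{i,j}^{3N}$ of $I_{i,j}$, hence disjoint in $k$ as well. The paper simply states this disjointness upfront and the bound $|G|\geq 2N$ then follows in one line, without the extra charging argument.
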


\begin{proof}
Note, first of all, that for $k_1 \neq k_2$, the interior of $\overline{C}^{k_1}$ is disjoint from the interior of $\overline{C}^{k_2}$. Furthermore, by \cref{clm:C_k-at-least-2m-cuts} we know that, for each $k \in [3N]$, the interior of $C_k$ contains at least $2m$ cuts. Since there are $n = 3N \cdot 2m + N$ agents, and thus also at most that many cuts, it follows that there remain at most $N$ ``free'' cuts. As a result, the number of $\overline{C}^k$ that contain more than $2m$ cuts can be at most $N$.
\end{proof}

\begin{claim}\label{clm:G-correct-circuit}
For all $k \in G$, we have that $x^k \in \{-1,+1\}^{N \times 7N}$, and
\begin{itemize}
    \item if the left end of $C^k$ lies in $R^+$, then $y^k = \mlambda(x^k)$;
    \item if the left end of $C^k$ lies in $R^-$, then $y^k = - \mlambda(-x^k)$.
\end{itemize}
\end{claim}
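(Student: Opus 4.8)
The plan is to first pin down the cut structure of a good copy, and then run a gate-by-gate induction. Fix $k \in G$. The interior of $C^k$ is contained in the interior of $\overline{C}^k$, which by definition of $G$ contains at most $2m$ cuts; but by \cref{clm:C_k-at-least-2m-cuts} the interior of $C^k$ alone already contains at least $2m$ cuts. Hence the interior of $C^k$ contains exactly $2m$ cuts, and none of the intervals $I^k_{i,j}$ has a cut in its interior. The latter gives $x^k_{i,j} = \val(I^k_{i,j}) \in \{-1,+1\}$ for all $i,j$, i.e., $x^k \in \{-1,+1\}^{N \times 7N}$. For the former, observe that the $2m$ pairwise-disjoint intervals $C^k_{t,\ell} \cup C^k_{t,c} \cup C^k_{t,r}$ and $C^k_{t,a}$ ($t \in [m]$) each contain at least one cut in their interior by \cref{clm:cut-in-auxiliary,clm:NOT-gate,clm:NAND-gate}, so each contains exactly one, and both of these cuts lie strictly in the interior of $C^k_t$. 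Consequently, going from the left endpoint of $C^k$ to the left endpoint of any $C^k_t$ one crosses an even number of cuts, so the left endpoint of $C^k_t$ lies on the same side of the partition as the left endpoint of $C^k$, for every $t$.

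Next I would prove, by induction on $t$, that when the left endpoint of $C^k$ lies in $R^+$ we have $g^k_t \in \{-1,+1\}$ and $g^k_t$ equals the value computed by gate $t$ of $\mlambda$ on input $x^k$. The base case is the inputs $g^k_{(i,j)} = x^k_{i,j}$, which are pure by the previous paragraph. For the inductive step, the intervals $A_{t_1}$ (and $A_{t_2}$, for a NAND gate) feeding gate agent $\alpha^k_t$ satisfy $\val(A_{t_1}) = g^k_{t_1}$, which is pure by the induction hypothesis (or the base case); since $C^k_{t,\ell} \cup C^k_{t,c} \cup C^k_{t,r}$ contains exactly one cut, \cref{clm:NOT-gate} (for a NOT gate), respectively \cref{clm:NAND-gate} applied with the left endpoint of $C^k_t$ in $R^+$ (for a NAND gate), yields that $g^k_t = \val(C^k_{t,c})$ is pure and equals $\textup{NOT}(g^k_{t_1})$, respectively $\textup{NAND}(g^k_{t_1}, g^k_{t_2})$ --- precisely what gate $t$ of $\mlambda$ outputs. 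Specializing to $t = m-N+i$, $i \in [N]$, gives $y^k = \mlambda(x^k)$.

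For the case where the left endpoint of $C^k$ lies in $R^-$, the cleanest route is to apply everything above to the swapped partition $S' := (R^-, R^+)$: it is still a solution of $\CHinstance$ using the same cuts, $k$ still belongs to $G$ (the cuts are unchanged), every interval value negates, and the left endpoint of $C^k$ now lies in $R^+$ with respect to $S'$. The previous paragraph applied to $S'$ gives $\val_{S'}(C^k_{m-N+i,c}) = [\mlambda(-x^k)]_i$ for each $i$, and negating both sides recovers $y^k = -\mlambda(-x^k)$. Alternatively one can redo the induction directly, maintaining the invariant $g^k_t = -h^k_t$, where $h^k_t$ is the value that gate $t$ of $\mlambda$ computes on input $-x^k$; here \cref{clm:NAND-gate} contributes $g^k_t = -\textup{NAND}(-g^k_{t_1}, -g^k_{t_2})$, which is exactly consistent with the invariant.

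I expect the only genuinely delicate point to be correctly threading the sign-flip through the NAND gates in the $R^-$ case; passing to the mirrored solution $S'$ reduces this to routine bookkeeping. A secondary, minor nuisance is the possibility of a cut landing exactly on a shared endpoint of two of the unit intervals, but such a cut neither splits any of the relevant unit intervals nor changes the value $\val(\cdot)$ of any of them, so it does not affect the argument.
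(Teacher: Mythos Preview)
Your proof is correct and follows essentially the same route as the paper: pin down that a good copy has exactly one cut in each $C^k_{t,\ell}\cup C^k_{t,c}\cup C^k_{t,r}$ and each $C^k_{t,a}$ (hence all left endpoints of the $C^k_t$ share the sign of the left endpoint of $C^k$), and then induct gate-by-gate via \cref{clm:NOT-gate,clm:NAND-gate}. The one cosmetic difference is that for the $R^-$ case you lead with the symmetry argument (swap $R^+$ and $R^-$ and reapply the $R^+$ case), whereas the paper runs the direct induction with the sign-flipped invariant that you mention as your alternative; both are equally valid.
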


\begin{proof}
Since $k \in G$, by definition of $G$ and by \cref{clm:C_k-at-least-2m-cuts}, no cut lies in the interior of $I^k_{i,j}$ for all $(i,j) \in [N] \times [7N]$. As a result, $x^k_{i,j} = \val(I^k_{i,j}) \in \{-1,+1\}$, i.e., $x^k$ is pure.

Consider the case where the left end of $C^k$ lies in $R^+$. By definition of $G$ and by \cref{clm:C_k-at-least-2m-cuts} it follows that the interior of $C^k$ contains exactly $2m$ cuts, and for each $t \in [m]$, the interior of $C^k_t$ contains exactly two cuts, namely one in $C^k_{t,\ell} \cup C^k_{t,c} \cup C^k_{t,r}$, and one in $C^k_{t,a}$. As a result, it holds that for each $t \in [m]$, the left end of $C^k_t$ lies in $R^+$. Thus, since $x^k$ is pure, it follows by \cref{clm:NOT-gate} and \cref{clm:NAND-gate} that the output of the $k$th copy of the first gate is pure, i.e., $g^k_1 \in \{-1,+1\}$, and that the value of the gate is computed correctly. By induction, it follows that $g^k_t \in \{-1,+1\}$ for all $t \in [m]$, and that all the gates are computed correctly. In particular, we obtain that $y^k = \mlambda(x^k)$.

Now, consider the case where the left end of $C^k$ lies in $R^-$. By the same argument as above, it follows that for each $t \in [m]$, the left end of $C^k_t$ lies in $R^-$. As above, since $x^k$ is pure, and by \cref{clm:NOT-gate} and \cref{clm:NAND-gate}, we obtain that the $k$th copy of the first gate $g_1 = (g_{t_1}, g_{t_2}, T)$ is pure, i.e., $g^k_1 \in \{-1,+1\}$, and
\begin{itemize}
    \item if $T = \textup{NOT}$: $g^k_1 = \textup{NOT}(g^k_{t_1}) = - \textup{NOT}(-g^k_{t_1})$;
    \item if $T = \textup{NAND}$: $g^k_1 = - \textup{NAND}(-g^k_{t_1},-g^k_{t_1})$.
\end{itemize}
By induction, it follows that $g^k_t \in \{-1,+1\}$ for all $t \in [m]$, and that $g^k_t = -g_t[-x^k]$, i.e., each gate has the opposite value from the one it would have if the input to the circuit was $-x^k$. In particular, we obtain that $y^k = - \mlambda(-x^k)$.
\end{proof}

We are now ready to complete the proof. Putting everything together, we can prove a stronger version of \cref{clm:G-correct-circuit}.

\begin{claim}\label{clm:all-good}
For all $k \in G$, we have that $x^k \in \{-1,+1\}^{N \times 7N}$, and $y^k = \mlambda(x^k)$.
\end{claim}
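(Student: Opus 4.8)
The plan is to bootstrap from \cref{clm:G-correct-circuit}. For $k \in G$, that claim already gives $x^k \in \{-1,+1\}^{N \times 7N}$, together with $y^k = \mlambda(x^k)$ when the left end of $C^k$ lies in $R^+$, and $y^k = -\mlambda(-x^k)$ when it lies in $R^-$. So the only thing left is to show that in the second case $-\mlambda(-x^k) = \mlambda(x^k)$, i.e.\ $\mlambda(-x^k) = -\mlambda(x^k)$. By \cref{clm:at-most-N-1-cuts-in-I}, applied to the pure input $x^k$, it suffices to prove: \emph{whenever the left end of some $C^k$ with $k \in G$ lies in $R^-$, at most $N-1$ cuts lie in the interior of $I$.}

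To establish this I would combine a parity count with a budget count. Call a cut \emph{mandatory} if it is a cut forced by a gate agent in the interior of $C^j_{t,\ell} \cup C^j_{t,c} \cup C^j_{t,r}$ or by an auxiliary agent in the interior of $C^j_{t,a}$ (these are guaranteed by \cref{clm:NOT-gate}, \cref{clm:NAND-gate}, and \cref{clm:cut-in-auxiliary}, picking one such cut per interval). The relevant intervals are pairwise disjoint, so there are exactly $6mN$ mandatory cuts; since the instance has $n = 6mN + N$ agents, at most $N$ cuts are non-mandatory. Every mandatory cut lies in the interior of some $C^j \subseteq C$, hence to the right of $I$, so every cut in the interior of $I$ is non-mandatory. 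Moreover, for each $j \ge k$ the $2m$ mandatory cuts of $C^j$ lie strictly inside $C^j$, hence strictly between the left end of $C^k$ and the right end of $R$; in total that is $2m \cdot (3N - k + 1)$ mandatory cuts strictly between those two points, an even number.

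Now recall that we assumed the right-most end of $R$ lies in $R^+$. If the left end of $C^k$ lies in $R^-$, then the number of cuts lying strictly between the left end of $C^k$ and the right end of $R$ is odd; subtracting the even count of mandatory cuts among them, there is at least one \emph{non-mandatory} cut in that range. This cut lies to the right of $I$, so it is distinct from every cut in the interior of $I$; since there are at most $N$ non-mandatory cuts in total, the interior of $I$ contains at most $N-1$ of them. \cref{clm:at-most-N-1-cuts-in-I} then yields $\mlambda(-x^k) = -\mlambda(x^k)$, and combining with $y^k = -\mlambda(-x^k)$ from \cref{clm:G-correct-circuit} gives $y^k = \mlambda(x^k)$, which completes the proof of \cref{clm:all-good}.

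I expect the bookkeeping to be the only real obstacle: one must be sure that the mandatory cuts are genuinely confined to the interiors of the circuit sub-intervals (so they contribute an even amount to the parity count and none of them sits in $I$), and that the extra non-mandatory cut forced by the parity argument really does diminish the global budget of $N$ spare cuts available to $I$. Both points are already implicit in the proofs of \cref{clm:C_k-at-least-2m-cuts} and \cref{clm:G-correct-circuit}, so once they are stated carefully the rest is routine.
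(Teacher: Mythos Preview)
Your proof is correct and uses essentially the same ingredients as the paper's: the cut budget ($6mN$ forced cuts versus $n = 6mN + N$ total) together with a parity count from the convention that the right end of $R$ lies in $R^+$. The only difference is organizational: the paper cases on whether the interior of $I$ contains at least $N$ cuts (showing that then every $C^k$ has exactly $2m$ cuts, hence every left end lies in $R^+$) or at most $N-1$ (invoking \cref{clm:at-most-N-1-cuts-in-I} directly), whereas you case on the position of the left end of $C^k$ and derive the bound on cuts in $I$ from the $R^-$ case---effectively the contrapositive of the paper's Case~1.
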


\begin{proof}
In order to prove the claim, we consider two distinct cases. First, let us assume that the interior of $I$ contains at least $N$ cuts. Recall that the number of agents is $n = 3N \cdot 2m + N$, and thus the total number of cuts is at most $3N \cdot 2m + N$. Since the interior of $I$ contains at least $N$ cuts, and, for each $k \in [3N]$, the interior of $C^k$ contains at least $2m$ cuts (\cref{clm:C_k-at-least-2m-cuts}), it follows that the interior of $I$ contains exactly $N$ cuts, and, for each $k \in [3N]$, the interior of $C^k$ contains exactly $2m$ cuts. As a result, for each $k \in [3N]$, the left end of $C^k$ lies in $R^+$, because the number of cuts in $C^k$ is even (using the fact that without loss of generality the right end of $R$ lies in $R^+$). By \cref{clm:G-correct-circuit}, it follows that for each $k \in G$, $x^k \in \{-1,+1\}^{N \times 7N}$ and $y^k = \mlambda(x^k)$.

Now, consider the second case, namely that the interior of $I$ contains at most $N-1$ cuts. By \cref{clm:G-correct-circuit} we know that for all $k \in G$, $x^k \in \{-1,+1\}^{N \times 7N}$ and $y^k \in \{\mlambda(x^k), -\mlambda(-x^k)\}$. However, since the interior of $I$ contains at most $N-1$ cuts, it follows by \cref{clm:at-most-N-1-cuts-in-I} that $\mlambda(x^k) = -\mlambda(-x^k)$. Thus, for all $k \in G$, it holds that $y^k = \mlambda(x^k)$.
\end{proof}

\begin{claim}\label{clm:feedback-solution}
The set of points $\{\phi(x^k) : k \in G\}$ yields a solution to the \ndritucker instance $\lambda$.
\end{claim}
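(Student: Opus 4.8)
The plan is to read off the two defining properties of an \ndritucker solution — mutual $L_\infty$-closeness and coverage of all labels — from facts already in hand, and then invoke \cref{lem:r-to-N} to trim the candidate set down to exactly $N$ points. Throughout, the candidates are the points $\phi(x^k)$ for $k \in G$, which by \cref{clm:all-good} satisfy $x^k \in \{-1,+1\}^{N\times 7N}$ and $y^k = \mlambda(x^k) = \lambda(\phi(x^k)) \in \{-1,+1\}^N$ (the last equality by construction of $\mlambda$).

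First I would bound the number of stray cuts in the input region. There are $n = 3N\cdot 2m + N$ agents, hence at most that many cuts in total; by \cref{clm:C_k-at-least-2m-cuts} each of the $3N$ pairwise-disjoint regions $C^k$ already contains at least $2m$ cuts, so at most $N$ cuts lie outside $\bigcup_k C^k$, and in particular the interior of $I$ contains at most $N$ cuts. Since every $x^k$ with $k \in G$ is pure, \cref{clm:at-most-N-cuts-in-I} then yields $\|\phi(x^{k_1}) - \phi(x^{k_2})\|_\infty \leq 1$ for all $k_1,k_2 \in G$, so the candidate points are pairwise within $L_\infty$-distance $1$.

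Next I would show the candidates cover all labels, using the feedback agents together with $|G|\geq 2N$ (\cref{clm:G-at-least-2N}). Fix $i \in [N]$ and suppose no $k \in G$ has $[\lambda(\phi(x^k))]_i = -1$, so $y^k_i = +1$ for every $k\in G$. Bounding the remaining at most $3N - |G| \le N$ terms below by $-1$,
\[
\frac{1}{3N}\sum_{k=1}^{3N} y^k_i \;\ge\; \frac{|G| - (3N-|G|)}{3N} \;=\; \frac{2|G|-3N}{3N} \;\ge\; \frac{N}{3N} \;=\; \frac13 \;>\; \eps,
\]
contradicting \cref{clm:feedback-agents}; the symmetric argument handles the case where no $k\in G$ has $[\lambda(\phi(x^k))]_i = +1$. (Only $\eps < 1/3$ is needed for this step.) Hence $\{\phi(x^k) : k\in G\}$ covers all labels.

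Finally, since $|G| \ge 2N \ge N+1$ and the candidate labels lie in $\{-1,+1\}^N$, \cref{lem:r-to-N} extracts, in polynomial time, an $N$-subset that still covers all labels; by the closeness property of the previous paragraph this subset is pairwise $L_\infty$-close, so by \cref{def:ndritucker} it is a solution to $\lambda$. I would close by noting that the whole extraction is polynomial in $\sz(\lambda)$: the set $G$ is found by counting cuts in the polynomially many prescribed intervals, the bits $x^k$, the points $\phi(x^k)$, and their labels are computed via $\phi$ and evaluations of $\lambda$, and the trimming is polynomial by \cref{lem:r-to-N}. I do not anticipate a real obstacle here; the only thing needing care is keeping the two counting arguments straight — the one for cuts inside $I$, which yields closeness via \cref{clm:at-most-N-cuts-in-I}, versus the one for the number of bad circuit copies, which yields $|G|\ge 2N$ and hence label coverage — after which \cref{lem:r-to-N} finishes the job as a formality.
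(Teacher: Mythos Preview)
Your proposal is correct and follows essentially the same approach as the paper's proof: establish $L_\infty$-closeness via \cref{clm:at-most-N-cuts-in-I}, prove label coverage by contradiction using \cref{clm:feedback-agents} together with $|G|\ge 2N$, and then trim via \cref{lem:r-to-N}. The only difference is cosmetic --- you spell out explicitly why the interior of $I$ contains at most $N$ cuts (the paper leaves this implicit, relying on the counting already done inside the proof of \cref{clm:all-good}), and you add the harmless remark that only $\eps<1/3$ is used here.
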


\begin{proof}
By \cref{clm:at-most-N-cuts-in-I} and \cref{clm:all-good}, we know that for all $k \in G$, $x^k \in \{-1,+1\}^{N \times 7N}$ and $y^k = \mlambda(x^k) = \lambda(\phi(x^k))$. Furthermore, for all $k_1, k_2 \in G$, we have $\|\phi(x^{k_1}) - \phi(x^{k_2})\|_\infty \leq 1$. Thus, it remains to show that the points in $\{x^k : k \in G\}$ cover all the labels of $\mlambda$. Towards a contradiction, assume that this is not the case. Then, there exists $i \in [N]$ and $b \in \{-1,+1\}$ such that $y^k_i = b$ for all $k \in G$. But then, since $|G| \geq 2N$ (\cref{clm:G-at-least-2N}), and $|y^k_i| \leq 1$ for all $k \in [3N]$,
\[\left| \frac{1}{3N} \sum_{k=1}^{3N} y^k_i \right| \geq \frac{1}{3N} (|G| - (3N - |G|)) \geq 1/3 > \eps\]
which contradicts \cref{clm:feedback-agents}, namely, the feedback agent $\gamma_i$ cannot be satisfied in that case. It follows that the points in $\{x^k : k \in G\}$ do indeed cover all the labels of $\mlambda$. As a result, we can extract a solution to $\lambda$ from $\{\phi(x^k) : k \in G\}$ by using \cref{lem:r-to-N}.
\end{proof}

Finally, note that, given a solution $S$ of \CHinstance, we can in polynomial time compute $G$, then $\{\phi(x^k) : k \in G\}$, and finally use \cref{lem:r-to-N} to extract $N$ points that are a solution to $\lambda$. This completes the proof of correctness for the reduction.

\subsection{Extension to 3-Block Uniform Valuations}\label{sec:3-block}

The proof that we presented above can be modified to prove \cref{thm:main-result-3-block}, namely that the result holds even if we restrict the valuations to be 3-block uniform. Recall that an agent has a 3-block uniform valuation function if the density function of the valuation is non-zero in at most three intervals, and in each such interval it has the same non-zero value. \citet{FRHSZ20-consensus-easier} have proved that the problem remains \ppa/-complete even for 2-block uniform valuations, but their hardness result only holds for polynomially small $\eps$.

The following modifications to the proof of \cref{thm:main-result} are needed to obtain \cref{thm:main-result-3-block}:

\begin{itemize}
    \item \textbf{Number of copies:} Instead of $3N$ copies of the circuit, we use $20N$ copies of the circuit. In particular, every interval $I_{i,j}$ is now subdivided into intervals $I_{i,j}^1, \dots, I_{i,j}^{20N}$.
    
    \item \textbf{NOT-gates:} Auxiliary agents and NOT-gate agents already have 3-block uniform valuations. Thus, no change is needed there.
    
    \item \textbf{NAND-gates:} Unfortunately, NAND-gate agents are not 3-block uniform. To address this, we modify each NAND-gate agent as follows: the density function of the valuation has value $1/5$ in each of the three intervals $A_{t_1}$, $A_{t_2}$, and $C_{t,\ell}^k \cup C_{t,c}^k \cup C_{t,r}^k$, and value $0$ everywhere else. This valuation function is now 3-block uniform, but the value of the output of the gate is no longer encoded in the standard way. Indeed, letting $J_t$ denote the interval of length $\delta := (1/5-\eps)/2 > 0$ centered around the point $C_{t,\ell}^k \cap C_{t,c}^k$, we can prove an analogue of \cref{clm:NAND-gate}, where the output value is no longer stored in $C_{t,c}^k$, but in $J_t$ instead (i.e., $J_t \subseteq R^+$ means that the output is $+1$, and $J_t \subseteq R^-$ means that the output is $-1$). As a result, some additional modifications are needed to correctly read the output of such a gate.
    
    Without loss of generality, we can assume that in the circuit $\mlambda$, every NAND-gate is always followed by a NOT-gate, i.e., the output of a NAND-gate can only be used as an input to a NOT-gate (and, in particular, can also not be an output of the circuit). This can easily be ensured by introducing two consecutive NOT-gates wherever that is needed. With this in hand, for each NOT-gate $g_{t'}$ that takes as input the output of a NAND-gate $g_{t}$, we will modify the corresponding NOT-gate agent as follows: the density function of the valuation has a block of length $\delta$ and height $1/3\delta$ in $J_t$, and two more such blocks, one in each of $C_{t',\ell}^k$ and $C_{t',r}^k$ (anywhere inside those intervals). We can prove an analogue of \cref{clm:NOT-gate} to show that this agent behaves as a standard NOT-gate, except that the input is read from $J_t$ instead of $C_{t,c}^k$. See \cref{fig:3-block-NAND-gate} for an illustration of the modified NAND-gate, together with its corresponding modified NOT-gate.
    
    \item \textbf{Feedback agents:} The last remaining agents that do not have 3-block uniform valuations are the feedback agents $\gamma_1, \dots, \gamma_N$. To address this we modify the feedback mechanism as follows. We add an ``Output region'' $O$ between the input region $I$ and the circuit region $C$. Interval $O$ is subdivided into intervals $O_1, \dots, O_n$, and each $O_i$ is subdivided into intervals $O_i^1, \dots, O_i^{20N}$. Finally, each $O_i^k$ is subdivided into three intervals of length $1$: $O_{i,\ell}^k$, $O_{i,c}^k$, and $O_{i,r}^k$. For each $i \in [N]$ and $k \in [20N]$, using an additional NOT-gate based in $O_i^k$, we copy the value of the $i$th output of the $k$th copy of the circuit into the interval $O_{i,c}^k$. Finally, we define the feedback agent $\gamma_i$ as follows: the density function is uniform over $O_i$, i.e., it has value $1/60N$ over $O_i$, and value $0$ elsewhere.
    
    With this modified construction we can then prove an analogue of \cref{clm:feedback-solution}. The main observation is that if, say, $y_i^k = +1$ for all $k \in G$, then $\mu(O_i \cap R^+) \geq 2|G|$ and $\mu(O_i \cap R^-) \leq |G| + 3(20N-|G|)$. Using the fact that $|G| \geq 20N - N$, it follows that agent $\gamma_i$ is not satisfied, since
    \[\frac{1}{60N} (\mu(O_i \cap R^+) - \mu(O_i \cap R^-)) \geq \frac{1}{60N} (4|G| - 60N) \geq \frac{80N - 4N - 60N}{60N} = \frac{4}{15} > \eps.\]
    Note that here we crucially used the fact that we now have $20N$ copies instead of just $3N$.
\end{itemize}

\begin{figure}
	\centering
	\includegraphics[scale=0.915]{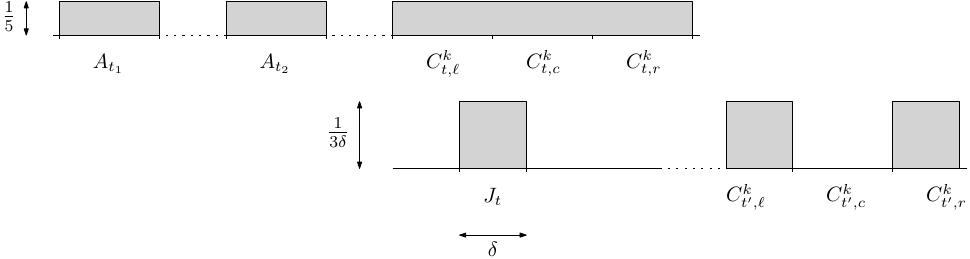}
	\caption{The valuation function of an agent implementing a modified NAND-gate (top), and the valuation function of an agent implementing a modified NOT-gate reading the output of the NAND-gate (bottom).}\label{fig:3-block-NAND-gate}
\end{figure}

\section{Conclusion}
So far, \eps-\conhalv has been the starting point for every \ppa/-hardness result for problems that do not include a circuit in their definition. We have resolved the complexity of the problem for constant approximations by showing that it is \ppa/-complete for every constant $\eps < 1/5$. We expect that this will be very useful to obtain further strong inapproximability results for \ppa/ problems.

There are several remaining questions related to \conhalv.

\begin{itemize}
    \item \textbf{Improve $\boldsymbol{\eps}$ beyond $\boldsymbol{1/5}$.} The current bottleneck of our technique is the NAND gate. We conjecture that the \eps we derive for a NAND gate implemented via a single agent, or even a constant number of agents, is optimal. Hence, we believe that a new technique would be needed to get \ppa/-hardness for a larger \eps.
    
    The NAND gate, and, in fact, any gadget taking two bits as input and having a non-trivial output bit (e.g., not just copying the first input bit), has to balance out two constraints. Let $I$ denote the subinterval of the consensus-halving interval that encodes the input(s) of the gate, and $O$ the subinterval that encodes the output. Then the two constraints are the following.
    \begin{enumerate}
        \item The agent encoding the gate has to have enough value in $I$ (with respect to $\eps$), so that ``what happens in I has some effect on the agent''.
        \item The agent encoding the gate has to have enough value in $O$ (with respect to $\eps$), so that there is necessarily a cut in $O$. This is to avoid extra stray cuts, which the current reduction framework (like all previous ones) cannot handle.
    \end{enumerate}
    Together with the fact that the agent's valuation has to be normalized to 1, these two constraints yield that $\eps < 1/5$ for a Boolean gate with two inputs, and $\eps < 1/3$ for a Boolean gate with one input.
    
    \item \textbf{Prove an upper bound.} So far, no algorithm is known for solving \eps-\conhalv (with $n$ cuts) for some constant $\eps < 1$, even for piecewise constant valuations with positive value on two intervals only. The only known upper bounds are for very special cases~\citep{DeligkasFMS22,FRHSZ20-consensus-easier} or with additional cuts \citep{AlonG21}.
    
    \item {\bf Constant number of agents.} For a constant number of agents with explicitly represented piecewise constant valuations (as modelled in this paper), \eps-\conhalv can be solved in polynomial time by a simple enumeration algorithm \citep{FRHSZ20-consensus-easier}. But what is the complexity of the problem if we are not given the whole valuations upfront, but instead can only efficiently evaluate them? In~\citep{DeligkasFH22-consensus-constant} it was proven that in this setting the problem is \ppa/-complete for 3 agents with {\em non-additive} valuations. It seems that proving hardness for the more standard additive valuation model would require radically new ideas.
    
    \item {\bf Necklaces with few beads per colour.} What is the computational complexity of \necklace with a constant number of beads per colour? Our hardness result, which directly uses the reduction presented by \citet{FRG18-Consensus}, constructs necklaces with polynomially-many beads for each colour. There appears to be no straightforward way to reduce this number to a constant. On the other hand, to the best of our knowledge, there is no efficient algorithm that solves \necklace when every colour appears at most four times.
\end{itemize}

\subsubsection*{Acknowledgements}

The first author was supported by EPSRC grant EP/X039862/1 ``NAfANE: New Approaches for Approximate Nash Equilibria''. The second author was supported by EPSRC grant EP/W014750/1 ``New Techniques for Resolving Boundary Problems in Total Search''. The work was completed while the fourth author was affiliated with the group of Operations Research at the Technical University of Munich. This author's work was supported by the Alexander von Humboldt Foundation with funds from the German Federal Ministry of Education and Research (BMBF).

\bibliographystyle{plainnat}
\bibliography{references}

\begin{thebibliography}{38}
\providecommand{\natexlab}[1]{#1}
\providecommand{\url}[1]{\texttt{#1}}
\expandafter\ifx\csname urlstyle\endcsname\relax
  \providecommand{\doi}[1]{doi: #1}\else
  \providecommand{\doi}{doi: \begingroup \urlstyle{rm}\Url}\fi

\bibitem[Aharoni et~al.(2017)Aharoni, Alon, Berger, Chudnovsky, Kotlar, Loebl,
  and Ziv]{aharoni2017fair}
Ron Aharoni, Noga Alon, Eli Berger, Maria Chudnovsky, Dani Kotlar, Martin
  Loebl, and Ran Ziv.
\newblock Fair representation by independent sets.
\newblock In \emph{A Journey Through Discrete Mathematics}, pages 31--58.
  Springer, 2017.
\newblock \doi{10.1007/978-3-319-44479-6_2}.

\bibitem[Aisenberg et~al.(2020)Aisenberg, Bonet, and
  Buss]{AisenbergBB20-2D-Tucker}
James Aisenberg, Maria~Luisa Bonet, and Sam Buss.
\newblock 2-{D} {T}ucker is {PPA} complete.
\newblock \emph{Journal of Computer and System Sciences}, 108:\penalty0
  92--103, 2020.
\newblock \doi{10.1016/j.jcss.2019.09.002}.

\bibitem[Alishahi and Meunier(2017)]{alishahi2017fair}
Meysam Alishahi and Fr{\'e}d{\'e}ric Meunier.
\newblock Fair splitting of colored paths.
\newblock \emph{The Electronic Journal of Combinatorics}, 24\penalty0
  (3):\penalty0 P3.41, 2017.
\newblock \doi{10.37236/7079}.

\bibitem[Alon(1987)]{Alon87-necklace}
Noga Alon.
\newblock Splitting necklaces.
\newblock \emph{Advances in Mathematics}, 63\penalty0 (3):\penalty0 247--253,
  1987.
\newblock \doi{10.1016/0001-8708(87)90055-7}.

\bibitem[Alon and Graur(2021)]{AlonG21}
Noga Alon and Andrei Graur.
\newblock Efficient splitting of necklaces.
\newblock In \emph{Proceedings of the 48th International Colloquium on
  Automata, Languages, and Programming (ICALP)}, pages 14:1--14:17, 2021.
\newblock \doi{10.4230/LIPIcs.ICALP.2021.14}.

\bibitem[Alon and West(1986)]{alon1986borsuk}
Noga Alon and Douglas~B. West.
\newblock {The Borsuk-Ulam Theorem and Bisection of Necklaces}.
\newblock \emph{Proceedings of the American Mathematical Society}, 98\penalty0
  (4):\penalty0 623--628, 1986.
\newblock \doi{10.2307/2045739}.

\bibitem[Barba et~al.(2019)Barba, Pilz, and Schnider]{BPS19}
Luis Barba, Alexander Pilz, and Patrick Schnider.
\newblock Sharing a pizza: bisecting masses with two cuts.
\newblock \emph{arXiv preprint}, abs/1904.02502, 2019.
\newblock URL \url{https://arxiv.org/abs/1904.02502}.

\bibitem[Batziou et~al.(2021)Batziou, Hansen, and
  H{\o}gh]{BatziouHH21-consensus-BBU}
Eleni Batziou, Kristoffer~Arnsfelt Hansen, and Kasper H{\o}gh.
\newblock Strong approximate {C}onsensus {H}alving and the {B}orsuk-{U}lam
  theorem.
\newblock In \emph{Proceedings of the 48th International Colloquium on
  Automata, Languages, and Programming (ICALP)}, pages 24:1--24:20, 2021.
\newblock \doi{10.4230/LIPIcs.ICALP.2021.24}.

\bibitem[Black et~al.(2020)Black, Cetin, Frick, Pacun, and
  Setiabrata]{black2020fair}
Alexander Black, Umur Cetin, Florian Frick, Alexander Pacun, and Linus
  Setiabrata.
\newblock Fair splittings by independent sets in sparse graphs.
\newblock \emph{Israel Journal of Mathematics}, 236:\penalty0 603--627, 2020.
\newblock \doi{10.1007/s11856-020-1980-5}.

\bibitem[Chen et~al.(2009)Chen, Deng, and Teng]{ChenDT09-Nash}
Xi~Chen, Xiaotie Deng, and Shang-Hua Teng.
\newblock Settling the complexity of computing two-player {N}ash equilibria.
\newblock \emph{Journal of the ACM}, 56\penalty0 (3):\penalty0 14:1--14:57,
  2009.
\newblock \doi{10.1145/1516512.1516516}.

\bibitem[Daskalakis et~al.(2009)Daskalakis, Goldberg, and
  Papadimitriou]{DaskalakisGP09-Nash}
Constantinos Daskalakis, Paul~W. Goldberg, and Christos~H. Papadimitriou.
\newblock The complexity of computing a {N}ash equilibrium.
\newblock \emph{SIAM Journal on Computing}, 39\penalty0 (1):\penalty0 195--259,
  2009.
\newblock \doi{10.1137/070699652}.

\bibitem[Deligkas et~al.(2021)Deligkas, Fearnley, Melissourgos, and
  Spirakis]{deligkas2021BU}
Argyrios Deligkas, John Fearnley, Themistoklis Melissourgos, and Paul~G.
  Spirakis.
\newblock Computing exact solutions of consensus halving and the
  {Borsuk}-{Ulam} theorem.
\newblock \emph{Journal of Computer and System Sciences}, 117:\penalty0 75--98,
  2021.
\newblock \doi{10.1016/j.jcss.2020.10.006}.

\bibitem[Deligkas et~al.(2022{\natexlab{a}})Deligkas, Fearnley, and
  Melissourgos]{DeligkasFM22-pizza}
Argyrios Deligkas, John Fearnley, and Themistoklis Melissourgos.
\newblock Pizza sharing is {PPA}-hard.
\newblock In \emph{Proceedings of the 36th AAAI Conference on Artificial
  Intelligence (AAAI)}, pages 4957--4965, 2022{\natexlab{a}}.
\newblock \doi{10.1609/aaai.v36i5.20426}.

\bibitem[Deligkas et~al.(2022{\natexlab{b}})Deligkas, Fearnley, Melissourgos,
  and Spirakis]{DeligkasFMS22}
Argyrios Deligkas, John Fearnley, Themistoklis Melissourgos, and Paul~G.
  Spirakis.
\newblock Approximating the existential theory of the reals.
\newblock \emph{Journal of Computer and System Sciences}, 125:\penalty0
  106--128, 2022{\natexlab{b}}.
\newblock \doi{10.1016/j.jcss.2021.11.002}.

\bibitem[Deligkas et~al.(2022{\natexlab{c}})Deligkas, Filos-Ratsikas, and
  Hollender]{DeligkasFH22-consensus-constant}
Argyrios Deligkas, Aris Filos-Ratsikas, and Alexandros Hollender.
\newblock Two's company, three's a crowd: Consensus-halving for a constant
  number of agents.
\newblock \emph{Artificial Intelligence}, 313, 2022{\natexlab{c}}.
\newblock \doi{10.1016/j.artint.2022.103784}.
\newblock 103784.

\bibitem[Deligkas et~al.(2024)Deligkas, Fearnley, Hollender, and
  Melissourgos]{DeligkasFHM24-pure-circuit}
Argyrios Deligkas, John Fearnley, Alexandros Hollender, and Themistoklis
  Melissourgos.
\newblock Pure-circuit: Tight inapproximability for {PPAD}.
\newblock \emph{Journal of the ACM}, 71\penalty0 (5):\penalty0 31:1--31:48,
  2024.
\newblock \doi{10.1145/3678166}.

\bibitem[Deng et~al.(2017)Deng, Feng, and Kulkarni]{dengoctahedral}
Xiaotie Deng, Zhe Feng, and Rucha Kulkarni.
\newblock Octahedral {T}ucker is {PPA}-complete.
\newblock Technical Report TR17-118, Electronic Colloquium on Computational
  Complexity (ECCC), 2017.
\newblock URL \url{https://eccc.weizmann.ac.il/report/2017/118/}.

\bibitem[Etessami and Yannakakis(2010)]{EtessamiY10-FIXP}
Kousha Etessami and Mihalis Yannakakis.
\newblock On the complexity of {N}ash equilibria and other fixed points.
\newblock \emph{SIAM Journal on Computing}, 39\penalty0 (6):\penalty0
  2531--2597, 2010.
\newblock \doi{10.1137/080720826}.

\bibitem[Filos-Ratsikas and Goldberg(2018)]{FRG18-Consensus}
Aris Filos-Ratsikas and Paul~W. Goldberg.
\newblock Consensus halving is {PPA}-complete.
\newblock In \emph{Proceedings of the 50th ACM Symposium on Theory of Computing
  (STOC)}, pages 51--64, 2018.
\newblock \doi{10.1145/3188745.3188880}.

\bibitem[Filos-Ratsikas and Goldberg(2019)]{FRG19-Necklace}
Aris Filos-Ratsikas and Paul~W. Goldberg.
\newblock The complexity of splitting necklaces and bisecting ham sandwiches.
\newblock In \emph{Proceedings of the 51st ACM Symposium on Theory of Computing
  (STOC)}, pages 638--649, 2019.
\newblock \doi{10.1145/3313276.3316334}.

\bibitem[Filos-Ratsikas and Goldberg(2023)]{FRG22-NS-CH-ham}
Aris Filos-Ratsikas and Paul~W. Goldberg.
\newblock The complexity of {Necklace Splitting}, {Consensus-Halving}, and
  {Discrete Ham Sandwich}.
\newblock \emph{SIAM Journal on Computing}, 52\penalty0 (2):\penalty0
  STOC19--200--STOC19--268, 2023.
\newblock \doi{10.1137/20M1312678}.

\bibitem[Filos-Ratsikas et~al.(2018)Filos-Ratsikas, Frederiksen, Goldberg, and
  Zhang]{FRFGZ18-consensus-hardness}
Aris Filos-Ratsikas, S{\o}ren Kristoffer~Still Frederiksen, Paul~W. Goldberg,
  and Jie Zhang.
\newblock Hardness results for {C}onsensus-{H}alving.
\newblock In \emph{Proceedings of the 43rd International Symposium on
  Mathematical Foundations of Computer Science (MFCS)}, pages 24:1--24:16,
  2018.
\newblock \doi{10.4230/LIPIcs.MFCS.2018.24}.

\bibitem[Filos-Ratsikas et~al.(2021)Filos-Ratsikas, Hollender, Sotiraki, and
  Zampetakis]{FRHSZ21-necklace}
Aris Filos-Ratsikas, Alexandros Hollender, Katerina Sotiraki, and Manolis
  Zampetakis.
\newblock A topological characterization of modulo-$p$ arguments and
  implications for necklace splitting.
\newblock In \emph{Proceedings of the 32nd ACM-SIAM Symposium on Discrete
  Algorithms (SODA)}, pages 2615--2634, 2021.
\newblock \doi{10.1137/1.9781611976465.155}.

\bibitem[Filos-Ratsikas et~al.(2023)Filos-Ratsikas, Hollender, Sotiraki, and
  Zampetakis]{FRHSZ20-consensus-easier}
Aris Filos-Ratsikas, Alexandros Hollender, Katerina Sotiraki, and Manolis
  Zampetakis.
\newblock {Consensus-Halving}: Does it ever get easier?
\newblock \emph{SIAM Journal on Computing}, 52\penalty0 (2):\penalty0 412--451,
  2023.
\newblock \doi{10.1137/20m1387493}.

\bibitem[Goldberg and West(1985)]{goldberg1985bisection}
Charles~H. Goldberg and Douglas~B. West.
\newblock {Bisection of Circle Colorings}.
\newblock \emph{SIAM Journal on Algebraic Discrete Methods}, 6\penalty0
  (1):\penalty0 93--106, 1985.
\newblock \doi{10.1137/0606010}.

\bibitem[Goldberg et~al.(2022)Goldberg, Hollender, Igarashi, Manurangsi, and
  Suksompong]{GoldbergHIMS22-consensus-items}
Paul~W. Goldberg, Alexandros Hollender, Ayumi Igarashi, Pasin Manurangsi, and
  Warut Suksompong.
\newblock Consensus halving for sets of items.
\newblock \emph{Mathematics of Operations Research}, 47\penalty0 (4):\penalty0
  3357--3379, 2022.
\newblock \doi{10.1287/moor.2021.1249}.

\bibitem[Haviv(2021)]{haviv2020complexity}
Ishay Haviv.
\newblock The complexity of finding fair independent sets in cycles.
\newblock In \emph{Proceedings of the 12th Innovations in Theoretical Computer
  Science Conference (ITCS)}, pages 4:1--4:14, 2021.
\newblock \doi{10.4230/LIPIcs.ITCS.2021.4}.

\bibitem[Haviv(2022)]{Haviv2022-fair-cycles}
Ishay Haviv.
\newblock The complexity of finding fair independent sets in cycles.
\newblock \emph{computational complexity}, 31\penalty0 (2), 2022.
\newblock \doi{10.1007/s00037-022-00233-6}.

\bibitem[Hobby and Rice(1965)]{hobby1965moment}
Charles~R. Hobby and John~R. Rice.
\newblock {A moment problem in L1 approximation}.
\newblock \emph{Proceedings of the American Mathematical Society}, 16\penalty0
  (4):\penalty0 665--670, 1965.
\newblock \doi{10.2307/2033900}.

\bibitem[Hollender(2021)]{Hollender21-ppa-k}
Alexandros Hollender.
\newblock The classes {PPA}-$k$: Existence from arguments modulo $k$.
\newblock \emph{Theoretical Computer Science}, 885:\penalty0 15--29, 2021.
\newblock \doi{10.1016/j.tcs.2021.06.016}.

\bibitem[Hubard and Karasev(2020)]{HK20}
Alfredo Hubard and Roman Karasev.
\newblock Bisecting measures with hyperplane arrangements.
\newblock \emph{Mathematical Proceedings of the Cambridge Philosophical
  Society}, 169\penalty0 (3):\penalty0 639--647, 2020.
\newblock \doi{10.1017/S0305004119000380}.

\bibitem[Karasev et~al.(2016)Karasev, Rold{\'a}n-Pensado, and
  Sober{\'o}n]{KRS16}
Roman~N. Karasev, Edgardo Rold{\'a}n-Pensado, and Pablo Sober{\'o}n.
\newblock Measure partitions using hyperplanes with fixed directions.
\newblock \emph{Israel Journal of Mathematics}, 212:\penalty0 705--728, 2016.
\newblock \doi{10.1007/s11856-016-1303-z}.

\bibitem[Megiddo and Papadimitriou(1991)]{MegiddoP91-TFNP}
Nimrod Megiddo and Christos~H. Papadimitriou.
\newblock On total functions, existence theorems and computational complexity.
\newblock \emph{Theoretical Computer Science}, 81\penalty0 (2):\penalty0
  317--324, 1991.
\newblock \doi{10.1016/0304-3975(91)90200-L}.

\bibitem[Papadimitriou(1994)]{Papadimitriou94-TFNP-subclasses}
Christos~H. Papadimitriou.
\newblock On the complexity of the parity argument and other inefficient proofs
  of existence.
\newblock \emph{Journal of Computer and System Sciences}, 48\penalty0
  (3):\penalty0 498--532, 1994.
\newblock \doi{10.1016/S0022-0000(05)80063-7}.

\bibitem[Rubinstein(2018)]{Rubinstein18-Nash-inapproximability}
Aviad Rubinstein.
\newblock Inapproximability of {N}ash equilibrium.
\newblock \emph{SIAM Journal on Computing}, 47\penalty0 (3):\penalty0 917--959,
  2018.
\newblock \doi{10.1137/15M1039274}.

\bibitem[Schnider(2022)]{schnider2021complexity}
Patrick Schnider.
\newblock The complexity of sharing a pizza.
\newblock \emph{Computing in Geometry and Topology}, 1\penalty0 (1):\penalty0
  4:1--4:19, 2022.
\newblock \doi{10.57717/CGT.V1I1.5}.

\bibitem[Simmons and Su(2003)]{SS03-Consensus}
Forest~W. Simmons and Francis~E. Su.
\newblock Consensus-halving via theorems of {B}orsuk-{U}lam and {T}ucker.
\newblock \emph{Mathematical social sciences}, 45\penalty0 (1):\penalty0
  15--25, 2003.
\newblock \doi{10.1016/S0165-4896(02)00087-2}.

\bibitem[Tucker(1945)]{tucker1945some}
Albert~W. Tucker.
\newblock Some topological properties of disk and sphere.
\newblock In \emph{Proceedings of the First Canadian Math. Congress, Montreal},
  pages 286--309. University of Toronto Press, 1945.

\end{thebibliography}

\appendix

\section{Additional Cuts}
\label{sec:app:extra-cuts}

In this section, we prove \cref{thm:extra_cuts}, which we restate here for convenience.

\extracuts*

\begin{proof}
We use the same approach used to prove Corollary~3.2 in \citep{FRHSZ20-consensus-easier}. Let $\eps < 1/5$ and $\delta \in (0,1]$. Consider an instance CH of $\eps$-\conhalv where all agents have 3-block uniform valuations and let $n$ denote the number of agents. By \cref{thm:main-result-3-block} we know that finding a solution in such instances using at most $n$ cuts is \ppa/-hard. We show how to reduce this to the problem of finding a solution in an instance with $N$ agents (who all still have 3-block uniform valuations), but using up to $N + N^{1-\delta}$ cuts.

We construct an instance CH$'$ that is composed of $c$ completely disjoint copies of the instance CH. This means that the instance is defined on the interval $[0,c]$, with the $i$th copy living in subinterval $[i-1,i]$. (It is easy to renormalize the instance CH$'$ to be defined on interval $[0,1]$ without affecting any of the arguments.) The number of copies $c$ will be fixed below, but importantly it must be polynomial in the size of the original instance CH. The number of agents in the new instance is $N = c \cdot n$.

Consider any solution of the new instance CH$'$ that uses at most $N + (c-1) = c \cdot n + (c-1)$ cuts. By the pigeonhole principle, at least one of the copies will be cut by at most $n$ cuts. Since the copies are disjoint, this will yield a solution to the original instance CH.

It remains to show that we can pick a number of copies $c$ that is polynomial in the size of the original instance CH, while also ensuring that $N + (c-1) \geq N + N^{1-\delta}$. Letting $c := n^k$ and recalling that $N = c \cdot n$, this inequality can be rewritten as $n^k-1 \geq n^{(k+1)(1-\delta)}$. Now since $\delta > 0$, there exists a sufficiently large $k$ (depending only on $\delta$) such that the inequality is satisfied for sufficiently large $n$.
\end{proof}

\end{document}